\newif\ifLONG
\LONGtrue

\documentclass{article}
\usepackage{tikz}
\usepackage{subcaption}
\usepackage{amsfonts}
\usepackage{amssymb}
\usepackage{graphicx}
\usepackage{amsmath}
\usepackage{amsthm}
\usepackage{pifont}
\usepackage{xcolor}
\usepackage{enumitem}
\usepackage{mathtools}
\usepackage[normalem]{ulem}

\usepackage{tikz}
\usepackage{pgfplots}
\pgfplotsset{compat=1.18}

\usepackage[colorlinks=true,linkcolor=blue,citecolor= red,bookmarksopen=true,urlcolor=blue]{hyperref}
\nonstopmode
\newtheorem{theorem}{Theorem}[section]
\newtheorem{proposition}[theorem]{Proposition}
\newtheorem{lemma}[theorem]{Lemma}
\newtheorem*{lemma*}{Lemma}

\newtheorem{corollary}[theorem]{Corollary}

\newtheorem{definition}[theorem]{Definition}
\newtheorem{notation}[theorem]{Notation}

\newtheorem{assumption}[theorem]{Assumption}

\theoremstyle{remark}

\newtheorem{remark}[theorem]{Remark}
\theoremstyle{plain} 
\newtheorem*{standass}{Standing Assumptions}
\newtheorem*{standingassumptionII*}{Standing Assumption II}
\newtheorem*{stupa}{Setup A}
\newtheorem*{stupb}{Setup B}

\newcommand{\probp}{{P}}
\newcommand{\probq}{{Q}}
\newcommand{\R}{{\mathbb R}}

\newcommand{\norm}[1]{\left\lVert#1\right\rVert}
\newcommand\abs[1]{\left|#1\right|}

\newcommand\Epo[1]{{E}_{P^i} \left[#1\right]}

\newcommand{\rn}[2]{\frac{\mathrm{d}#1}{\mathrm{d}#2}}

\newcommand{\fit}{{\mathcal{F}^{i}_t}}

\newcommand{\fib}{{\mathbb{F}^{i}}}

\newcommand{\NCA}{\mathbf{NCA}(\mathcal{Y})} 
\newcommand{\NCFL}{\mathbf{NCFL}(\mathcal{Y})}

\renewcommand{\emph}[1]{\textit{#1}}
\newcommand{\red}[1]{\textcolor{red}{#1}}

\newcommand{\blue}[1]{{\color{blue}#1}}

\newcommand{\mcY}{\mathcal {Y}}
\newcommand{\mcF}{\mathcal {F}}

\newcommand{\Me}{\mathcal M_e(\mcY)}

\newcommand{\Pp}{\mathbb{P}}

\usepackage{graphicx} 

\title{When cooperation is beneficial to all agents}
\author{
Alessandro Doldi\thanks{Dipartimento di Matematica, Università degli Studi di Milano, Via Saldini 50, 20133 Milano, Italy,
\emph{alessandro.doldi@unimi.it}.}, $\quad$ Marco Frittelli\thanks{Dipartimento di Matematica, Università degli Studi di Milano, Via Saldini 50, 20133 Milano, Italy,
\emph{marco.frittelli@unimi.it}.  }, $\quad$
Marco Maggis\thanks{Dipartimento di Matematica, Università degli Studi di Milano, Via Saldini 50, 20133 Milano, Italy,
\emph{marco.maggis@unimi.it}.}
}

\date{\today}

\begin{document}
\maketitle
\begin{abstract}

    \noindent Within a general semimartingale framework, we study the relationship between collective market efficiency and individual rationality. We derive a necessary and sufficient condition for the existence of (possibly zero-sum) exchanges among agents that strictly increase their indirect utilities and characterize this condition in terms of the compatibility between agents’ preferences and collective pricing measures. The framework applies to both continuous- and discrete-time models and clarifies when cooperation leads to a strict improvement in each participating agent’s indirect utility.

\end{abstract}

\section{Introduction}
\label{iNTRO}

Classical asset–pricing theory is largely framed around the actions of a single agent operating in a frictionless market. Since the seminal works of Kreps \cite{Kreps81}, Harrison--Kreps \cite{HarrisonKreps79} and Harrison--Pliska \cite{HarrisonPliska81}, market efficiency and pricing have been characterized by individual no–arbitrage and martingale measures. Modern financial environments increasingly exhibit features where cooperation, risk-sharing, segmentation (by assets, frictions, or information) and collective behavior redefine the boundaries of what is considered an \textit{arbitrage} or a \textit{replicable claim}. This paper advances the theory of \textit{Collective Finance}, as developed in \cite{BDFFM25}, \cite{DFM25} and \cite{F25}. In this framework, a market might appear arbitrage-free to any individual participant,  yet feature opportunities that arise only through coordinated \emph{exchanges across agents}.

Consider $N$ agents, $i=1,\dots,N,$ each one allowed to invest in some given assets, whose price is represented by a $d_i$-dimensional stochastic process $S^i$. The starting point is the notion of \emph{Collective Arbitrage (CA)}: a zero–sum \emph {risk exchange} among $N$ agents, superposed to their individual trading opportunities, that yields nonnegative terminal payoffs for all agents and a strictly positive payoff on a set of positive probability for at least one of them. The set of allowed risk exchanges among the agents is denoted by $\mcY$ and each vector $Y=(Y^1,\dots,Y^N) \in \mcY $
of random variables represents capital transfers among agents with the aggregate transfer constrained to be zero: $\sum_{i=1}^N Y^i=0$, a.s.. The negation of CA defines \emph{No Collective Arbitrage ($\NCA)$}, a market viability requirement that strictly extends classical no–arbitrage whenever agents face heterogeneous trading sets or information and when risk exchanges are allowed. 
Specifically, it was shown in \cite{BDFFM25} that in discrete time the classical Fundamental Theorem of Asset Pricing (FTAP) must be revisited to account for these interactions: 
$\NCA$ is equivalent to the existence of an \emph{equivalent collective martingale measure} i.e., agent–specific pricing measures $(Q^1,\dots,Q^N)$, each one pricing each agent’s traded assets, that all together render admissible exchanges of at most zero aggregate value: 
\begin{equation}\label{polarity22}
  \sum_{i=1}^N E_{Q^i}[Y^i] \leq 0 \text{ for all } Y \in \mcY,  
\end{equation}
see \eqref{MartingaleLocalMeasures} for details.

Similarly to the classical one-agent case, in the continuous time setting with $N$ agents, the no collective arbitrage assumption is not any more sufficient to guarantee the existence of a collective martingale measure and one has to reinforce the no collective arbitrage condition with the appropriate notion of no collective free lunch ($\NCFL$), see \cite{F25}. It was indeed shown in a general semimartingale setting that $\NCFL$ is equivalent to existence of an \emph{equivalent collective separating measure}, namely a vector of separating measures satisfying the same \textit{polarity} condition \eqref{polarity22}. As is well known, depending on the assumptions imposed on the price process $S$, separating measures take the form of martingale measures, local martingale measures, or 
sigma martingale measures. In this introduction, let us denote, in both discrete time and continuous time, by $\mathcal M(\mcY)$ (resp. $\Me$) the set of such vectors $\mathbf Q=(Q^1,\dots,Q^N)$ of separating probability measures satisfying condition \eqref{polarity22} and such that $Q^i \ll P^i$ (resp $Q^i \sim P^i$  ) for each $i$. \\

A fundamental principle, in the aforementioned reference \cite{BDFFM25}, is that $N$ agents, in addition to trading within their individual markets, can enhance their portfolios through collaborative risk exchanges.
A central contribution of the present paper is the exploration of the relationship between \emph{collective market efficiency and individual rationality}. We formalize an exchange $Y\in \mcY$ as \emph{beneficial} if, given agents’ (concave, increasing) utility functions and subjective probability measures, 
it strictly improves every agent’s indirect utility,  see Definition \ref{defyoptbis} and Remark \ref{remlungo}, Item b).
If, in discrete time, a collective arbitrage exists or if, in continuous time, a collective free lunch exists, then, as intuition also suggests, beneficial exchanges can be found, see Section \ref{BenefitsfromCFL1}.

When a discrete time market is collective-arbitrage-free, (respectively a continuous time model has no collective free lunch), we encounter a more nuanced landscape: a beneficial exchange might or might not exist depending on the specific risk-aversion profiles of the agents in relation to the market structure. 
 This relation between the preferences of the agents and the financial market is mathematically described by the condition that the vector $\mathbf Q_X=(Q_{X^1}^1, \dots,Q_{X^N}^N)$ of \emph{minimax} measures, which are martingale/separating measures, see the Proposition \ref{corgatbis} (and \cite{BelliniFrittelli02} for details), must fulfill the at most zero aggregate price condition $\sum_{i=1}^N E_{Q_{X^i}^i}[Y^i] \leq 0$ that identify, as in \eqref{polarity22}, collective martingale/separating measures. As proven in Theorem \ref{corben} and its Corollary \ref{corbenbis},
 \begin{equation}\label{beneficialexists}
    \text{\emph{beneficial exchanges exist if and only if} }\mathbf Q_X\notin\mathcal M(\mcY)
 \end{equation}
This equivalence pinpoints a precise \emph{compatibility} condition between market properties (as described by $\mathcal M(\mcY)$) and the preferences of the trading participants (determining the minimax measures $\mathbf Q_X$), and clarifies when cooperation strictly improves each agents indirect utility under $\NCA$ or $\NCFL$: unless the preferences are specifically aligned, the condition $\mathbf Q_X\in \mathcal M(\mcY)$ will not hold and thus cooperation will produce exchanges $Y \in \mcY$ that are beneficial for all the agents.

A summary of our findings regarding $\NCA$, $\NCFL$ and existence of beneficial exchanges is provided in the Table \ref{tab:implications} on page \pageref{tab:implications}.

To conclude our Introduction, we mention that there is a whole stream of literature on cooperation in multi-agent systems. Although not immediately comparable, our paper shares some significant conceptual and philosophical points with such literature. Both our framework and the multi-agent systems literature emphasize that cooperation should not be viewed as a deviation from individual rationality, but rather as a mechanism through which agents can jointly achieve outcomes unattainable in isolation. In particular, while our results characterize the existence of beneficial exchanges through a precise compatibility condition between agents’ preferences and collective pricing measures, the multi-agent perspective interprets cooperation as coordinated action that advances either shared or mutually supportive goals among agents.
We mention the paper \cite{DORAN_FRANKLIN_JENNINGS_NORMAN_1997} for a discussion on the multi-agent cooperation. Among foundational contributions in multi-agent systems we mention,  for instance, \cite{Bratman1992} formalizing cooperation as a form of shared intentional activity grounded in joint commitments, \cite{Jennings1993} highlighting the role of commitments and conventions as the basis for coordination among autonomous agents. \cite{DurfeeLesser1988} and \cite{DeckerLesser1995} show how partial global planning and coordination algorithms enable agents to align their actions, and \cite{Steels1990} demonstrates that cooperation may also emerge spontaneously through self-organization, without explicit design. 
In both settings, cooperation arises from the interplay between individual objectives and interaction structures, suggesting a common conceptual foundation in which appropriately aligned incentives, beliefs, or mechanisms transform decentralized behavior into collectively beneficial outcomes.\\

\paragraph{Structure of the paper}

Section \ref{secmainres} presents the formal assumptions, the segmented market model, and the statement of the main theorem, which characterizes the existence of beneficial exchanges in terms of a polarity condition involving the vector of individual agents’ minimax measures.
Section \ref{secharbenefits} develops the theoretical framework and provides the proof of the main result.
Section \ref{sction:theutilmax} reviews the necessary results on the utility maximization problem for a single agent with random endowment in a general semimartingale setting, covering both discrete and continuous time cases.
Section \ref{Benefits} analyzes the emergence of beneficial exchanges in multi-agent markets.
Section \ref{secsettingdiscrete} considers the discrete time framework and related arbitrage concepts, showing that Collective Arbitrages lead to beneficial exchanges. Two alternative proofs are provided: one is based on Theorem \ref{corben} and another provides a more explicit construction.
Section \ref{seccontinuous} focuses instead on the continuous time framework and its implications, showing that Collective Free Lunches lead to beneficial exchanges. As in Section \ref{secsettingdiscrete}, this is first established by directly applying the results from Section \ref{secmainres}, and then complemented by a more constructive proof.
Examples are presented in Section \ref{secexamplesdiscr} (discrete time) and Section \ref{secexamplescont} (continuous time).

\subsection{Assumptions and statement of the main result}
\label{secmainres}

\textbf{The segmented security market.}
We let $(\Omega,\mathcal F,\Pp)$ be a complete probability space and
use the abbreviation a.s.\ to denote $\Pp$-almost sure statements.

We consider $N$ agents $i=1,\dots,N$, $N$ probability measures $P^i$ defined on $(\Omega,\mathcal F)$ with $P^i \sim \Pp$, $N$ $\sigma$-algebras $\mathcal F^i:=\mathcal F^i_T \subseteq \mathcal F$, representing the information at disposal of agent $i$ at some future time $T$, and $N$ vector spaces $$L_i \subseteq L^0(\Omega,\mathcal F^i,\probp^i)=L^0(\Omega,\mathcal F^i,\Pp).$$ Typical examples will be $L_i=L^p(\Omega,\mathcal F^i,\probp^i)$, $p \in (1,\infty]$, but at this stage, we do not need to specify $ L_i$ in
detail; a precise description will be provided in the Assumption \ref{assOK}.
For each $i=1,\dots,N$, we define the \emph{market of agent $i$} as the the convex cone
\[
\mathcal K_i \subseteq L^0(\Omega,\mathcal F^i,\probp^i).
\]

It typically consists of all time-$T$ payoffs that agent~$i$ can generate \emph{at zero initial cost} by trading admissibly in the assets available to him. The price processes of these assets are described by an $\mathbb{R}^{d_i}$-valued general stochastic process $S^i$  defined on a filtered stochastic basis $(\Omega,\mathcal F^i,\mathbb F^i=\{\mathcal F^i_t\}_{t \in \mathcal T},\Pp)$ with $\mathcal T \subseteq [0,T]$, $\mathcal F^i=\mathcal F ^i_T$ and $\mathcal F^i_0$ the trivial sigma-algebra (see Section  \ref{sction:theutilmax} 
).

The $N$ agents~$i$, the domains~$L_i$ and the markets~$\mathcal K_i$ together constitute a \emph{segmented market}. For a detailed motivation and a comprehensive discussion of such segmented market framework, we refer the reader to \cite{Carassus23,DFM25,BDFFM25}.\\

\textbf{The set of allowed exchanges.}
A central premise established in \cite{BDFFM25} is that the group of N agents can optimize their financial positions not only through individual market trading but also by engaging in collaborative risk exchanges.
Let $\mathcal{Y}_0$ denote the set of all zero-sum risk exchanges, defined as
\begin{equation*} \label{2345}
\mathcal{Y}_0 = \left\{ Y=(Y^1,\dots,Y^N)  \mid Y^i \in L^0(\Omega, \mathcal{F}^i, P^i), \,   \sum_{i=1}^N Y^i = 0 \, \text{ a.s.} \right\}
\end{equation*}
and its subset of deterministic exchanges
\begin{equation*} \label{12345}
\R^N_0 = \left\{ x=(x^1,\dots,x^N) \in \R^N  \mid   \sum_{i=1}^N x^i = 0 \, \right\}.
\end{equation*}

While the sum of the components of any $Y \in \mathcal{Y}_0$ is almost surely zero under $\Pp$, the individual components $Y^i$ are, in general, random variables.  A positive realization of $Y^i$ on a particular event signifies a capital inflow for agent $i$ from the collective, while a negative value indicates a capital outflow.  Thus, $Y \in \mathcal{Y}_0$ characterizes the potential capital reallocations among the agents, subject to the constraint of zero net transfer. An example of allowed exchanges is a subset $\mathcal{Y}$ of $\mathcal{Y}_0$, but many other choices are possible. \\

\textbf{Beneficial exchanges.}
We assume that the preferences of the $N$ agents are represented by expected utility, with heterogeneous (concave increasing) utility functions $u^i$ and probability measures $P^i \sim \Pp$, and we suppose that the $N$ agents are expected utility maximizers from investing in their own market $\mathcal K_i$.  Thus for each $i=1,\dots,N $ the indirect utility of agent $i$ is denoted by 
\begin{equation}\label{primal1}
U^i(X^i;Y^i):=\sup_{k^i\in \mathcal{K}_i}E_{P^i}\left[ u^i\left( X^i+k^i+Y^i\right) \right],
\end{equation}%
where $X^i \in L_i$ is the endowment of agent $i$ and $ (Y^i,\dots,Y^N) \in \mcY \subseteq L_1 \times \dots \times L_N$ is an allowed exchange.

\begin{definition}
\label{defyoptbis}
For $(X^1,\dots,X^N)=X \in L_1 \times \dots \times L_N$ and $\widehat{Y} \in  \mcY$ consider, for each $i=1,\dots,N$ the inequality
     \begin{equation} \label{eqIRBis}
        U^i(X^i;\widehat{Y}^i)  \geq U^i(X^i;0).
    \end{equation}
    Given $X$, we say that $\widehat{Y} \in  \mcY$  is \textbf{beneficial} if \eqref{eqIRBis} holds true for all $i=1,\dots,N$, and if for at least one $i \in \{1,\dots,N\}$ the inequality in \eqref{eqIRBis} is strict. We say that $\widehat{Y} \in  \mcY$  is \textbf{strictly beneficial} if the inequality in \eqref{eqIRBis} holds strictly for all $i$.
    \end{definition}
    
When $\widehat{Y}$ is beneficial, the proposed exchanges $\widehat{Y}$ are individually rational \textit{for all agents}. Moreover, every agent's indirect utility, from participating in the exchange is at least as great as their indirect utility would be without it and at least one agent achieves a strict increase in their utility.\\

\begin{remark}
    If we additionally assume that $\mcY\subset \mcY_0$ then the notion of beneficial exchanges coincides with that of Pareto improvements, since the market-clearing condition holds. Nevertheless, we prefer to retain this terminology to emphasize the generality of the set $\mcY$. 
\end{remark}

\textit{The main purpose of this paper is to characterize the existence of beneficial exchanges within the abstract general framework specified in the Standing Assumption and Assumption} \ref{assOK}, \textit{and to interpret this characterization in both discrete and continuous-time security market models.}\\

The Standing Assumptions below are in force throughout the paper.\\

\begin{standass}
\label{assOK2}
We assume the following conditions hold without further mention.
\begin{enumerate}
\item Let $(\Omega, \mathcal{F}, \Pp)$ be a complete probability space,  $P^1, \dots , P^N$, be probability measures, with $P^i \sim \Pp $, and $\mathcal F^1_T, \dots, \mathcal F^N_T $ be $\sigma$-algebras with  $\mathcal F^i:=\mathcal F^i_T \subseteq \mathcal F$, for all $i=1,\dots,N$.

 \item For each $i=1,\dots,N$,
$u^i:\R\rightarrow \R$ is concave, increasing, differentiable on $\R$ and satisfies the Inada conditions. Moreover, $u^i$ is strictly concave on \((-\infty,a)\), where $a:=\inf \{x \in \R : u^i(x)=u^i(+\infty) \} \leq +\infty$.
    \item Let $L_i \subseteq L^0(\Omega,\mathcal F^i,\probp^i)=L^0(\Omega,\mathcal F^i,\Pp)$ be vector subspaces for $i=1,\dots,N$.
    \item 
    The set of exchanges $\mcY$ is a convex cone contained in $L_1 \times \dots \times L_N$.
    \item The random endowments  $X=(X^1,\dots,X^N) \in L_1 \times \dots \times L_N$ are given and satisfy $U^i(X^i;0)<u^i(+\infty)$ for all $i=1,\dots,N$.
\end{enumerate}      
\end{standass}

Theorem~\ref{corben} is the main result of this paper and is formulated under the additional Assumption~\ref{assOK}, introduced below.

\begin{theorem}\label{corben}
Suppose that Assumption \ref{assOK} holds and that  $\R^N_0 \subseteq \mcY$. Then there exists a unique optimizer $Q_{X^i}^i$ of the dual representation of $U^i(X^i,0)$ in \eqref{dual11}, and   
the following are equivalent
 \begin{enumerate}
     \item There exists a strictly beneficial $\widehat{Y} \in \mcY$.
     \item There exists a beneficial $\widehat{Y} \in \mcY$.
     \item The minimax vector $\mathbf Q_X:=(Q_{X^1}^1,\dots,Q_{X^N}^N)$ is not a collective (sigma) martingale\footnote{ In particular $\mathbf Q_{X}$ is not a collective martingale/local martingale/sigma martingale measure, depending on the properties of the price processes or of the market model (see Section 4).} measure for $(S^1, \dots, S^N)$ namely $$\sum_{i=1}^N E_{Q_{X^i}^i}[Y^i]> 0 \text {  for some  } Y \in \mcY.$$  
 \end{enumerate}

\end{theorem}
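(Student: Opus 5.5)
The proof runs through the cycle $1\Rightarrow 2\Rightarrow 3\Rightarrow 1$, with the dual representation \eqref{dual11} of $U^i(X^i;Y^i)$ supplied by Assumption \ref{assOK} as the main tool. Under that assumption I take \eqref{dual11} to read
\[
U^i(X^i;Y^i)=\min_{\lambda>0,\ Q\in\mathcal M^i}\Big\{\lambda E_Q[X^i+Y^i]+E_{P^i}\big[v^i\big(\lambda \tfrac{dQ}{dP^i}\big)\big]\Big\},
\]
where $v^i$ is the convex conjugate of $u^i$ and $\mathcal M^i$ is the set of separating measures of agent $i$. Existence of an optimizer $(\lambda_i,Q^i_{X^i})$ for $Y^i=0$ is part of Assumption \ref{assOK}. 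For uniqueness, strict convexity of $v^i$ (which follows from differentiability of $u^i$) forces the density $\lambda\,dQ/dP^i$ to be unique. Since $U^i(X^i;0)<u^i(+\infty)$, the optimal $\lambda_i$ is strictly positive, and normalizing gives a unique $Q^i_{X^i}$.

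The implication $1\Rightarrow 2$ is trivial. For $2\Rightarrow 3$ I argue by contraposition. Suppose $\sum_i E_{Q^i_{X^i}}[Y^i]\le 0$ for all $Y\in\mcY$, and let $\widehat Y$ be beneficial. Plugging the fixed pair $(\lambda_i,Q^i_{X^i})$ into the dual of $U^i(X^i;\widehat Y^i)$ gives
\[
U^i(X^i;\widehat Y^i)\le U^i(X^i;0)+\lambda_i E_{Q^i_{X^i}}[\widehat Y^i].
\]
Beneficiality then forces $\lambda_i E_{Q^i_{X^i}}[\widehat Y^i]\ge 0$ for every $i$, with strict inequality for some $i$. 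This alone does not contradict $\sum_i E_{Q^i_{X^i}}[\widehat Y^i]\le 0$, because the $\lambda_i$ differ across agents. I would use $\R^N_0\subseteq\mcY$ to fix this. Because $\mcY$ is a convex cone, $\widehat Y+x\in\mcY$ for every $x\in\R^N_0$, so $\sum_i E_{Q^i_{X^i}}[Y^i+x^i]=\sum_i E_{Q^i_{X^i}}[Y^i]\le 0$; the polarity condition is unchanged by deterministic shifts, and no rescaling by $\lambda_i$ is needed. Positivity of each term then gives $\sum_i E_{Q^i_{X^i}}[\widehat Y^i]>0$, contradicting the assumption. Hence 3 holds.

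For $3\Rightarrow 1$, pick $Y\in\mcY$ with $c:=\sum_i E_{Q^i_{X^i}}[Y^i]>0$. Set $x^i:=-E_{Q^i_{X^i}}[Y^i]+c/N$, so that $\sum_i x^i=0$ and $x\in\R^N_0$. For $\varepsilon>0$ define $Z:=\varepsilon(Y+x)\in\mcY$; each component has $E_{Q^i_{X^i}}[Z^i]=\varepsilon c/N>0$. It remains to show $U^i(X^i;\varepsilon W)>U^i(X^i;0)$ for small $\varepsilon$ whenever $E_{Q^i_{X^i}}[W]>0$. This is a directional-derivative statement: the right derivative of $\varepsilon\mapsto U^i(X^i;\varepsilon W)$ at $0$ should equal $\lambda_i E_{Q^i_{X^i}}[W]$, the envelope/Danskin formula for a min over a set whose optimizer is unique. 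I would derive it by taking the optimal primal $\hat k^i$ for $Y^i=0$, which exists with $u^{i\prime}(X^i+\hat k^i)=\lambda_i\,dQ^i_{X^i}/dP^i$, and comparing $E[u^i(X^i+\hat k^i+\varepsilon W)]$ with its value at $\varepsilon=0$ via concavity and differentiability. Monotone convergence of the difference quotients gives the derivative $E_{P^i}[u^{i\prime}(X^i+\hat k^i)W]=\lambda_iE_{Q^i_{X^i}}[W]>0$, which yields strict improvement for every agent and hence a strictly beneficial exchange.

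The main obstacle is this last step: justifying the interchange of limit and expectation, or alternatively the existence of the primal optimizer $\hat k^i$. I would rely on the integrability built into Assumption \ref{assOK}, namely $W\in L_i$ and dual pairing with $dQ/dP^i$. If that is not enough, I would fall back on a Danskin-type argument on the dual side: lower semicontinuity and compactness of the dual optimizers near $\varepsilon=0$, together with uniqueness of $(\lambda_i,Q^i_{X^i})$, to get the lower bound on the derivative.
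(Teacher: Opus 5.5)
Your architecture coincides with the paper's. For $2\Rightarrow3$ you use the bound $U^i(X^i;\widehat Y^i)\le U^i(X^i;0)+\lambda_iE_{Q^i_{X^i}}[\widehat Y^i]$; you get it by evaluating the dual at the fixed optimal pair, which is slightly more direct than the paper's route through the right derivative and concavity. For $3\Rightarrow1$ you use the same shift $Y+x$ with $x\in\mathbb R^N_0$. The gap is the derivative step in $3\Rightarrow1$. Your main route needs a primal optimizer $\hat k^i\in\mathcal C_i$ with $u^{i\prime}(X^i+\hat k^i)=\lambda_i\rn{Q^i_{X^i}}{P^i}$, and nothing in Assumption \ref{assOK} supplies one. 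There $\mathcal K_i$ is an arbitrary convex cone, subject only to a lower bound by $-\alpha W_i$. No closedness of $\mathcal C_i$ is available, and the supremum defining $U^i(X^i;0)$ need not be attained. Approximate maximizers $f_n\in\mathcal C_i$ do not repair this. You would then need $u^{i\prime}(X^i+f_n)\to\lambda_i\rn{Q^i_{X^i}}{P^i}$ in a sense strong enough to pass to the limit in $E_{P^i}[u^{i\prime}(X^i+f_n)W]$, which is essentially the statement you are trying to prove.

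Your fallback is the right mechanism and is what the paper uses, but you leave its key ingredient unnamed: norm continuity of $U^i$ on the Banach lattice $L_i$. This comes from the Extended Namioka--Klee theorem in Proposition \ref{propU*}. The same result gives existence of the dual optimizer, which is derived from Assumption \ref{assOK} rather than being part of it. Given continuity, $\partial U^i(X^i)$ is exactly the set of dual optimizers, i.e.\ the singleton $\{\lambda_i\rn{Q^i_{X^i}}{P^i}\}$ by strict convexity of $\Phi^i$. A concave function continuous at a point with a singleton superdifferential is Gateaux differentiable there. Concretely, for $\xi_\varepsilon\in\partial U^i(X^i+\varepsilon W)$ one has $U^i(X^i+\varepsilon W)-U^i(X^i)\ge\varepsilon\,\xi_\varepsilon(W)$. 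The $\xi_\varepsilon$ are norm bounded in $L_i'$ for small $\varepsilon$ (local boundedness of supergradients), and every weak$^*$ cluster point belongs to $\partial U^i(X^i)$, so $\xi_\varepsilon(W)\to\lambda_iE_{Q^i_{X^i}}[W]>0$. This is exactly the last assertion of Proposition \ref{corgatbis}, which the paper simply invokes.

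A minor point: in $2\Rightarrow3$ the differing $\lambda_i$ are no obstruction. Since $\lambda_i>0$, the bounds $\lambda_iE_{Q^i_{X^i}}[\widehat Y^i]\ge0$ (strict for some $j$) give $\sum_iE_{Q^i_{X^i}}[\widehat Y^i]>0$ directly. The detour through shifts in $\mathbb R^N_0$ there is unnecessary.
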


\begin{remark} (On Theorem \ref{corben}) \label{remlungo}
\begin{enumerate}

    \item[a)] Since for any $X^i \in L_i$ and any $Y^i \in L_i $ we have $U^i(X^i;Y^i) \leq u^i(+\infty) $, it is obvious that a necessary condition for the existence of $Y^i \in L_i$ for which $U^i(X^i;Y^i)>U^i(X^i;0)$ is that $U^i(X^i;0)<u(+\infty)$, which is Item 5 in the Standing Assumption. Such assumption guarantees the existence of a minimax measure $Q^i_{X^i} \ll P^i \sim \Pp$ for $S^i$. However, such measure may not be equivalent to $P^i$, nor to $\Pp$. Conditions under which $Q^i_{X^i} \sim P^i$ and its relation with individual and collective free lunch\slash arbitrage are discussed in Section \ref{Benefits}.

    \item[b)] The equivalence between conditions 1 and 2 is intuitively plausible, given the assumption that $\mathbb{R}^N_0 \subseteq \mathcal{Y}$. Specifically, this assumption enables the redistribution of the additional utility initially generated for the single agent $j$ to all other agents. This ensures that the utility gain can be transferred to achieve a strict improvement, i.e., $U^i(x^i,Y^i) > U^i(x^i,0)$, for all $i$.
\item[c)] The theorem is proved in the abstract framework defined by the Standing Assumption and Assumption \ref{assOK}. The main applications of this result concern financial markets in which $N$ agents, in addition to cooperating, can invest in a given securities market model (either in continuous or in discrete time). \textbf{These applications are illustrated in Section 4 and the main conclusion are formalized in Corollary \ref{corbentris} and Corollary \ref{corbenbis}.}

    \item[d)] 
    We will demonstrate that in continuous time, the presence of a collective free lunch determines the existence of a strictly beneficial exchange (see Proposition \ref{CFLben}) and, similarly, in a discrete-time setting a collective arbitrage implies the existence of such an exchange. 
    \ifLONG
    However, the absence of collective arbitrage does not preclude per se the possibility of beneficial exchanges, as illustrated by the simple explicit example, in discrete time and with finite $\Omega$, in Section \ref{secexamplesdiscr} 
    \fi
    \item[e)] An alternative way to express the condition in Item 3 of Theorem \ref{corben} is to state that the vector $\mathbf Q_X$ does not belong to the set of collective (sigma) martingale measures  $\mathcal{M}(\mcY)$,  as this set is characterized by the polarity condition \eqref{polarity22}. Let us discuss this necessary and sufficient condition, namely $\mathbf Q_X \notin \mathcal{M}(\mcY)$, for the existence of beneficial exchanges. As it will be clarified in the next sections, it is crucial to note that the set $\mathcal{M}_{e}(\mcY) := \{\mathbf Q=(Q^1,\dots,Q^N) \mid Q^i \sim P^i \text{ and }  Q \in  \mathcal{M}(\mcY) \}$ is determined solely by collective arbitrage considerations (and, naturally, by the chosen admissible set $\mathcal{Y}$ of exchanges), whereas $\mathbf Q_X$ is primarily derived from the agents' preferences (and the selection of the markets $\mathcal K_i$). Consequently, when the minimax vector $\mathbf Q_X:=(Q_{X^1}^1,\dots,Q_{X^N}^N)$ has all components $Q^i_{X^i}$ equivalent to $P^i$, then the condition $\mathbf Q_X \in \mathcal{M}_{e}(\mcY)$ represents a \textit{compatibility condition between collective arbitrage and the agents' preferences}. In general, unless agents' preferences are specifically aligned, this condition will not hold, which then opens up the possibility that certain $Y \in \mathcal{Y}$ are strictly beneficial also under $\mathbf{NCFL(\mcY)}$ or $\mathbf{NCA(\mcY)}$. 
    
    \item[f)] Theorem \ref{corben} establishes the equivalence in the context of heterogeneous utility functions $u^i$,  probability measures $P^i \sim P$, and of possibly distinct filtrations $\mathbb F^i$. \textit{We do not exclude the possibility that all agents have access to the same market}, that is, $\mathcal{K}_1 = \dots = \mathcal{K}_N$. In this case, the minimax measures $Q^i_{X^i}$ appearing in item 3 will possibly differ from one another only because the utility functions $u^i$ and/or the probability measures $P^i$ and/or the initial endowment $X^i$ differ.

    \item[g)]The density $\rn{Q_{X^i}^i}{P^i}$ of the unique optimizer $Q_{X^i}^i$ of the dual representation of $U^i(X^i,0)$ given in \eqref{dual11} belongs to the dual space of $L_i$ and thus the expectation $E_{Q_{X^i}^i}[Y^i]$ is well defined for all $Y \in \mcY \subseteq L_1 \times \dots \times L_N$.

    \item [h)] 
    The convex cone $\mathcal Y$ is only required to satisfy $\mathbb R^N_0 \subseteq \mathcal Y \subseteq L_1 \times \dots \times L_N$, and it is clear that the smaller the set $\mathcal Y$, the more restrictive the existence of beneficial exchanges becomes. The extreme case corresponds to $\mathcal Y=\mathbb R^N_0$. As intuition suggests, in this situation no beneficial exchange can exist: indeed, for any $x\in\mathbb R^N_0=\mcY$ with $x\neq 0$, there is at least one component $x^j<0$, implying that agent $j$ necessarily incurs a loss and therefore cannot strictly increase their utility.
    This intuition is confirmed by Theorem~\ref{corben}. Indeed, regardless of the choice of the minimax vector $Q_X$, the condition in item~(3) of the theorem is never satisfied, since for every $x\in\mathbb R^N_0$, $\sum_{i=1}^N \mathbb E_{Q_{X^i}^i}[x^i] = \sum_{i=1}^N x^i = 0.$

    \item [g)]
    On the Pareto optimality of beneficial exchanges.
The indirect utility \(U^i(X^i,\cdot):L_i\to\mathbb{R}\) naturally induces an individual preference relation \(\succeq_i\). In our setting, it is therefore natural to investigate the notion of Pareto optimality.
\\ An allocation \(Y\in\mathcal{Y}\) is said to be \emph{Pareto optimal} if there exists no other allocation \(\widetilde{Y}\in\mathcal{Y}\) such that \(\widetilde{Y}^i \succeq_i Y^i\) for every \(i=1,\dots,N\), and \(\widetilde{Y}^j \succ_j Y^j\) for at least one \(j\).
\\ As in standard equilibrium theory, for any \(\lambda_1,\dots,\lambda_N>0\), every optimizer \(\widehat{Y}\in\mathcal{Y}\) of the problem
\[
\sup_{Y\in\mathcal{Y}} \sum_{i=1}^N \lambda_i U^i(X^i,Y^i)
\]
is Pareto optimal.
\\ Nevertheless, establishing the existence of a Pareto-optimal allocation that is also beneficial remains a challenging question. The main difficulty in proving this conjecture lies in the fact that the upper-level sets
\[
\left\{(Y^1,\dots,Y^N)\in\mathcal{Y}\;\middle|\; U^i(X^i;Y^i)\ge U^i(X^i;0)\ \forall i=1,\dots,N\right\}
\]
are closed with respect to the product topology \(\sigma(L_1,L_1^*)\times \ldots\times \sigma(L_N,L_N^*) \), while compactness cannot, in general, be guaranteed.

\end{enumerate}
    
\end{remark}

One simple application of Theorem \ref{corben} is the case when each individual market is reduced to the zero element  (e.g. when $\mathcal K_i=\{0\}$ for all $i$), and the initial endowment is not random. As proven in Corollary \ref{prop3conditions}, in this case the existence of beneficial exchanges is possible if and only if the subjective probabilities $(P^1,\dots,P^N)$ do not satisfy the condition $\sum_{i=1}^N E_{P^i}[Y^i] \leq  0 \text {   }\forall Y \in \mcY$. In particular, existence of beneficial exchanges does not depend any more on the utility functions $u^i$ of the agents, but only on their subjective probability $P^i$.
If, for all $i$, $\mathcal K_i=\{0\}$ and the initial endowments $X^i=x^i \in \R$ are non random, one readily checks that: (i) the vector of minimax measures coincides with the vector of subjective probabilities $(P^1,\dots,P^N)$; (ii) the indirect utility simplifies to the expected utility. Thus from Theorem \ref{corben} we deduce

\begin{corollary}\label{prop3conditions}
Suppose that Assumption \ref{assOK} holds true, that  $\R^N_0 \subseteq \mcY$, and let $X=x=(x^1,\dots,x^N) \in \R^N$, which satisfies $u^i(x^i)<u^i(+\infty)$ for all $i=1,\dots,N$. If $K_i=\{0\}$ for all $i=1,\dots,N$ then
the following are equivalent
 \begin{enumerate}
     \item There exists $Y \in \mcY$ such that $E_{P^i}[u^i(x^i+Y^i)] > u^i(x^i) $ for all $i \in \{1,\dots,N\}$.
     \item There exists $Y \in \mcY$ such that $E_{P^i}[u^i(x^i+Y^i)] \geq  u^i(x^i) $ for all $i \in \{1,\dots,N\}$ and $E_{P^j}[u^j(x^j+Y^j)] > u^j(x^j) $ for some $j \in \{1,\dots,N\}$.
     \item  $\sum_{i=1}^N E_{P^i}[Y^i]> 0 \text {  for some  } Y \in \mcY$. 
 \end{enumerate}
\end{corollary}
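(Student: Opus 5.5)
The plan is to obtain the corollary as a direct specialization of Theorem \ref{corben}. The three items of the corollary are items 1, 2, 3 of the theorem once we know two facts: the indirect utility reduces to an expected utility, and the minimax vector $\mathbf Q_X$ equals $(P^1,\dots,P^N)$. The hypotheses of the theorem hold: Assumption \ref{assOK} and $\R^N_0\subseteq\mcY$ are assumed. Item 5 of the Standing Assumptions is also satisfied, since with $\mathcal K_i=\{0\}$ we have $U^i(x^i;0)=u^i(x^i)<u^i(+\infty)$.

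\textbf{Step 1: the indirect utility.} Since $\mathcal K_i=\{0\}$, the supremum in \eqref{primal1} is over a single element. Hence $U^i(x^i;Y^i)=E_{P^i}[u^i(x^i+Y^i)]$ and $U^i(x^i;0)=u^i(x^i)$. With these substitutions, items 1 and 2 of the corollary are literally items 1 and 2 of Theorem \ref{corben}, with strict and non-strict beneficiality as in Definition \ref{defyoptbis}.

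\textbf{Step 2: identifying the minimax measure.} This is the main step. I would use the dual representation \eqref{dual11} with $\mathcal K_i=\{0\}$. Its polar cone imposes no separating or martingale constraint, so the dual ranges over all $Q\ll P^i$ with suitably integrable densities. The dual problem should read
\[
U^i(x^i;0)=\min_{Q\ll P^i}\,\min_{\lambda> 0}\Bigl\{\lambda E_Q[x^i]+E_{P^i}\bigl[v^i\bigl(\lambda \tfrac{dQ}{dP^i}\bigr)\bigr]\Bigr\},
\]
where $v^i$ is the convex conjugate of $u^i$. Take $Q=P^i$ and $\lambda=(u^i)'(x^i)$. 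Then $\lambda x^i+v^i(\lambda)=u^i(x^i)$ by Fenchel equality, since $u^i$ is differentiable. So this pair attains the value and is an optimizer.

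By the uniqueness asserted in Theorem \ref{corben}, we get $Q^i_{x^i}=P^i$. As a cross-check, Jensen's inequality applied to the convex function $v^i$ gives $E_{P^i}[v^i(\lambda\, dQ/dP^i)]\ge v^i(\lambda)$. Hence any $Q$ has dual value at least $\lambda x^i+v^i(\lambda)\ge u^i(x^i)$, which confirms optimality directly. The obstacle I anticipate is matching the exact form of \eqref{dual11}, for instance whether $\lambda$ is separately optimized or absorbed into the measure, and whether the domain requires $dQ/dP^i$ in the dual of $L_i$. Since $dP^i/dP^i=1$ lies in every such dual space, I do not expect this to cause trouble.

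\textbf{Step 3: conclusion.} Substituting $\mathbf Q_X=(P^1,\dots,P^N)$ into item 3 of Theorem \ref{corben} gives exactly item 3 of the corollary: $\sum_i E_{P^i}[Y^i]>0$ for some $Y\in\mcY$. The three equivalences then follow from the theorem.
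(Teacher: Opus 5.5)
Your proposal is correct and follows the paper's route. You specialize Theorem \ref{corben} after noting that, with $\mathcal K_i=\{0\}$, the indirect utility reduces to expected utility and the minimax vector is $(P^1,\dots,P^N)$; your Fenchel-equality-plus-uniqueness argument simply spells out the paper's ``one readily checks''. The only detail to add is that $\lambda=(u^i)'(x^i)$ is indeed strictly positive: $(u^i)'$ is nonnegative and nonincreasing, so $(u^i)'(x^i)=0$ would force $u^i$ to be constant on $[x^i,+\infty)$, contradicting $u^i(x^i)<u^i(+\infty)$.
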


As already stated, throughout the paper, we work under the Standing Assumptions. Theorem \ref{corben} and its Corollary are established under the following further assumptions on the ambient spaces $L_i$ and the  set $\mathcal K_i$ of terminal payoffs from admissible trading strategies 

\begin{assumption}
\label{assOK}
 For all $i=1,\dots,N$ 

\begin{enumerate}

    \item $L_i \subseteq L^0(\Omega,\mathcal{F}^i,P^i)$
     is a solid\footnote{$Y\in L_i$, $X\in L^0(\Omega,\mathcal{F}^i,P^i)$ with $\abs{X}\leq \abs{Y}$ a.s. implies $X\in L_i$.} Banach lattice (under the a.s. ordering)  with $L^\infty(\Omega,\mathcal{F}^i,P^i)\subseteq L_i$ and such that $E_{P^i}[u^i(X)]>-\infty$ for every $X\in L_i$.
     \item The K\"{o}the dual of $L_i$ is the space $L_i^*$ given by
\begin{equation}\label{K*}
    L_i^*:=\{Z\in L^0(\Omega,\mathcal{F}^i,P^i)\text{ s.t. } E_{P^i}[{\abs{XZ}}]<+\infty\,\forall X\in L_i\}
\end{equation}
and we adopt the dual pairing $(L_i,L^*_i)$.
    \item  $\mathcal K_i $ is a convex cone contained in $L^0(\Omega,\mathcal{F}^i,P^i)$ and there exists a random variable $W_i \in L_i$, $W_i \geq 1$ a.s., such that for every $k\in \mathcal K_i$ there exists a constant $\alpha\geq 0$ such that $k\geq -\alpha W_i$ a.s..
    
\end{enumerate}  
\end{assumption}

Assumption \ref{assOK} item 1 implies that for $\widehat{u}^i(x):=u^i(0)-u^i(-\abs{x})$ it holds: $E_{P^i}[\widehat{u}^i(\lambda \abs{X})]<+\infty$ for every $\lambda>0$ and $X \in L_i$, that is $L_i\subseteq  M^{\widehat{u}^i}(\Omega,\mathcal{F}^i,\probp^i)$ the latter being the Orlicz Heart associated to the Young function $\widehat{u}^i$, see the Appendix \ref{approlicz} and \cite{BF08} for details. It is well known that Orlicz Hearts are included in  $ L^1$ spaces, so that each ambient space $L_i$ satisfies $$L^\infty(\Omega,\mathcal{F}^i,\probp^i)\subseteq L_i \subseteq M^{\widehat{u}^i}(\Omega,\mathcal{F}^i,\probp^i) \subseteq L^1(\Omega,\mathcal{F}^i,\probp^i). $$
This also shows that the expectations in Item 3 of Corollary \ref{prop3conditions} are well defined, as $\mcY \subseteq L_1 \times \dots \times L_N \subseteq L^1(\Omega,\mathcal{F}^1,\probp^1) \times \dots \times L^1(\Omega,\mathcal{F}^N,\probp^N)$.

\begin{remark}
Let $p\in(1,+\infty)$. For the choice
\[
u^i(x):=-|x|^p\,\mathbf 1_{(-\infty,0]}(x)
\quad\text{and}\quad
L_i=M^{\widehat{u}^i}(\Omega,\mathcal F^i,\probp^i),
\]
one recovers
\[
L_i=L^{p}(\Omega,\mathcal F^i,\probp^i).
\]
While our assumptions clearly allow for the choice
$L_i=L^\infty(\Omega,\mathcal F^i,\Pp)=L^\infty(\Omega,\mathcal F^i,P^i)$, they exclude the case
$L_i=L^1(\Omega,\mathcal F^i,\probp^i)$ on any probability space supporting
unbounded random variables (for instance, any non-atomic probability space).
Indeed, the Inada conditions imply that
$
\frac{\widehat{u}^i(y)}{y}\longrightarrow +\infty
\quad\text{as } y\to +\infty,
$
which in turn yields the strict inclusion
$
M^{\widehat{u}^i}(\Omega,\mathcal F^i,\probp^i)
\subsetneq
L^1(\Omega,\mathcal F^i,\probp^i).
$
\end{remark}

In case $L_i=M^{\widehat{u}}(\Omega,\mathcal{F}^i,\probp^i)$, then the K\"{o}the dual is $L_i^*=L^{\widehat{\Phi}^i}(\Omega,\mathcal{F}^i,\probp^i)$, the Orlicz space associated to the convex conjugate $\widehat{\Phi}^i$ of $\widehat{u}^i$, given by $\widehat{\Phi}^i(y)=\sup_{x \in \R}(xy-\widehat{u}^i(x))$;  in case $L_i=L^{\infty}(\Omega,\mathcal{F}^i,\Pp)=L^{\infty}(\Omega,\mathcal{F}^i,\probp^i)$ then the K\"{o}the dual is $L_i^*=L^{1}(\Omega,\mathcal{F}^i,\probp^i)$. Employing the dual pair $(L, L^*)$ not only provides the flexibility to accommodate a broader range of function spaces but also offers the significant advantage that the dual space of $L_i=L^\infty(\Omega,\mathcal{F}^i,P^i)$ or $L_i=M^{\widehat{u}^i}(\Omega,\mathcal{F}^i,\probp^i)$ is a subspace of integrable random variables.

\begin{remark}
In some situations, it is natural to take $L_i=L^{\infty}(\Omega,\mathcal{F}^i,P^i)$. 
However, as discussed in detail in \cite{BF05}, this choice becomes too restrictive when dealing with utility maximization problems in models, including discrete time ones, where the price process $S$ is unbounded, see also Remark \ref{remW}. 
For this reason, the appropriate functional framework for utility optimization is provided by Orlicz Spaces and, in particular, by the corresponding Orlicz Heart; see the Appendix and \cite{BF08}.
\end{remark}

\section{Proof of the main result}
\label{secharbenefits}

We recall that throughout the paper  the Standing Assumptions hold true. We remark that the Standing Assumption 5 is not required in Lemma \ref{l2} and Proposition \ref{propU*}.

\begin{notation}
Given the complete probability space  $(\Omega, \mathcal{F}, \Pp)$,
we say that a probability measure $Q$ on $(\Omega, \mathcal{F})$ belongs to $\mathcal{P}_e$ if $Q \sim \Pp$, or to $\mathcal{P}_{ac}$ if $Q\ll \Pp$, respectively.
 Unless differently stated, all inequalities between random variables are meant to hold a.s., that is $\Pp$-a.s.. 
 We will denote the expectation of a random variable under a probability $P$
by $E_P[\cdot ]$. 
If $L$, $K\subseteq L^{0}(\Omega, \mathcal{F} , \Pp)$, we set: $L_{+}:=\left\{ f\in L \mid f\geq 0\right\} ;$
$L-K:=\left\{ f-k \mid f\in L\text{ and }k\in K\right\} .$ For $f \in (L^{0}(\Omega, \mathcal{F} , \Pp))^N$ we set $f^+:=((f^1)^+, \dots,(f^N)^+)$ where $ (f^j)^+:=\max \{f^j,0 \} $ and $f^-:=((f^1)^-, \dots,(f^N)^-)$ where $ (f^j)^-:=\max \{-f^j,0 \} $.

We denote with
$\Phi^i(y):=\sup_{x\in\mathbb{R}}\big(u^i(x)-xy\big)=-(u^i)^*(y)$, $y \in \R, $ the convex conjugate of $u^i$, which is bounded from below  and strictly convex on $(0,+\infty) $.

\end{notation}

Given the convex cone $\mathcal K_i \subseteq L^0(\Omega, \mathcal F^i,P^i)$, consider the convex cone 
\begin{equation}
    \label{Ci}
 \mathcal{C}_i=(\mathcal K_i -L^0_+(\Omega, \mathcal F^i,P^i) \cap L_i \subseteq L_i   
\end{equation}
and denote by $$\mathcal{C}_i^0=\{\xi \in L_i^* \mid E_{P^i}[\xi f] \leq 0 \text{ for all } f \in \mathcal C_i \} $$ the polar of $\mathcal{C}_i$ in the dual pair $(L_i,L_i^*)$.

 \begin{lemma}
\label{l2}
For every $g\in (L_i)_+$ and $k\in\mathcal{K}_i$, $\min(k,g)\in \mathcal{C}_i.$ Moreover, if Assumption \ref{assOK} holds, for every $X\in L_i$
\begin{equation}
\sup_{k\in \mathcal{K}_i}E_{P^i}[u^i(X+k)]=\sup_{f\in \mathcal{C}_i}E_{P^i}[u^i(X+f)].
\label{88}
\end{equation}
\end{lemma}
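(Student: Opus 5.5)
The plan has two parts: first the membership $\min(k,g)\in\mathcal C_i$, then the identity \eqref{88}.

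\textbf{Membership.} Let $g\in(L_i)_+$ and $k\in\mathcal K_i$, and write
\[
\min(k,g)=k-(k-g)^+ .
\]
Since $(k-g)^+\in L^0_+(\Omega,\mathcal F^i,P^i)$, we get $\min(k,g)\in \mathcal K_i-L^0_+$. It remains to show $\min(k,g)\in L_i$, and this uses solidity of $L_i$. Upward, $\min(k,g)\le g$. Downward, Assumption \ref{assOK} item 3 gives $k\ge -\alpha W_i$, and $g\ge 0$, so $\min(k,g)\ge -\alpha W_i$. Hence
\[
\abs{\min(k,g)}\le g+\alpha W_i\in L_i ,
\]
and solidity yields $\min(k,g)\in L_i$. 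The first claim as literally stated involves no $W_i$, so without Assumption \ref{assOK} one would need lower boundedness of $k$ from some other source. I will prove it under the assumption, which is how it is used.

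\textbf{Inequality ``$\ge$'' in \eqref{88}.} Take $f\in\mathcal C_i$, so $f\le k$ for some $k\in\mathcal K_i$. Since $u^i$ is increasing,
\[
E_{P^i}[u^i(X+f)]\le E_{P^i}[u^i(X+k)] .
\]
These expectations are well defined. By Assumption \ref{assOK} item 1, $u^i(X+f)$ has integrable negative part because $X+f\in L_i$. For $X+k$, concavity gives
\[
u^i(X+k)\le u^i(0)+(u^i)'(0)(X+k),
\]
and the term $X+k$ is bounded below by $X-\alpha W_i\in L_i$. Since $u^i(X+k)\ge u^i(X+f)$ with $u^i(X+f)^-$ integrable, the negative part of $u^i(X+k)$ is integrable too, so $E_{P^i}[u^i(X+k)]$ is well defined, possibly $+\infty$.

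\textbf{Inequality ``$\le$'' in \eqref{88}.} Fix $k\in\mathcal K_i$ and set $f_n:=\min(k,n)$. By the first part, $f_n\in\mathcal C_i$, with $f_n\uparrow k$ a.s., so $u^i(X+f_n)\uparrow u^i(X+k)$. The sequence is bounded below by $u^i(X+f_1)$, which is integrable since $X+f_1\in L_i$. Monotone convergence then gives
\[
E_{P^i}[u^i(X+f_n)]\to E_{P^i}[u^i(X+k)],
\]
and therefore
\[
\sup_{k\in\mathcal K_i}E_{P^i}[u^i(X+k)]\le \sup_{f\in\mathcal C_i}E_{P^i}[u^i(X+f)].
\]

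\textbf{Main obstacle.} The only delicate point is integrability: ensuring that $E_{P^i}[u^i(X+k)]$ makes sense for $k\notin L_i$, and that monotone convergence applies. Both are handled by the integrable lower bound $u^i(X+f_1)$ coming from Assumption \ref{assOK} items 1 and 3.
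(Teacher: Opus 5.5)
Your proof is correct and follows essentially the paper's route: membership via $\min(k,g)=k-(k-g)^+$ plus solidity, ``$\ge$'' by monotonicity of $u^i$, and ``$\le$'' by the truncations $\min(k,n)\in\mathcal C_i$ with monotone convergence; your bound $\abs{\min(k,g)}\le g+\alpha W_i$ is a cleaner version of the paper's positive/negative-part bookkeeping, and $u^i(X+f_1)$ plays the role of the paper's integrable lower bound $u^i(X-k^-)$. Your caveat on the first claim is also right: the paper's argument likewise tacitly needs $k^-\in L_i$, i.e.\ item 3 of Assumption \ref{assOK} together with solidity.
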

\begin{proof}
 For $g\in (L_i)_+ $ and $k \in \mathcal{K}_i$ we have $\min (k,g)=k-(k-g)1_{\{k\geq g\}}\in \mathcal{K}_i- L^0_+(\Omega, \mathcal F^i,P^i)$. We have that  $(\min(k,g))^+\leq g^+\in L_i$. Hence $(\min(k,g))^+\in L_i$. As for negative parts $$\min(k,g)1_{\{\min(k,g)\leq 0\}}=f1_{E}+g1_F,\quad E:=\{f\leq 0, f\leq g\},\, F:=\{g\leq 0, g\leq f\}.$$
Hence $(\min(k,g))^-= f^-1_E+g^-1_F$. We observe that $\abs{f^-1_E}\leq \abs{f^-}=f^-\in L_i$, and $\abs{g^-1_F}\leq g^-\in L_i$ similarly. Hence also $(\min(k,g))^-\in L_i$. We conclude that $(\min(k,g))=(\min(k,g))^+-(\min(k,g))^-\in L_i$.

We now move on to prove \eqref{88}. We see that  $X-k^- \in L_i$ for every $k\in \mathcal{K}_i$ by the assumptions on $\mathcal{K}_i$, so that in particular $u^i(X-k^-)\in  L^1(\Omega, \mathcal{F}^i,\probp^i)$.
It is also clear that $\mathcal{C}_i\subseteq \widetilde{\mathcal{C}}_{i,X}:=\{f\in \mathcal{K}_i-L_{+}^{0}(\Omega, \mathcal{F}^i,\probp^i)\text{ s.t. }u^i(X-f^-)\in L^1(\Omega, \mathcal{F}^i,\probp^i)\}$.
For any $f=k-\ell_+\in \widetilde{\mathcal{C}}_{i,X}$ we have that $\Epo{u^i(X+f)}$ is well defined (as $u^i(X+f)\geq u^i(X-f^-)\in L^1(\Omega, \mathcal{F}^i,\probp^i)$), and so is $\Epo{u^i(X+k)}$ (as $u^i(X+k)\geq u^i(X-k^-)\in  L^1(\Omega, \mathcal{F}^i,\probp^i)$), with $\Epo{u^i(X+f)}\leq \Epo{u^i(X+k)}$.  Hence
$$\sup_{f\in \mathcal{C}_i}E_{P^i}[u^i(X+f)]\leq \sup_{f\in \widetilde{\mathcal{C}}_{i,X}}E_{P^i}[u^i(X+f)]\leq \sup_{k\in \mathcal{K}_i}E_{P^i}[u^i(X+k)]$$
using monotonicity of $u^i$ in the last inequality.
Furthermore, for every $k\in \mathcal K_i$ and $n\in\mathbb{N}$ we have $k_n:=\min(k,n)=k-(k-n)1_{\{k\geq n\}}\in \mathcal C_i$ (by what was first showed in the proof), and $u^i(X-k^-)\leq u^i(X+k_n)\uparrow u^i(X+k)$. By Monotone Convergence Theorem ($u^i(X-k^-)\in L^1(\Omega, \mathcal{F}^i,\probp^i)$) then $\Epo{u^i(X+k)}=\lim_n\Epo{u^i(X+k_n)}\leq \sup_{f\in \mathcal{C}_i}E_{P}[u^i(X+f)]$. This shows that $\sup_{k\in \mathcal{K}_i}E_{P^i}[u^i(X+k)]\leq \sup_{f\in \mathcal{C}_i}E_{P^i}[u^i(X+f)]$.
\end{proof}

The K\"{o}the dual $L_i^*$ of $L_i$ is defined in \eqref{K*} and we denote with $L_i'$ the topological dual space of the Banach lattice $L_i$.

\begin{remark}
    $L_i^*$ can be obviously seen as a subspace of the algebraic dual of $L_i$ in that every $Z\in L_i^*$ induces a linear functional $\phi_Z$ on $L_i$ via
$\phi_Z(X):=\Epo{ZX}, X\in L_i$. Actually, these functionals turn out to be norm continuous: the K\"{o}the dual coincides with all and only the functionals that are $\sigma$-order continuous (and in particular continuous in norm) over $L_i$. From Assumption \ref{assOK} it also follows that $L^\infty(\Omega,\mathcal{F}^i,\probp^i)\subseteq L_i^*\subseteq L^1(\Omega,\mathcal{F}^i,\probp^i)$, and that $L_i,L_i^*$ are decomposable (under $P^i$) in the sense of \cite{Rockafellar} page 532. All the claims above are made precise, and references provided, in Lemma \ref{kotheprop}. More details can be found in \cite{Kanto} Chapter IV.
\end{remark}

Define $U^i:L_i \rightarrow [-\infty,+\infty]$ by
\begin{equation}\label{eqUW}
U^i(X):=\sup_{k\in\mathcal{C}_i}E_{P^i}[u^i(X+k)]=U^i(X;0).
\end{equation}

\begin{proposition}\label{propU*}
 Suppose Assumption \ref{assOK} holds. Assume
 that $U^i(\bar{X})<+\infty$ for some $\bar{X}\in L_i$.
 Then $\mathrm{dom}(U^i)=:\{X \in L_i \mid U^i(X)>-\infty \}=L_i$ and
$U^i :L_i \rightarrow \R$ is concave, nondecreasing, norm continuous on $L_i$. Moreover
\begin{equation} \label{dualrho1bis}
U^i(X) =\min_{\xi\in L_i'}
\Big(\xi(X)-(U^i)^*(\xi)\Big)    
\end{equation}
where for every $\xi\in L_i'$ it holds that\footnote{We note that $\xi\in \mathcal{C}_i^0$  already implies that $\xi \geq 0$, as $-L^\infty_+(\Omega, \mathcal F ^i,P^i)\subseteq \mathcal{C}_i$. However, we prefer to stress this positivity explicitly for later use.}
\begin{equation}
    \label{conjugatebis}
    (U^i)^*(\xi):=\inf_{X\in L_i}\left(\xi(X)-U^i(X)\right)=\begin{cases}
    E_{P^i}\big[(u^i)^* (\xi)\big]&\text{ if } \xi\in \mathcal{C}_i^{0}, \xi\geq 0\\
    -\infty&\text{ otherwise}.
\end{cases}
\end{equation}
Finally, 
\begin{equation}\label{eqsuperdiff}    
\partial U^i(X)=\{\xi \in  \mathcal{C}_i^{0} \mid \xi\text{ is optimum for RHS of } \eqref{dualrho1bis} \} \neq \emptyset\quad \forall X\in L_i,
\end{equation}
    where $\partial U^i(X) $ is the superdifferential of $U^i$ at $X$.    
\end{proposition}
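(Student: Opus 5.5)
We argue by convex analysis on the Banach lattice $L_i$, in three steps: the basic properties of $U^i$, the conjugate \eqref{conjugatebis}, and then \eqref{dualrho1bis} and \eqref{eqsuperdiff} by Fenchel--Moreau.

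\textbf{Concavity, monotonicity, finiteness.} By Lemma \ref{l2}, $U^i(X)=\sup_{f\in\mathcal C_i}E_{P^i}[u^i(X+f)]$. Concavity follows because $\mathcal C_i$ is a convex cone and $u^i$ is concave. Monotonicity follows from monotonicity of $u^i$. Since $0\in\mathcal C_i$, Assumption \ref{assOK} item 1 gives $U^i(X)\ge E_{P^i}[u^i(X)]>-\infty$, so $\mathrm{dom}(U^i)=L_i$.

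For finiteness, pick any $X\in L_i$ and $\lambda\in(0,1)$. Concavity of $U^i$ gives
\[
U^i(\bar X)\ge \lambda U^i\big(X/\lambda\big)+(1-\lambda)\,U^i\big(\bar X-\tfrac{\lambda}{1-\lambda}X\big)\cdot\tfrac{1-\lambda}{1-\lambda}.
\]
More simply, write $\bar X=\lambda (X/\lambda)+(1-\lambda)Z$ with $Z\in L_i$. The second term is $>-\infty$, so $U^i(X/\lambda)<+\infty$. Monotonicity and a scaling argument, or directly this argument applied with $X$ replaced by $\lambda X$, then give $U^i<+\infty$ everywhere.

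\textbf{Continuity.} A finite, concave, nondecreasing functional on a Banach lattice is norm continuous; this is the extended Namioka--Klee theorem (cf. Biagini--Frittelli / Ruszczyński--Shapiro). Continuity yields two things: $U^i$ is weakly upper semicontinuous, and $\partial U^i(X)\neq\emptyset$ for every $X$, by Hahn--Banach, since $U^i$ is continuous at $X$. Fenchel--Moreau in the dual pair $(L_i,L_i')$ then gives
\[
U^i(X)=\inf_{\xi}\big(\xi(X)-(U^i)^*(\xi)\big).
\]
The infimum is attained exactly at the supergradients. This is \eqref{dualrho1bis}, and it yields \eqref{eqsuperdiff} once the conjugate is identified.

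\textbf{Conjugate computation.} This is the main obstacle, because $L_i'$ may be larger than the Köthe dual $L_i^*$.

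First, if $\xi\ge0$ fails, there is $h\ge0$ with $\xi(h)<0$. Then
\[
\xi(th)-U^i(th)\le \xi(th)-U^i(0)\to-\infty,
\]
so $(U^i)^*(\xi)=-\infty$.

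Second, write $\xi(X)-U^i(X)=\inf_{f\in\mathcal C_i}\big(\xi(X)-E_{P^i}[u^i(X+f)]\big)$ and substitute $W=X+f$. This gives
\[
(U^i)^*(\xi)=\inf_{W}\big(\xi(W)-E_{P^i}[u^i(W)]\big)+\inf_{f\in\mathcal C_i}\big(-\xi(f)\big).
\]
The last term is $0$ if $\xi\in\mathcal C_i^0$ and $-\infty$ otherwise, because $\mathcal C_i$ is a cone.

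It remains to show that when $\xi\ge0$ has a nonzero singular part, $\inf_W\big(\xi(W)-E_{P^i}[u^i(W)]\big)=-\infty$. Decompose $\xi=\xi_r+\xi_s$ with $\xi_r\in L_i^*$ and $\xi_s$ singular, i.e. vanishing on order-continuous parts. Choose $W=-n1_{A_m}$ with $P^i(A_m)\to0$ so that $\xi_s(1_{A_m})$ stays positive. Using that $u^i(W)$ is integrable and that on the Orlicz heart the singular functional annihilates $L^\infty$-type truncations, this forces $-\infty$.

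I would instead handle this via \cite{Rockafellar}-type integral conjugation restricted to $L_i^*$. Decomposability of $L_i$ (noted in the preceding Remark) gives
\[
\inf_{W\in L_i}\big(E_{P^i}[\xi W]-E_{P^i}[u^i(W)]\big)=E_{P^i}\big[\inf_x(\xi x-u^i(x))\big]=E_{P^i}\big[(u^i)^*(\xi)\big]\quad\text{for } \xi\in L_i^*.
\]
The singular case is then dismissed by showing that $\xi_s\ne0$ lets one decrease $\xi(W)$ arbitrarily without moving $E_{P^i}[u^i(W)]$ much, using Lemma \ref{kotheprop}. I expect this step to need the most care.
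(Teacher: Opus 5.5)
\textbf{Gap: the singular-functional step.} Your first steps agree with the paper. Finiteness follows from concavity together with $U^i\ge E_{P^i}[u^i(\cdot)]>-\infty$. The display you wrote there is wrong, since $\lambda(X/\lambda)+(1-\lambda)(\bar X-\tfrac{\lambda}{1-\lambda}X)\neq\bar X$, but your following sentence with $Z=(\bar X-X)/(1-\lambda)$ is correct. Norm continuity comes from the extended Namioka--Klee theorem, the dual representation over $L_i'$ has its minimizers equal to the supergradients, and the change of variables gives the splitting $(U^i)^*(\xi)=\inf_{f\in\mathcal C_i}(-\xi(f))+\inf_{W\in L_i}\big(\xi(W)-E_{P^i}[u^i(W)]\big)$.

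What is missing is the step you flag and leave open: every $\xi\in L_i'$ with $(U^i)^*(\xi)>-\infty$ must lie in the K\"othe dual $L_i^*$. Without this, the case distinction in \eqref{conjugatebis} is not established, because $\mathcal C_i^0\subseteq L_i^*$ by definition. The inclusion $\partial U^i(X)\subseteq\mathcal C_i^0$ in \eqref{eqsuperdiff} is not established either.

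Your concrete attempt tests with $W=-n1_{A_m}$, where $P^i(A_m)\to0$ and $\xi_s(1_{A_m})$ stays positive. This works for purely finitely additive functionals on $L^\infty$. However, Assumption \ref{assOK} allows any solid Banach lattice between $L^\infty$ and $M^{\widehat u^i}$. For instance, take exponential utility and $L_i=L^\Psi$ with $\Psi(x)=e^{x^2}-1$. There, nonzero singular functionals vanish on $M^\Psi\supseteq L^\infty$, so $\xi_s(1_A)=0$ for every $A$ and no indicator test function can detect them. Your own remark that the singular part `annihilates $L^\infty$-type truncations' in fact contradicts the choice of indicators.

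\textbf{How to close it.} Test with arbitrary dominated sequences and use Lemma \ref{kotheprop} Item 3; this also makes the regular/singular decomposition unnecessary. If $(U^i)^*(\xi)>-\infty$, your own first two claims give $\xi\ge0$ and $\xi\le0$ on $\mathcal C_i$, so $(U^i)^*(\xi)=\inf_W\big(\xi(W)-E_{P^i}[u^i(W)]\big)$. Take $0\le X_n\to0$ a.s. with $\sup_nX_n\in L_i$, and $\lambda>0$. The choice $W=-\lambda X_n$ gives
\[
0\le\lambda\,\xi(X_n)\le -E_{P^i}[u^i(-\lambda X_n)]-(U^i)^*(\xi).
\]
Apply dominated convergence, with lower bound $u^i(-\lambda\sup_mX_m)\in L^1$ (by Assumption \ref{assOK} Item 1) and upper bound $u^i(0)$. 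This yields $\limsup_n\xi(X_n)\le\lambda^{-1}\big(-u^i(0)-(U^i)^*(\xi)\big)$, and letting $\lambda\to\infty$ gives $\xi(X_n)\to0$. Hence $\xi$ is $\sigma$-order continuous, so $\xi\in L_i^*$ by Lemma \ref{kotheprop} Item 2. Only then does the Rockafellar integral-conjugate formula on $L_i^*$ complete \eqref{conjugatebis}.
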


\begin{proof}
Since $0\in\mathcal{C}_i$ we have $-\infty<\Epo{u^i(X)}\leq U^i(X)$ for every $X\in L_i$. Furthermore an elementary verification yields:
$U^i(\lambda X_1+ (1-\lambda X_2)\geq \lambda U^i(X_1)+ (1-\lambda )U^i(X_2)$ for every $X_1, X_2\in L_i$ and $0<\lambda<1$, with usual conventions for operations with $+\infty$. Since $U^i(\overline{X})$ is finite by assumption, it then follows that $U^i(X)$ is finite for every $X\in L_i$, and concave. One also verifies monotonicity of $U^i$ directly by monotonicity of $u^i$. 
 By the Extended Namioka–Klee Theorem
(see \cite{bfnam}, Theorem 1) we have that $U^i$ is norm continuous on $\mathrm{dom}(U^i)=L_i$ and
\begin{equation*}
    \label{dualrho1}
U^i(X)
=
\min_{\xi\in {L_i'}}
\Big(\xi(X)-(U^i)^*(\xi)\Big),    
\end{equation*}
where for $\xi \in L_i'$
\begin{align}
(U^i)^*(\xi)
&:=
\inf_{X\in L_i}\Big(\xi(X)-U^i(X)\Big) \notag
=
\inf_{X\in L_i}
\Big(
\xi(X)
-\sup_{k\in\mathcal{C}_i}E_{P}\big[u^i(X+k)\big]
\Big) \notag
\\
&=
\inf_{k\in\mathcal{C}_i}
\inf_{X\in L_i}
\Big(
\xi(X)-E_{P^i}\big[u^i(X+k)\big]
\Big) =
\inf_{k\in\mathcal{C}_i}
-\xi(k)
+
\inf_{Y\in L_i}
\big(
\xi(Y)-E_{P^i}[u^i(Y)]
\big) \label{U^i*bis}
\end{align}
and in \eqref{U^i*bis} we made the change of variables $Y:=X+k \in L_i$.
Notice that in the definition of $(U^i)^*$ we are using the dual system $(L_i,L_i')$, where $L_i'$ is the topological dual space of the Banach lattice $L_i$. The polar cone $\mathcal{C}_i^{0}$
is instead, by our definition, computed with respect the dual system $(L_i,L_i^*)$.


a) We first claim that if $\xi \in L_i'$ and $(U^i)^*(\xi)>-\infty$ then $\xi(f)\le 0$ for all
$f\in\mathcal{C}_i$.
Indeed, 
since  $\inf_{Y\in L_i}
(
\xi(Y)-E_{P^i}[u^i(Y)])\leq u^i(0)<+\infty$, we deduce from \eqref{U^i*bis} that $\inf_{k\in\mathcal{C}_i}
-\xi(k)>-\infty$. As  $0\in\mathcal{C}_i$ and $\mathcal{C}_i$ is a cone, we get $\inf_{k\in\mathcal{C}_i}
-\xi(k)=0$ that is our claim. \\ 

b) Our second claim is that if $\xi \in L_i'$ and $(U^i)^*(\xi)>-\infty$ then $\xi \in L_i^*$.\\
We momentarily suppose that b) holds and conclude the proof as follow.
Claims a) and b) imply that if $\xi \in L_i'$ and $(U^i)^*(\xi)>-\infty$ then $\xi \in \mathcal C_i^0$. Therefore, if $\xi \in L_i'$, but $\xi \notin \mathcal C_i^0$, then  $(U^i)^*(\xi)=-\infty$. 
Thus, to end the proof of \eqref{conjugatebis}, it remains to prove that for $\xi\in \mathcal{C}_i^0$ we have $(U^i)^*(\xi)=\Epo{(u^i)^*(\xi)}$.
We note that the Assumption \ref{assOK} and Lemma \ref{kotheprop} Item 1 ensure that the assumptions of \cite{Ro89}, Corollary on page 534 are satisfied. Thus the concave conjugate $(I_{u^i})^*:L_i^* \rightarrow [-\infty,+\infty)$ of the integral functional $I_{u^i}(X)=E_{P^i}[u^i(X)]$ can be represented as \begin{equation}\label{I*}
(I_{u^i})^*(\xi):=\inf_{Y\in L_i}
\big(
\xi(Y)
-E_{P^i}[u^i(Y)]
\big)=E_{P^i}\big[(u^i)^*(\xi)\big)], \quad \xi \in L_i^*,
\end{equation}
so this latter expression can be inserted in \eqref{U^i*bis}, which gives \eqref{conjugatebis} recalling that $\inf_{k\in\mathcal{C}_i}
-\xi(k)=0$ if $\xi\in \mathcal{C}_i^0$.

To prove b) we now show that if  $\xi \in L_i'$ with $(U^i)^*(\xi)>-\infty$ then $\xi$ is $\sigma$-order continuous in the sense of Lemma \ref{kotheprop} Item 2-3, thus is represented by an element in $L_i^*$ (Lemma \ref{kotheprop} Item 2).
By definition $-(L_i)_+ \subseteq \mathcal{C}_i$ and, by claim a), $\xi(f)\leq 0$ for $f\in\mathcal{C}_i$. Thus   $\xi(X)\geq 0$ for every $X\in L_i, X\geq 0$.
 From \eqref{U^i*bis}, since $\inf_{k\in\mathcal{C}_i}
-\xi(k)=0$,
\begin{equation}
\label{coniug1bis}
    (U^i)^*(\xi)=\inf_{Y\in L_i}
(
\xi(Y)-E_{P^i}[u^i(Y)]).
\end{equation}
Take a sequence $0\leq X_n\rightarrow_n0$ with $\sup_n \abs{X_n}\in L_i$ and $\lambda >0$. Replacing $Y$ with $-\lambda X_n$ in \eqref{coniug1bis} we see that
$$0\leq \lambda \xi(X_n)\leq -\Epo{u^i(-\lambda X_n)}-(U^i)^*(\xi).$$
Dominated Convergence Theorem ensures $\Epo{u^i(-\lambda X_n)}\rightarrow u^i(0)$, so that $0\leq \limsup_n \xi(X_n)\leq \frac1\lambda(-u^i(0)-(U^i)^*(\xi))$. Taking the infimum over $\lambda >0$ in RHS yields $\lim_n \xi(X_n)=0$. This proves that $\xi$ is $\sigma$-order continuous (Lemma \ref{kotheprop} Item 3), 

The final claim of the Proposition follows from the  standard fact that the subdifferential of a convex function at a given point coincides with the set of optima for its dual representation at that given point.
\end{proof}

\begin{proposition}
\label{corgatbis}
Suppose Assumption \ref{assOK} holds.
    Fix $X\in L_i$ and suppose that 
    $$U^i(X)=U^i(X;0)=\sup_{f\in \mathcal{C}_i}E_{P}\big[u^i(X+f)\big]<u^i(+\infty).$$
   Then there exists a unique pair $(\lambda_X,\probq_X) \in \R \times \mathcal P_{ac}$,  with $\lambda_X>0$ and   such that $\rn{Q_X}{P} \in \mathcal{C}_i^{0}$,  which satisfies 
    \begin{equation}\label{dual11}
    \begin{split}
    U^i(X;0)=\sup_{f\in\mathcal{C}_i}E_{P^i}[u^i(X+f)]&=\min_{\lambda >0} \min_{\substack{  \probq\ll\probp^i\\ \rn{Q}{P^i} \in \mathcal{C}_i^{0}}}
\Bigg (\lambda E_{Q}[X]+E_{P^i}\bigg[\Phi^i\Big(\lambda \rn{\probq}{\probp^i}\Big)\bigg] \Bigg )\\  
&=
\lambda_XE_{Q_X}[X]+E_{P^i}\bigg[\Phi^i\Big(\lambda_X \rn{\probq_X}{\probp^i}\Big)\bigg].
    \end{split}
\end{equation}
The measure $Q_X$ is called \textbf{minimax measure}. 
Furthermore, for every $Y\in L_i$ the function
$$F^i_Y(\alpha):=\sup_{f\in\mathcal{C}_i}E_{P^i}[u^i(X+\alpha Y+f)]=U^i(X+\alpha Y)$$
is well defined in a neighborhood of zero, and differentiable at zero with
$$\rn{}{\alpha}F^i_Y\Big |_{\alpha=0}=\lambda_X\,E_{\probq_X}[Y].$$
\end{proposition}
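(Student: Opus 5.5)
Our approach is to start from the representation \eqref{dualrho1bis}--\eqref{conjugatebis} of Proposition \ref{propU*} and to pass from superdifferentials to measures. The finiteness hypothesis $U^i(X)<u^i(+\infty)\le+\infty$ lets us apply Proposition \ref{propU*} with $\bar X=X$. So $U^i$ is finite, concave, nondecreasing and norm continuous on $L_i$, and $\partial U^i(X)\neq\emptyset$ consists of those $\xi\in\mathcal C_i^0$, $\xi\ge0$, attaining
\[
U^i(X)=\min_{\xi}\big(E_{P^i}[\xi X]+E_{P^i}[\Phi^i(\xi)]\big),
\]
where we used $(u^i)^*=-\Phi^i$. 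Every $\xi\ge 0$ in $L_i^*$ with $\xi\neq 0$ can be written uniquely as $\xi=\lambda\,\rn{Q}{P^i}$ with $\lambda=E_{P^i}[\xi]>0$ and $Q\ll P^i$. Since $\mathcal C_i^0$ is a cone, $\rn{Q}{P^i}\in\mathcal C_i^0$. Conversely, every such pair $(\lambda,Q)$ gives $\xi=\lambda\rn{Q}{P^i}\in\mathcal C_i^0$. The minimum over $\xi$ therefore equals the double minimum in \eqref{dual11}, provided we exclude $\xi=0$.

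To exclude $\xi=0$ we note that the dual value there is $E_{P^i}[\Phi^i(0)]=\sup_x u^i(x)=u^i(+\infty)>U^i(X)$, so $\xi=0$ is never optimal. This gives existence of an optimal pair with $\lambda_X>0$. For uniqueness, take two optima $\xi_1,\xi_2$. Their average is also optimal, because the objective is convex. Since $\Phi^i$ is strictly convex on $(0,\infty)$, optimality of the average forces $\xi_1=\xi_2$ a.s. on the set where both are positive. On the set where one of them vanishes we need a separate argument, which uses that $\Phi^i$ is strictly convex on $[0,\infty)$ near $0$. Under the Inada conditions $\Phi^i$ is finite at $0$, and strict concavity of $u^i$ on $(-\infty,a)$ gives differentiability of $\Phi^i$ on $(0,\infty)$ with no affine piece touching $0$. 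I expect this step to be the main technical obstacle: making strict convexity of $\Phi^i$ on $[0,\infty)$ precise in the case $a<\infty$. Once $\xi$ is unique, uniqueness of the pair $(\lambda_X,Q_X)$ follows from $\lambda_X=E_{P^i}[\xi]$.

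For differentiability we use that $F^i_Y(\alpha)=U^i(X+\alpha Y)$ is a finite concave function of $\alpha$ on all of $\R$, since $\mathrm{dom}(U^i)=L_i$. For a concave norm-continuous function the one-sided directional derivatives at $X$ are given by
\[
\min_{\xi\in\partial U^i(X)}\xi(Y)\quad\text{and}\quad \max_{\xi\in\partial U^i(X)}\xi(Y);
\]
this is the standard max-formula, valid because $U^i$ is continuous at $X$. We have just shown that $\partial U^i(X)$ is the singleton $\{\lambda_X\rn{Q_X}{P^i}\}$, so the two one-sided derivatives coincide. Hence $F^i_Y$ is differentiable at $0$ with derivative $\lambda_X E_{Q_X}[Y]$. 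This expectation is well defined because $\rn{Q_X}{P^i}\in L_i^*$.
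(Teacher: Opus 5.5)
Your proposal is correct and follows essentially the paper's route. You apply Proposition \ref{propU*} with $\bar X=X$, rewrite $\xi=\lambda\rn{Q}{P^i}$, and exclude $\xi=0$ because its dual value is $u^i(+\infty)>U^i(X)$. Uniqueness then comes from strict convexity of $\Phi^i$, and differentiability from $\partial U^i(X)$ being a singleton.

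The step you flag as the main obstacle is in fact immediate. Handling it makes your uniqueness argument more complete than the paper's, which only invokes strict convexity on $(0,+\infty)$ and never discusses the set where an optimal density vanishes. If $\Phi^i(0)<+\infty$, convexity on $[0,+\infty)$ together with strict convexity on $(0,+\infty)$ already gives strict convexity on $[0,+\infty)$. Indeed, an equality $\Phi^i(ty)=(1-t)\Phi^i(0)+t\Phi^i(y)$ with $y>0$ and $t\in(0,1)$ would force $\Phi^i$ to coincide with its chord on all of $[0,y]$, since a convex function meeting its chord at one interior point is affine on the segment. That contradicts strict convexity on $(0,y)$, so no separate analysis of the case $a<\infty$ is needed.

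Your remark that the Inada conditions make $\Phi^i(0)$ finite is wrong: $\Phi^i(0)=u^i(+\infty)$ can be $+\infty$. That case is even simpler. The optimal value is finite and $\Phi^i$ is bounded below, so every optimal $\xi$ satisfies $E_{P^i}[\Phi^i(\xi)]<+\infty$, hence $\xi>0$ a.s., and there is no zero set to handle.

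For the derivative, you read it off directly from the max formula as the unique supergradient evaluated at $Y$. The paper instead obtains Gateaux differentiability from \cite{BoVa10} and then identifies the derivative via the envelope theorem applied to the dual representation at $X+\alpha Y$. That is why it needs $U^i(X+\alpha Y)<u^i(+\infty)$ for small $\alpha$. Your version is shorter and shows that $F^i_Y$ is finite and concave on all of $\R$, not just near zero.
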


\begin{proof}
Notice that $U^i(X)<u^i(+\infty)$, so that all the assumptions of Proposition \ref{propU*} are satisfied. Any $0\leq \xi \in \mathcal{C}_i^0$, $\xi \neq 0$ can be represented by $\xi =\lambda \rn{Q}{P^i}$, where $\lambda > 0$ and the probability $Q \ll P^i$ satisfies $\rn{Q}{P^i} \in \mathcal{C}_i^{0}$.  We also note that $U^i(X)<u^i(+\infty)=\Phi^i(0)$ implies that the minimum in RHS of \eqref{dualrho1bis} cannot be attained at $\xi=0\in\mathcal{C}_i^0$. As $\Phi^i=-(u^i)^*$ from \eqref{dualrho1bis} and \eqref{conjugatebis} we deduce 
\begin{equation} \label{dualrho1tris}
\begin{split}
U^i(X) =
\min_{\xi\in \mathcal{C}_i^{0},\xi\neq 0}
\Big(\xi(X)-E_{P^i}\big[(u^i)^* (\xi)\big]\Big)
&= \min_{\lambda >0} \min_{\substack{  \probq\ll\probp^i\\ \rn{Q}{P^i} \in \mathcal{C}_i^{0}}}
\Bigg (\lambda E_{Q}[X]+E_{P^i}\bigg[\Phi^i\Big(\lambda \rn{\probq}{\probp^i}\Big)\bigg] \Bigg )  \\
&= \lambda_XE_{Q_X}[X] +E_{P^i}\bigg[\Phi^i\Big(\lambda_X \rn{\probq_X}{\probp^i}\Big)\bigg].
\end{split}
\end{equation}

By Lemma \ref{lemmauphi}, the function $\Phi^i$ is strictly convex on $(0,+\infty)$, so that the minimizers $\lambda_X$, $\probq_X$ in \eqref{dualrho1tris} are uniquely determined.
    Uniqueness of such optima implies, by \eqref{eqsuperdiff}, that $\partial U^i(X)$ is a singleton. Then, $U^i$ it is Gateaux differentiable at $X$
    by \cite{BoVa10} Proposition 6.1.4.(a), and  $F^i_Y(\alpha)=U^i(X+\alpha Y)$ 
is differentiable at $0$ by definition of Gateaux differentiability.
Since $U^i$ is norm continuous, $U^i(X+\alpha Y)<u^i(+\infty)$ for $\alpha$ sufficiently small. We apply \eqref{dualrho1tris} replacing there $X$ with $(X+\alpha Y)$ and deduce, for $\alpha$ sufficiently small,
\begin{align}\label{dualdual}
F^i_Y(\alpha)&=\sup_{f\in\mathcal{C}_i}E_{P^i}[u^i(X+\alpha Y+f)]=U^i(X+\alpha Y) \notag \\
&= \min_{\lambda >0} \min_{\substack{  \probq\ll\probp^i\\ \rn{Q}{P^i} \in \mathcal{C}_i^{0}}}\bigg(\lambda \big(E_{Q}[X]+\alpha E_{\probq}[Y]\big)+E_{P^i}\bigg[\Phi^i\Big(\lambda \rn{\probq}{\probp^i}\Big)\bigg]\bigg):=\min_{\lambda >0} \min_{\substack{  \probq\ll\probp^i\\ \rn{Q}{P^i} \in  \mathcal{C}_i^{0}}} g^i_{\lambda,Q}(\alpha).
\end{align}
By the envelope Theorem (\cite{MiSe02}, Theorem 1) we have $\rn{}{\alpha}F^i_{Y}\Big |_{\alpha=0}=\rn{}{\alpha}g^i_{\lambda_X,Q_X}\Big |_{\alpha=0}=\lambda_X\,E_{\probq_X}[Y]$.
\end{proof}
\subsection{Proof of Theorem \ref{corben}}

\begin{proof}
We prove that: 1. $\Rightarrow $ 2. $\Rightarrow$ 3.  $\Rightarrow $ 1. The implication 
1. $\Rightarrow $ 2. is obvious.\\
Suppose 2. holds and let $Y \in \mcY$  satisfy the inequalities in item 2.  Set $F^i_{Y^i}(\alpha):=U^i(X^i;\alpha Y^i)=U^i(X^i+\alpha Y^i)$, $\alpha \in \R$, and note that $F^i_{Y^i}(1)=U^i(X^i;Y^i)$ and $F^i_{Y^i}(0)=U^i(X^i;0)$.
By Fenchel inequality
\begin{equation*}
    F^i_{Y^i}(\alpha) \leq \inf_{\lambda >0} \inf_{\substack{  \probq\ll\probp\\ \rn{Q}{P} \in (\mathcal{C}_i)^{0}}}
\bigg(
\lambda(E_Q[X^i] + \alpha  E_{Q}[Y^i])
+ E_{P^i}\left[\Phi^i\left(\lambda\frac{\mathrm{d}{Q}}{\mathrm{d}P^i}\right)\right]\bigg),
\end{equation*}
and by \eqref{dual11} 
\begin{equation*}
   F^i_{Y^i}(0)=\bigg(\lambda_{X^i} E_{Q^i_{X^i}}[X^i]  + E_{P^i}\left[\Phi^i\left(\lambda_{X^i}\rn{Q^i_{X^i}}{P^i}\right)\right]\bigg).
\end{equation*}
for unique pairs$(\lambda_{X^i},\probq^i_{X^i})$ with $\lambda_{X^i}>0$,  $Q^i_{X^i}  \ll P^i$ satisfying $\rn{Q^i_{X^i} }{P^i} \in (\mathcal{C}_i)^{0}$.
Thus computing the right derivative of $F^i_{Y^i}$ at zero we get
\begin{align*}
   \frac{\mathrm{d}^+}{\mathrm{d}\alpha} F^i_{Y^i} \Big |_{\alpha=0}&= \lim_{\alpha \downarrow 0}\frac1\alpha\bigg(F^i_{Y^i}(\alpha)-F^i_{Y^i}(0)\bigg)\\& \leq \lim_{\alpha \downarrow 0}\frac1\alpha\bigg \{ \inf_{\substack{ \lambda>0, \probq\ll\probp\\ \rn{Q}{P} \in (\mathcal{C}_i)^{0}}}
\bigg(
\lambda(E_{Q}[X^i] + \alpha  E_{Q}[Y^i])
+ E_{P^i}\left[\Phi^i\left(\lambda\rn{{Q}}{P^i}\right)\right]\bigg)+
\\
&-\bigg(\lambda_{X^i}  E_{Q^i_{X^i}}[X^i]+ E_{P^i}\left[\Phi^i\left(\lambda_{X^i}\rn{Q^i_{X^i}}{P^i}\right)\right]\bigg)\bigg \}\leq \lambda_{X^i} \Big(  E_{Q^i_{X^i}}[Y^i]\Big).
\end{align*}
The concavity of $F^i_{Y^i}$ implies that, for all $i$,
$$U^i(X^i;Y^i)-U^i(X^i;0)=F^i_{Y^i}(1)-F^i_{Y^i}(0) \leq \frac{\mathrm{d}^+}{\mathrm{d}\alpha} F^i_{Y^i} \Big |_{\alpha=0} (1-0) \leq \lambda_{X^i} \Big(  E_{Q^i_{X^i}}[Y^i]\Big).$$
Thus from the condition in item 2. we get $E_{Q^i_{X^i}}[Y^i] \geq 0 $ for all $i$ and $E_{Q^j_{X^j}}[Y^j]>0$. Thus $\sum_{i=1}^N E_{Q^i_{X^i}}[Y^i]>0$.

To show 3.  $\Rightarrow $ 1., suppose that  
there exists 
$Y \in \mcY$ such that $\sum_{j=1}^N E_{Q^j_{X^j}}[Y^j] >0$. 
Define $\widehat{Y} $ by $\widehat{Y}^i:=Y^i+ \frac{1}{N} \sum_{j=1}^N E_{Q^j_{X^j}}[Y^j] -E_{Q^i_{X^i}}[Y^i] $, for all $i=1,\dots, N$, and note that $\widehat{Y} \in \mcY$, since $\R^N_0 \subseteq \mcY$ and $\mcY$ is a convex cone.
 Define the function $F_{\widehat{Y}^i} ^i(\alpha)=U^i(X^i;\alpha \widehat{Y}^i)$ and note that, by Proposition \ref{corgatbis},
it is differentiable at zero with
 $$ \frac{\mathrm{d}}{\mathrm{d}\alpha} F_{\widehat{Y}^i}^i \Big |_{\alpha=0} = \lambda_{X^i} E_{Q^i_{X^i}}[\widehat{Y}^i] =\lambda_{X^i} \Big( \frac{1}{N} \sum_{j=1}^N E_{Q^i_{X^j}}[Y^j] \Big) >0,$$
 as $\lambda_{X^i}>0$.
This implies that for each $i$ there exists some $\alpha^i >0$ such that $F_{\widehat{Y}^i}^i(\alpha)>F_{\widehat{Y}^i}^i(0)$ for all $\alpha \in (0,\alpha^i]$. By taking $\widehat{\alpha}:=\min(\alpha^1,\dots,\alpha^N)>0$ we deduce that for all $i$ we have $F_{\widehat{Y}^i}^i(\widehat{\alpha})>F_{\widehat{Y}^i}^i(0)$, namely  $U^i(X^i;\widehat{\alpha} \widehat{Y}^i)>U^i(X^i;0)$ for all $i$, with $\widehat{\alpha} \widehat{Y} \in \mcY$.
 \end{proof}

\section{The utility maximization with random endowment of a single agent}
\label{sction:theutilmax}

In this Section we are considering the utility maximization of agent $i\in\{1,\dots,N\}$.
Under the Standing Assumptions, we shall construct an ambient Banach lattice $L_i$ and a convex cone
$\mathcal K_i$ so that Assumption \ref{assOK} is fulfilled and we may apply the results in Section \ref{secharbenefits}.
\textbf{In the remainder of the paper, the following assumption is understood to hold for each $i$ without further mention.}
 \begin{assumption}\label{assNew}
     We let $\mathcal T \subseteq [0,T]$ be the set of trading times, for a finite horizon $T>0$. For each $i$, $(\Omega,\mathcal{F}^i,\mathbb F^i=(\mathcal{F}^i_{t})_{t\in \mathcal T},\Pp)$ is a filtered probability space, with $\mathcal{F}^i=\mathcal{F}^i_T$ and $\mathcal F^i_0$ the trivial sigma-algebra, and $S^i$ is an $\mathbb{R}^{d_i}$-valued c\`adl\`ag semimartingale adapted to $\mathbb F^i$ and $d_i \in \mathbb N$.
     In continuous time we assume in addition that $\mathbb F^i $ satisfies the usual hypotheses.

 \end{assumption}

In discrete time, by the usual embedding of $\{0,1,\dots,T\} $ in $[0,T]$ , see for example \cite{DS2006} Section 7.2, we may also apply the result of this section to the discrete time setting in Section \ref{secsettingdiscrete}.\\

The utility maximization problem in the general semimartingale market with random endowment is to compute
\begin{equation}\label{primal2}
U^i_{W^i}(X^i;0)
:=\sup_{H\in \mathcal{H}^{W^i}_i}
\mathbb{E}_{P^i}\left[
u^i\left( X^i+\int_{0}^{T}H_t\,dS^i_t\right)
\right],
\end{equation}
where the portfolio process $H$ is $\mathbb{R}^{d_i}$-valued and belongs to a suitable class $\mathcal{H}^{W^i}_i$
of admissible integrands, defined below, where the random variable $W^i$ controls the trading losses.
Finally, $X^i$ is an $\mathcal{F}^i$-measurable random variable representing the random endowment.
In order for \eqref{primal2} to be well posed and to admit a dual representation, it is necessary to identify
a suitable ambient space $L_i$ in which the random endowment (and, later on, the relevant exchange variables)
are required to live, together with an appropriate choice of the admissible class $\mathcal{H}^{W^i}_i$.

We now briefly recall a well-established approach to this problem, developed for instance in
\cite{BF05,BFG11,BF08}, where all details can be found.

\textbf{Admissible strategies}\\
Given a fixed non--negative random variable $W^i\in L^0(\Omega,\mathcal{F}^i,\probp^i)$, the
domain of optimization for the primal problem \eqref{primal2} is the class of
$W^i$--\emph{admissible strategies}, defined as
\begin{equation}
\mathcal{H}^{W^i}_i:=\left\{ H\in L(S^i)\mid \exists \alpha>0\text{\mbox{ such that }}%
\int_{0}^{t}H_{s}\mathrm{d}S^i_{s}\geq - \alpha W^i, \, \forall t\in \mathcal T \right\} ,
\label{HW}
\end{equation}%
where $L(S^i)$ denotes the class of $\mathbb{F}^i$-predictable, $S^i$-integrable processes. In
other words, the random variable $W^i$ acts as a loss bound, uniform in time, for the trading
gains process $\int H\,\mathrm{d}S^i$.

It is immediate that, if $W^i$ is bounded, the class $\mathcal H^{W^i}_i$ coincides with $\mathcal H^{1}_i$. 
Indeed, in this case the loss bound induced by $W^i$ is equivalent to a constant bound, so that one may take $W^i=1$ without loss of generality.

The next step in applying convex duality methods to \eqref{primal2} is to
reformulate the problem as an optimization over random variables, rather than
over stochastic processes. To this end, we introduce the convex cone of
terminal gains attainable by $W^i$--admissible strategies
\begin{equation}
K^{W^i}_i=\left\{ \int_0^T H_t \mathrm{d} S^i_t\mid H\in \mathcal{H}^{W^i}_i\right\}.
\label{k_def}
\end{equation}%
Thus $K_i^{W^i}\subseteq L^0(\Omega,\mathcal{F}^i,\probp^i)$ is a convex cone such that,
for every $k\in K_i^{W^i}$, there exists a constant $\alpha\geq 0$ satisfying
$k\geq -\alpha W^i$ a.s., as in item~3 of Assumption~\ref{assOK}.

To obtain a meaningful utility maximization problem, the loss control
variable $W^i$ is required to satisfy two additional conditions. The first one
depends only on the price process $S^i$ and ensures that the class of
$W^i$--admissible strategies is sufficiently rich for trading purposes:

\begin{definition}[\cite{BF05}, Definition 2]
\label{viability} A random variable $W^i\in L^0(\Omega,\mathcal{F}^i,\probp^i)$, $W^i\geq 1$ a.s., is \emph{suitable} (for the
process $S^i$) if there exists a process $H\in
L(S^i)$ such that 
\begin{equation}
P^i(\{\omega \mid \exists t\geq 0\ \text{\mbox{such that }}H_{t}(\omega
)=0\})=0  \label{suit1}
\end{equation}%
and 
\begin{equation}
\left\vert \int_{0}^{t}H_s\mathrm{d}S^i_s \right\vert \leq W^i,\quad \forall t\in
\mathcal T.  \label{suit2}
\end{equation}
\end{definition}

The second requirement involves only the utility function and quantifies the
extent to which the investor is willing to tolerate large trading losses.

\begin{definition}[\cite{BF05}, Definition 3]
\label{compatibility}A positive random variable $W^i\in L^0(\Omega,\mathcal{F}^i,\probp^i)$ is \emph{\ strongly
compatible} with the utility function $u$ if 
\begin{equation}
E_{P^i}[u^i(-\alpha W^i)]>-\infty \mbox{ for all }\alpha >0.  \label{compatible}
\end{equation}

\end{definition}

For $\widehat{u^i}(x):=u^i(0)-u^i(-\abs{x})$, the space
$M^{\widehat{u}^i}(\Omega,\mathcal{F}^i,\probp^i)$ denotes the Orlicz Heart
associated with the Young function $\widehat{u^i}$ (see Appendix \ref{approlicz}). Thus the condition of being strongly compatible can be equivalently formulated as: 
$$W^i \in M^{\widehat{u}^i}(\Omega,\mathcal{F}^i,\probp^i). $$
\begin{remark}\label{remW1}
 When $S^i$ is locally bounded, the choice $W^i=1$ is automatically suitable and
strongly compatible (see \cite[Proposition 1]{BF05}). In this case, one
recovers the classical notion of admissible trading strategies, namely those
whose wealth process is bounded from below by a constant.   
\end{remark}

\begin{remark}\label{remW}
The choice $W^i=1$ leads to take as the ambient space $L_i=L^{\infty}(\Omega,\mathcal{F}^i,\probp^i)$.
However, in the particular case of a discrete-time market, as soon as the jumps of the process $S^i$ are unbounded then $S^i$
fails to be locally bounded. Consequently, selecting $W^i=1$ (and hence
$L=L^{\infty}(\Omega,\mathcal{F}^i,\probp^i)$) would yield a trivial optimization problem, since in this case
$K^{W^i}_i=\{0\}$ for $W^i=1$. See for instance Examples 1, 3.1, 3.2 in \cite{BF05} or Examples 4, Sec.5 in \cite{BF08}.
\end{remark}

In the non locally bounded setting, the existence of a suitable and strongly compatible
loss variable is no longer automatic and is typically linked to integrability
conditions on the jumps of the process $S^i$. We refer to the cited
references for further discussion.

 \begin{assumption}\label{ass12}    
For each $i$, there exists $W^i \in M^{\widehat{u}^i}(\Omega,\mathcal F^i,\probp^i)$  suitable
with $S^i$ and we adopt the
corresponding class of $W^i$-admissible strategies associated with this loss bound.

 \end{assumption}
 
For the loss bound $W^i$ satisfying Assumption \ref{ass12}, we shall distinguish two cases:
\begin{stupa}
    $W^i \in L^{\infty} (\Omega,\mathcal{F}^i,\probp^i)$. In this case w.l.o.g. we select $W^i=1$ and choose $L_i=L^{\infty}(\Omega,\mathcal F^i,P^i)$ and  $L^*_i=L^{1}(\Omega,\mathcal F^i,P^i)$ and adopt the dual pair $(L_i,L^*_i)$.
\end{stupa}
\begin{stupb}
    $W^i \notin L^{\infty}(\Omega,\mathcal F^i,P^i)$. In this case we select $L_i=M^{\widehat{u}^i}(\Omega,\mathcal F^i,P^i)$ and  $L^*_i=L^{\widehat{\Phi}}(\Omega,\mathcal F^i,P^i)$ and adopt the dual pair $(L_i,L^*_i)$.
\end{stupb}

In both setups and by taking
 $$\mathcal K_i:=K^{W^i}_i,$$ 
 as the convex cone of admissible payoffs,
 we obtain that \textbf{Assumption \ref{assOK} is satisfied}
and we may apply the results of Section \ref{secharbenefits}.\\

As the cone $\mathcal K_i=K^{W^i}_i$ is not in general contained in $L_i$, we introduce the convex cone $\mathcal C_i \subseteq L_i$ defined as
\begin{equation}\label{CiOK} 
\mathcal C_i:=\mathcal C^{W^i}_i:=((K^{W^i}_i-L_{+}^{0}(\Omega,\mathcal{F}^i,\probp^i))\cap L_i.
\end{equation}
 Under Assumption \ref{ass12}, the Lemma \ref{l2} guarantees that for every $X^i\in L_i$
\begin{align*}
U^i_{W^i}(X^i;0):=\sup_{k\in K^{W^i}_i=\mathcal K_i}E_{P^i}\left[ u^i\left( X^i+k\right) \right]
=\sup_{f\in \mathcal C^{W^i}_i=\mathcal C_i}E_{P^i}\left[ u^i\left( X^i+f\right) \right].
\end{align*}
We have thus rewritten our original problem \eqref{primal2} in a way consistent with our set up in Section \ref{secmainres}. Recalling that $\Phi^i$ is the convex conjugate of $u^i$, from Proposition \ref{corgatbis} we finally obtain that under Assumption \ref{ass12} 

\begin{align}
U^i_{W^i}(X^i;0)=&\sup_{f\in\mathcal{C}_i}E_{P^i}[u^i(X^i+f)]
= \min_{\lambda >0} \min_{\substack{  \probq\ll\probp^i,\\ \rn{Q}{P^i} \in \mathcal{C}_i^{0}}}
\Bigg (\lambda E_{Q}[X^i]+E_P\bigg[\Phi^i\Big(\lambda \rn{\probq}{\probp^i}\Big)\bigg] \Bigg )\label{dual13}
 \\ 
=&
\lambda_{X^i}E_{Q^i_{X^i}}[X^i]+E_P\bigg[\Phi^i\Big(\lambda_{X^i} \rn{\probq_{X^i}}{\probp^i}\Big)\bigg],\label{dual14}
\end{align}
for a unique pair $(\lambda_{X^i},\probq^i_{X^i})$ of minimizers, with $\lambda_{X^i}>0$,   $\probq^i_{X^i} \ll\probp^i$ and such that $\rn{Q^i_{X^i}}{P^i} \in \mathcal{C}_i^{0}$, the polar cone of $\mathcal C_i$ in the dual pair $(L_i,L^*_i)$. Moreover, the value $U_{W^i}
(X^i;0)$ does not depend on which $W^i$ satisfying Assumption \ref{ass12} is selected, see \cite{BF05} Theorem 1, item b). 

 It remains to specify the properties of the dual elements $\probq^i_{X^i}\ll\probp^i$ such that $\rn{Q^i_{X^i}}{P^i} \in \mathcal{C}_i^{0}$. 
Recall that an $\R^d$-valued semimartingale $S^i$ is called a sigma-martingale if there exists a predictable process $\phi^i$ taking values in $(0,+\infty)$ such that the $\R^d$-valued stochastic integral $(\phi \cdot S^i)$ is a martingale. 
Let us define\\
\begin{align*}
    M_{\sigma}(S^i)&:=\{Q \ll P^i \mid S^i \text{ is a } \mathbb{F}^i\text{-sigma }  \text{martingale under Q} \},\\
    M_{e,\sigma}(S^i)&:=\{Q \sim P^i \mid Q \in M_{\sigma}(S^i)\},\\
    \mathcal P_{\Phi^i}(P^i)&:=\Big \{Q \ll P^i \mid  E_{P^i}\left[ \Phi ^i\left( \lambda \rn{Q}{P^i}\right) \right]<+ \infty \text{ for some } \lambda >0 \Big \}.
\end{align*}

We also recall that when $S^i$ is bounded (resp. locally bounded) then $M_{\sigma}(S^i)$ coincides with the set $M(S^i)$ (resp. $M_{loc}(S^i)$) of martingale (resp. local martingale) measures.\\

\begin{remark}
The set $M_{\sigma}(S^i)$ also depends on the underlying filtration $\mathbb{F}^i$, although this dependence is not made explicit in the notation. In particular, even if two agents have access to the same price processes, say $S^i = S^j$, the corresponding sets $M_{\sigma}(S^i)$ and $M_{\sigma}(S^j)$ may differ whenever their information structures are different, i.e., $\mathbb{F}^i \neq \mathbb{F}^j$.
\end{remark}

Let $W^i$ satisfy Assumption \ref{ass12}. As shown in \cite{BF08}, Remark 18 and Proposition 19 (d), in both setup \textbf{A} and \textbf{B}, the normalized polar 
of $\mathcal C_i$, for $\mathcal C_i=C^{W^i}_i$, is contained in $(L^*_i)_+$ and is given by 
\begin{eqnarray*}
    \Big \{\rn{Q}{P^i} \in L^{*}_i \mid E_Q[f] \leq 0 \, \forall f \in \mathcal C_i \Big \}
    &=&\Big \{\rn{Q}{P^i} \in L^{*}_i \mid E_Q[k] \leq 0 \, \forall k \in \mathcal K_i \Big \}\\
    &=&\Big \{\rn{Q}{P} \in L^{*} \mid Q \in M_{\sigma}(S^i) \Big \}.
\end{eqnarray*}
Thus the unique minimizers $\lambda _{X^i}$ and $Q^i_{X^i} $ in \eqref{dual14}  satisfy $\lambda _{X^i}>0$ and $Q^i_{X^i} \in {M}_{\sigma}(S^i) $ with $\rn{Q^i_{X^i}}{P^i} \in L^{*}_i$.

We summarize in the following proposition what we have just proved.  Notice that the last assertion follows from a standard argument: see, for instance, Theorem 2 \cite{F00} and Remark 2 in \cite{BF05}).

\begin{proposition}\label{propMs} Under assumption \ref{ass12}, we consider both setup \textbf{A} or \textbf{B} and take $\mathcal K_i:=K^{W^i}_i$. Then Assumption \ref{assOK} and \eqref{dual13} and \eqref{dual14} hold true and the unique minimizer $Q^i_{X^i}$ is a sigma martingale measure for $S^i$, namely $Q^i_{X^i} \in {M}_{\sigma}(S^i) $, and satisfies $\rn{Q^i_{X^i}}{P^i} \in L^{*}_i$.
If additionally $M_{e,\sigma} (S^i)\cap \mathcal P_{\Phi^i}(P^i) \neq \emptyset$, then $Q^i_{X^i} \in {M}_{e,\sigma}(S^i) $, namely the optimizer $Q^i_{X^i} $ is additionally equivalent to $P^i$.
\end{proposition}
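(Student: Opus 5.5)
The first part of Proposition \ref{propMs} follows by assembling what precedes it, and I would organize the proof in three steps.

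\textbf{Step 1: Assumption \ref{assOK} holds in both setups.}
In Setup A, $L_i=L^\infty$ is a solid Banach lattice containing $L^\infty$, with Köthe dual $L^1$. Moreover $E_{P^i}[u^i(X)]>-\infty$ for bounded $X$, and $W^i=1$ bounds the losses.

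In Setup B, $L_i=M^{\widehat u^i}$ is a solid Banach lattice containing $L^\infty$, with Köthe dual $L^{\widehat\Phi^i}$ (Appendix \ref{approlicz}). For $X\in M^{\widehat u^i}$ we have $u^i(X)\ge u^i(-|X|)=u^i(0)-\widehat u^i(|X|)$, which is integrable, so $E_{P^i}[u^i(X)]>-\infty$. Assumption \ref{ass12} gives $W^i\in M^{\widehat u^i}$ with $W^i\ge 1$, and every $k\in K^{W^i}_i$ satisfies $k\ge-\alpha W^i$ by the definition \eqref{HW}. Hence item 3 of Assumption \ref{assOK} holds in both setups.

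\textbf{Step 2: the dual representation and the sigma-martingale property.}
Lemma \ref{l2} identifies the primal problem over $\mathcal K_i$ with the problem over $\mathcal C_i$. Proposition \ref{corgatbis} then yields \eqref{dual13}--\eqref{dual14}, together with the unique minimizers $(\lambda_{X^i},Q^i_{X^i})$ satisfying $\rn{Q^i_{X^i}}{P^i}\in\mathcal C_i^0\subseteq L_i^*$. This is under the Standing Assumption $U^i(X^i;0)<u^i(+\infty)$.

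The identification of the normalized polar with $\{Q\in M_\sigma(S^i):\rn{Q}{P^i}\in L_i^*\}$ is quoted from \cite{BF08}, Remark 18 and Proposition 19(d). It gives $Q^i_{X^i}\in M_\sigma(S^i)$.

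\textbf{Step 3: equivalence under $M_{e,\sigma}(S^i)\cap\mathcal P_{\Phi^i}(P^i)\neq\emptyset$.}
I would use the standard perturbation argument, which is the main obstacle. Fix $Q_e$ in this intersection and $\lambda_e>0$ with $E_{P^i}[\Phi^i(\lambda_e\,\rn{Q_e}{P^i})]<\infty$. Set $Z:=\lambda_{X^i}\rn{Q^i_{X^i}}{P^i}$ and $Z_e:=\lambda_e\rn{Q_e}{P^i}$, and for $\varepsilon\in(0,1)$ define the convex combination $Z_\varepsilon:=(1-\varepsilon)Z+\varepsilon Z_e$.

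First I need $Z_\varepsilon$ to be admissible in the dual problem. Its normalization is a sigma-martingale measure, because $M_\sigma(S^i)$ is convex. I also need $\rn{Q_e}{P^i}\in L_i^*$. In Setup A this is automatic since $L_i^*=L^1$. In Setup B it follows from finiteness of $E[\Phi^i(\lambda_e\,\cdot)]$, which controls the $\widehat\Phi^i$-Orlicz norm.

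Next I use that $Z$ minimizes the convex dual functional $J(Z):=E_{P^i}[ZX^i]+E_{P^i}[\Phi^i(Z)]$. Hence $J(Z_\varepsilon)-J(Z)\ge0$ for every $\varepsilon$. Dividing by $\varepsilon$ and letting $\varepsilon\downarrow0$ gives, by monotone convergence of the convex difference quotients,
\[ E_{P^i}\big[(Z_e-Z)X^i\big]+E_{P^i}\big[(\Phi^i)'(Z)(Z_e-Z)\big]\ge 0 . \]
The Inada condition $u^{i\prime}(-\infty)=+\infty$ gives $(\Phi^i)'(0^+)=-\infty$. On the event $\{Z=0\}$ we have $Z_e>0$ a.s., so the integrand equals $-\infty$ there. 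If $P^i(Z=0)>0$, the left-hand side would therefore be $-\infty$, which is a contradiction. Hence $Z>0$ a.s., that is $Q^i_{X^i}\sim P^i$.

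The care needed in this step is integrability: the difference quotients must be dominated from above, which uses the finiteness of $J(Z)$ and $J(Z_e)$, and $E_{P^i}[Z_eX^i]$ must be finite, which uses $Z_e\in L_i^*$. This is exactly the argument of \cite{F00}, Theorem 2 and \cite{BF05}, Remark 2, which I would follow.
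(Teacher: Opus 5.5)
Your Steps 1 and 2 coincide with the paper. The paper presents the proposition as a summary of Lemma~\ref{l2}, Proposition~\ref{corgatbis} and the polar identification of \cite{BF08}, and delegates the equivalence to \cite{F00}, Theorem 2, and \cite{BF05}, Remark 2.

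\textbf{The gap is in Step 3.} You claim that the Inada condition $u^{i\prime}(-\infty)=+\infty$ gives $(\Phi^i)'(0^+)=-\infty$. This is wrong: that condition governs $\Phi^i$ near $+\infty$, not near $0$. By Lemma~\ref{lemmauphi}, $(\Phi^i)'(y)=-x^*(y)$ with $u^{i\prime}(x^*(y))=y$. Hence $(\Phi^i)'(0^+)=-a$, where $a=\inf\{x : u^i(x)=u^i(+\infty)\}$. The Standing Assumptions explicitly allow $a<+\infty$; the quadratic utility of Section~\ref{secexamplesdiscr} has $a=0$. In that case your integrand on $\{Z=0\}$ is the finite quantity $Z_e(X^i-a)$, and the first-order inequality yields no contradiction.

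\textbf{The conclusion itself can fail when $a<+\infty$, so no repair is possible in that generality.} Take one period with $\Omega=\{\omega_1,\omega_2,\omega_3\}$, $P=(\frac1{10},\frac9{20},\frac9{20})$, $\Delta S=(-1,1,2)$, $u(x)=-x^2\mathbf 1_{\{x\le 0\}}$ and $X=x<0$ (Setup A, $W=1$).
\begin{itemize}
\item Here $\Phi(y)=y^2/4$ on $[0,\infty)$. Minimizing first in $\lambda$ turns \eqref{dual11} into minimizing $E_P[(\rn{Q}{P})^2]$ over martingale measures.
\item The martingale densities are $Z_s=(5+\frac94 s,\ \frac{10}{9}-\frac32 s,\ s)$ for $s\in[0,\frac{20}{27}]$.
\item The map $s\mapsto E_P[Z_s^2]$ is convex with derivative $\frac34>0$ at $s=0$. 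So $Q_X=(\frac12,\frac12,0)$, which is not equivalent to $P$.
\item Yet every $s\in(0,\frac{20}{27})$ gives an equivalent martingale measure, and it lies in $\mathcal P_{\Phi}(P)$ because $\Omega$ is finite.
\end{itemize}

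\textbf{What is needed.} The second assertion requires the extra hypothesis $a=+\infty$, i.e.\ $u^i$ strictly increasing. This is the framework of \cite{F00} and \cite{BF05}, so the paper's citation tacitly relies on it as well. Under that hypothesis your perturbation argument is correct once you split two cases:
\begin{itemize}
\item If $u^i(+\infty)<\infty$, then $\Phi^i(0)$ is finite and the Inada condition at $+\infty$ gives $(\Phi^i)'(0^+)=-\infty$.
\item If $u^i(+\infty)=+\infty$, then $\Phi^i(0)=+\infty$, so your difference quotients are undefined on $\{Z=0\}$. In this case finiteness of $E_{P^i}[\Phi^i(Z)]$ alone already forces $Z>0$ a.s.
\end{itemize}
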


\section{Benefits from Collective Free Lunches or Arbitrages}
\label{Benefits}
\label{BenefitsfromCFL1}

The main purpose of this section is to interpret Theorem~\ref{corben} in a setting where multiple agents, in addition to their individual investment opportunities, may cooperate through risk sharing or mutually beneficial exchanges.\\
Generally speaking, arbitrage opportunities are expected to generate beneficial exchanges. In the collective framework, such opportunities are formalized through the notions of Collective Arbitrage $\mathbf{CA}(\mcY)$ and Collective Free Lunch $\mathbf{CFL}(\mcY)$. 
For the financial motivation behind the notion of Collective Arbitrage, as well as for the related concept of Collective Free Lunch, we refer the reader to the cited references, where these notions are introduced and their economic significance and relevance is discussed in detail.

The fact that the conditions of Collective Free Lunch $\mathbf{CFL}(\mcY)$ and Collective Arbitrage $\mathbf{CA}(\mcY)$ —which are weaker than the corresponding notions of individual Free Lunch and Arbitrage—imply the existence of beneficial exchanges is formalized in Propositions  \ref{NAben}, \ref{propIncreasingTerdiscr}.

A noteworthy feature of the theory is that beneficial exchanges may still arise even in the absence of such opportunities, i.e., under $\mathbf{NCA}(\mcY)$ or $\mathbf{NCFL}(\mcY)$. This phenomenon is illustrated in Corollaries~\ref{corbenbis} and~\ref{corbentris}, as well as in the examples of Section \ref{secexamplesdiscr}

The framework introduced below will be used in the two following subsections, where the analysis is carried out in discrete and continuous time, respectively.

In the continuous-time setting, we assume that the price processes $S^i$ are locally bounded. Under this assumption, it is natural to work with local martingale measures, since
\begin{align*}
M_{\sigma}(S^i)
:&= \{Q \ll P^i \mid S^i \text{ is an } \mathbb{F}^i\text{-sigma martingale under } Q \} \\
&= \{Q \ll P^i \mid S^i \text{ is an } \mathbb{F}^i\text{-local martingale under } Q \}
=: M_{\mathrm{loc}}(S^i).
\end{align*}

In discrete time, no additional assumptions on $S^i$ are required: it is well known (see Proposition~\ref{propMsigma} Item 1) that $S^i$ is a $\sigma$-martingale if and only if it is a local martingale. Consequently, in this case as well,
\[
M_{\sigma}(S^i) = M_{\mathrm{loc}}(S^i).
\]

Therefore, throughout this section we shall work with the set $M_{\mathrm{loc}}(S^i)$ in place of $M_{\sigma}(S^i)$ introduced in Section~2.\\

\noindent \textbf{The segmented financial Market}

We continue to work under Assumption \ref{assNew}.
We postulate that each agent $i$, $i=1,\dots,N$, has preferences represented by expected utility assigned by $(u^i,P^i)$ and satisfying the Standing Assumptions and we denote by $\fib=(\fit)_{t\in\mathcal{T}}\subseteq \mathbb F$ the filtration representing the information available to agent $i$ with  $\mathcal{F}^i=\mathcal{F}^i_T$ and $\mathcal F^i_0$ the trivial sigma-algebra. 
We consider a global financial market consisting of one riskless asset and $J$ risky assets. We assume that each agent has access only to a given subset of the available risky assets. In particular, we do not exclude that different agents may invest in common assets, nor that some agents (or all agents) may have access to the entire set of traded assets.
For each agent $i$, let $d_i$ denote the number of risky assets available to that agent, and assume that
\[
1 \leq d_i \leq J \quad \text{for all } i.
\]
To model this segmented market structure, let $S^0_t = 1$ for all $t \in \mathcal T$ denote the riskless asset, and let
\[
S^i = (S^i_t)_{t \in \mathcal T}
\]
be the (discounted) price dynamics of the $d_i$ risky assets accessible to agent $i$ (and satisfying Assumption \ref{assNew}).
When needed, we will denote with $\mathbf S$ the $J$ risky assets available in the global financial market.\\
For a detailed and exhaustive motivation behind the segmented market setup the reader may consult \cite{Carassus23}, \cite{DFM25} or \cite{BDFFM25}.

\begin{remark}
We assumed that the processes $S^i$ are adapted to the filtration $\mathbb F^{\,i}$. 
This baseline requirement is flexible enough to accommodate additional structural relations among the
filtrations that may arise in specific applications. For instance, one may impose (for some or all pairs
$i,j$) that $\mathbb F^{\,i}=\mathbb F^{\,j}$, or only that the terminal $\sigma$--algebras coincide,
$\mathcal F^{\,i}_T=\mathcal F^{\,j}_T$. One may also posit the existence of a common subfiltration
$\mathbb G$ shared by all agents, e.g.\ $\mathbb G\subseteq \mathbb F^{\,i}$ for every $i$, where $\mathbb G$
could represent publicly available information. Alternatively, one could assume the existence of a common terminal-time sub $\sigma$-algebra $\mathcal G_T$ shared by all agents, e.g.\ $\mathcal G_T\subseteq \mathcal F_T^{\,i}$ for every $i$, generated for example by the exchange family
$\mathcal Y$.
Regardless of which additional relations between the filtrations are assumed, the only standing
mathematical conditions needed for the results below are:
$S^i$ is $\mathbb F^{\,i}$--adapted for each $i$ and $\mathbb F^{\,i}\subseteq \mathbb F$.
\end{remark}

\begin{notation}
\label{productnotation}
Let $\mathbf P=(P^1,\dots,P^N) \in \mathcal (P_{e})^N $ and $p \in [0,\infty]$. We denote with $L^{bb}(\Omega,\mathcal F^i,P^i)$ the cone of random variables in $L^{0}(\Omega,\mathcal F^i,P^i)$ that are a.s. bounded from below by a constant.
We set 
$$L^{p }(\Omega, \mathbf{F}_t,\mathbf P):=L^{p }(\Omega, \mathcal{F}_t^{1},P^1) \times \dots \times  L^{p }(\Omega, \mathcal{F}_t^{N},P^N), \quad t\in \mathcal T .$$ 
When we consider a single probability measure $Q \sim P$ on $\mathcal F$, we use  $$L^{p }(\Omega, \mathbf{F}_t,Q):=L^{p }(\Omega, \mathcal{F}_t^{1},Q) \times \dots \times  L^{p }(\Omega, \mathcal{F}_t^{N},Q),\quad t\in \mathcal T .$$
Similarly for $L^{bb}(\Omega,\mathbf F_t,Q)$ and $L^{bb}(\Omega,\mathbf F_t,\mathbf P)$.
\end{notation}

\subsection{The discrete time framework}
\label{secsettingdiscrete}


In addition to the segmented market described in Section \ref{Benefits} in this section we assume that $$\mathcal T=\{0, 1,\dots, T\}$$ is the finite set of trading times.  
In such discrete time setting the discounted price process $ S^i = (S^i_t)_{t \in \{0, 1,\dots, T\}}$
is an $\mathbb{R}^{d_i}$-valued $\mathbb{F}^i$-adapted stochastic process on
$
(\Omega, \mathcal{F}^i, \mathbb{F}^i := (\mathcal{F}^i_t)_{t \in \{0, 1,\dots, T\} }, \Pp)$, with $\mathcal{F}^i=\mathcal{F}^i_T$ and $\mathcal F^i_0$ the trivial sigma-algebra.

The class $ \mathbb H^i$ of admissible trading strategies for the agent $i$ is simply the class of $d_i$ - dimensional $\mathbb F^i$-predictable processes: 
\begin{equation}\label{HH}
\mathbb H^i:=\{H \text{ is } d_i\text{-dimensional } 
\mathbb F^i\text{-predictable process on }  (\Omega, \mathcal{F}^i, \mathbb{F}^i := (\mathcal{F}^i_t)_{t \in \{ 1,\dots, T\} }, \Pp) \}.    
\end{equation}
The stochastic integral with respect to such integrand $ H \in \mathbb H^i$ becomes the finite sum 
\begin{equation*}\int_{0}^{t}H_{s}\mathrm{d}S^i_{s}=(H \cdot S^i)_t:=
\sum_{s=1 }^t H_s(S^i_s-S^i_{s-1}), \, \, t \in \{1,\dots,T\},  
\end{equation*} 
 and consequently the space of terminal payoffs with zero initial cost is
\begin{equation}
\label{def:KiBis}
\mathbb K_i:=\{(H\cdot S^i)_T \mid H \in {\mathbb H^i} \} \subseteq L^{0 }(\Omega, \mathcal{F}^{i},P^i).
\end{equation}
The set of martingale and local martingale measures for  $S^i$ are defined by
\begin{equation*}
M(S^i):=\left\{ Q \in \mathcal{P}_{ac} \mid     S^i  \text { is a }  \fib \text {-martingale under } Q \right\},
\end{equation*}
\begin{equation*}
M_{\mathrm{loc}}(S^i):=\left\{ Q \in \mathcal{P}_{ac} \mid     S^i  \text { is a }  \fib \text {-local martingale under } Q \right\},
\end{equation*}
and we set $M_e(S^i):=\mathcal P_e \cap M(S^i) $ and $M_{e,\mathrm{loc}}(S^i):=\mathcal P_e \cap M_{\mathrm{loc}}(S^i)$.
The classical No Arbitrage condition for agent $i$ holds if:
\begin{equation*}
\label{eq:NAi}
 \mathbf{NA}_{i}: \text{  } \mathbb K_i \cap L_{+}^{0}(\Omega, \mathcal{F}^i , P^i)=\{0\}.
 \end{equation*}
 The $\mathbf{NA}_{i}$ condition implies that the set $(\mathbb K_i - L_{+}^{0}(\Omega, \mathcal{F}^i , P^i))$ is closed in $L^{0}(\Omega, \mathcal{F}^i , P^i)$ (see \cite{DS2006} Theorem 6.9.2), an essential property for the proof of both the first FTAP and the super-hedging duality.
\begin{theorem}[Dalang, Morton and Willinger (1990) \cite{DMW90}]\label{DMW}
In the discrete time setting described above, for any $i$ and if $S^i$ is integrable under $P^i$, it holds 
\begin{equation*}
 \mathbf{NA}_{i} \;\Longleftrightarrow\;  M_e(S^i) \not = \emptyset .
\end{equation*}
\end{theorem}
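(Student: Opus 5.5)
The plan is to prove the two implications separately. The direction ``$\Leftarrow$'' is a martingale-transform argument. The direction ``$\Rightarrow$'' rests on closedness of $\mathbb K_i - L^0_+$ together with a Kreps--Yan separation and exhaustion argument.

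\emph{Step 1 ($M_e(S^i)\neq\emptyset \Rightarrow \mathbf{NA}_i$).} Fix $Q\in M_e(S^i)$ and $k=(H\cdot S^i)_T\in\mathbb K_i\cap L^0_+$. Since $H$ is only predictable and not necessarily bounded, $H\cdot S^i$ is a priori only a $Q$-local martingale (a martingale transform). I will show by backward induction on $t$ that $(H\cdot S^i)_t\ge 0$ a.s. for every $t$. The tool is the generalized conditional expectation: if $V_t=V_{t-1}+H_t\Delta S^i_t\ge 0$, then $E_Q[H_t\Delta S^i_t\mid\mathcal F^i_{t-1}]=0$ in the generalized sense, which forces $V_{t-1}\ge 0$. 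Equivalently, I can quote the standard fact that a discrete-time local martingale bounded below at the terminal time is a true martingale. Hence $E_Q[k]=0$. Since $k\ge 0$ and $Q\sim P^i$, this gives $k=0$ a.s.

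\emph{Step 2 (reduction to an integrable setting for ``$\Rightarrow$'').} Assume $\mathbf{NA}_i$ and set $\mathcal C:=\mathbb K_i-L^0_+(\Omega,\mathcal F^i,P^i)$. By the result quoted just before the theorem (\cite{DS2006}, Theorem 6.9.2), $\mathcal C$ is closed in $L^0$. I will change to $\widetilde P\sim P^i$ with bounded density
\[
\frac{d\widetilde P}{dP^i}=c\,\exp\Big(-\sum_{t}|S^i_t|\Big),
\]
so that every $S^i_t$ lies in $L^1(\widetilde P)$. (This step is harmless even though integrability under $P^i$ is already assumed.) Neither $\mathbf{NA}_i$ nor closedness in $L^0$ depends on the equivalent measure. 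Hence $\mathcal C_1:=\mathcal C\cap L^1(\widetilde P)$ is a convex cone, closed in $L^1(\widetilde P)$, with $\mathcal C_1\cap L^1_+=\{0\}$.

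\emph{Step 3 (separation and exhaustion).} For each $A\in\mathcal F^i$ with $P^i(A)>0$ we have $1_A\notin\mathcal C_1$. Hahn--Banach then yields $Z_A\in L^\infty(\widetilde P)$ with $E_{\widetilde P}[Z_A f]\le 0$ for all $f\in\mathcal C_1$ and $E_{\widetilde P}[Z_A1_A]>0$. Because $-L^1_+\subseteq\mathcal C_1$, we get $Z_A\ge 0$, and we may normalize $\|Z_A\|_\infty\le 1$. Next consider the family of sets $\{Z>0\}$, where $Z$ ranges over countable convex combinations $\sum 2^{-n}Z_{A_n}$. Take one of maximal $\widetilde P$-measure. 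If its complement had positive measure, adding $Z_{A}$ for $A$ equal to that complement would enlarge it, which is a contradiction. This produces $Z>0$ a.s. with $E_{\widetilde P}[Zf]\le 0$ on $\mathcal C_1$. Define $dQ:=Z\,d\widetilde P/E_{\widetilde P}[Z]$. Then $Q\sim P^i$, and $S^i$ is $Q$-integrable since $Z$ is bounded.

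\emph{Step 4 (martingale property).} For $t\ge 1$, $B\in\mathcal F^i_{t-1}$ and any coordinate $j$, the strategy $H=\pm 1_B e_j 1_{\{t\}}$ gives $\pm1_B\Delta S^{i,j}_t\in\mathcal C_1$. Hence $E_Q[1_B\Delta S^i_t]=0$, that is, $Q\in M_e(S^i)$.

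The main obstacle is the closedness of $\mathbb K_i-L^0_+$ in $L^0$ under $\mathbf{NA}_i$; everything else is routine. This is exactly the result cited from \cite{DS2006} immediately before the statement, so I will invoke it rather than reprove it. If it had to be reproved, I would proceed by induction on $T$, using measurable selection of random subsequences (Komlós-type arguments in $L^0$). The one-period case would be handled by projecting $H_t$ onto the orthogonal complement of the predictable null-space of $\Delta S^i_t$, so that $\mathbf{NA}_i$ forces boundedness of normalized strategies.
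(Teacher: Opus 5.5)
Your proof is correct. The paper gives no proof of its own here: it quotes \cite{DMW90}, and the sentence just before the statement singles out the $L^0$-closedness of $\mathbb K_i-L^0_+(\Omega,\mathcal F^i,P^i)$ from \cite{DS2006} as the essential ingredient. What you wrote is the closedness-plus-Kreps--Yan proof, as in \cite{DS2006}. It is not the original argument of \cite{DMW90}, which reduces to one-period problems and constructs the density period by period through measurable-selection arguments.

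Your route buys a short functional-analytic proof. It delivers $Q$ with bounded density $dQ/dP^i$. It also shows that the integrability of $S^i$ under $P^i$, assumed in the statement, is superfluous. Finally, its two-step scheme (closedness in $L^0$, then separation) is the one that extends to the collective setting of Theorem~\ref{FTAP:R}, whose statement records the analogous closedness of $\mathbb K^{\mcY}$. A separate period-by-period construction for each agent gives no handle on the coupling condition $\sum_i E_{Q^i}[Y^i]\le 0$.

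The price is that all the difficulty sits in the closedness theorem, which you take as a black box. This is legitimate here, since the paper quotes that result. Its standard proof uses no martingale measures: it is the induction on $T$ with random subsequences and the predictable-range projection that you sketch. So there is no circularity.

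In Step 3, state explicitly why a set $\{Z>0\}$ of maximal $\widetilde P$-measure exists. Since $\{\sum_n 2^{-n}Z_{A_n}>0\}=\bigcup_n\{Z_{A_n}>0\}$, this family of sets is closed under countable unions, so the supremum is attained. Also note that $E_{\widetilde P}[Zf]\le 0$ passes to such countable combinations by dominated convergence, because $\|Z_{A_n}\|_\infty\le 1$ and $f\in L^1(\widetilde P)$.
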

It is well-known that the integrability of $S^i$
  can be ensured by a change to an equivalent probability measure.
Similar definition and results holds for the global market $\mathbf S$ and we let $\mathbb K$ be the set of time $T$ stochastic integrals, with $\mathbb F$-predictable integrands, with  respect to all the assets  $\mathbf S$ and let $M(\mathbf S) $ be the set of martingale measures for the whole market $\mathbf S$ with respect to the  filtration $\mathbb F$.
We say that there is No Global Arbitrage, denoted with $\mathbf{NA}$, if there is no classical arbitrage in the whole market $\mathbf S$:
\begin{equation*}
 \mathbf{NA}: \mathbb K \cap L_{+}^{0}(\Omega, \mathcal{F} , \Pp)=\{0\}.
 \end{equation*}
When agent $i$ follows an investment strategy in its own market $S^i$, they will obtain a terminal payoff $k^i \in {\mathbb K}_i$.
The agents may also enter in the risk exchange corresponding to a vector $Y \in \mathcal Y $. This procedure leads to the terminal time value $k^i + Y^i$ for agent $i$.
In \cite{BDFFM25} a \emph{Collective Arbitrage} consists of vectors $(k^1, \dots , k^N) \in {\mathbb K}_1 \times \cdots \times {\mathbb K}_N$ and $Y =(Y^1, \dots, Y^N) \in \mcY$ satisfying
\begin{align*}
 k^i+Y^i\geq 0  \quad \text{a.s. } & \quad \forall \, i\in{1,\dots,N} \, \text{ and } 
\\ P^n(k^n+Y^n>0)>0   &\quad \text{for at least one } n \in {1,\dots,N}.
\end{align*}
Thus the absence of collective arbitrage is formulated as follows.
 \begin{definition}[No Collective Arbitrage, Def 3.1 \cite{BDFFM25}]
\label{NCA} 
No Collective Arbitrage for $\mathcal{Y}$ ($\mathbf{NCA}(\mathcal{Y})$) holds if
\begin{equation*}
({\sf X}_{i=1} ^{N} \mathbb K_i+\mathcal Y )\cap  L^{0 }_+(\Omega, \mathbf{F},\mathbf P)=\{0\}, \label{NCAY} 
 \end{equation*}
 where ${\sf X}_{i=1} ^{N} \mathbb K_i$ denotes the Cartesian product of the sets $\mathbb K_i$ defined in \eqref{def:KiBis}.
 \end{definition}

 As shown in Proposition 3.2  \cite{BDFFM25}, 
 \begin{equation}\label{3equi}
\mathbf{NCA}(\mathcal{Y}) \Leftrightarrow \mathbb K^\mathcal Y \cap  L^{0 }_+(\Omega, \mathbf{F} , \mathbf P)=\{0\} \Leftrightarrow \mathbb C^\mathcal Y \cap  L^{1 }_+(\Omega, \mathbf{F} , \mathbf P)=\{0\} ,  
\end{equation}
 where 
 \begin{equation}\label{KK}
      \mathbb K^\mathcal Y:=  {\sf X}_{i=1} ^{N} (\mathbb  K_i - L^{0 }_+(\Omega, \mathcal{F}^i,P^i) ) + \mathcal Y, \quad  \mathbb C^\mcY:=K^\mcY\cap L^1(\Omega,\mathbf{F},\mathbf P).
 \end{equation} 

The implications
\begin{equation}\label{Implications}
\mathbf{NA} \Rightarrow \mathbf{NCA(\mathcal{Y})} \Rightarrow \mathbf{NA}_{i} \; \forall\, i\in{1,\dots,N},
\end{equation}
essentially always hold: the former only requires $\mcY\subseteq \mcY_0$, the latter holds under the even weaker assumption that $0\in\mcY$. Moreover, it was shown in the aforementioned reference that
\begin{align}
     \mathbf{NA} & \Longleftrightarrow  \mathbf{NCA}(\mcY) \,\text{ if }\, \mcY=\mcY_0 \text{ and }\mathbb{F}^i=\mathbb{F}^j=\mathbb{F}, \text{ for all } i,j. \label{12}
   \\  \mathbf{NA}_{i} \text{ } \forall \, i & \Longleftrightarrow   \mathbf{NCA(\mcY)} \,\text{ if }\, \mcY=\mathbb R_0 ^N.\label{NAiNCA}
\end{align}
 However, the notions of $\mathbf{NCA}(\mathcal{Y})$ generate novel concepts for general sets $\mcY$.

\begin{remark}[Collective arbitrage vs.\ representative-agent arbitrage] 
A natural conjecture is that \emph{collective no–arbitrage} for many agents is equivalent to \emph{classical
no–arbitrage} for a suitably defined representative (social–planner) agent. Intuitively, one aggregates all
individual trading opportunities and internal zero–sum exchanges into a single “social” budget constraint and
asks whether any zero–cost position yields an a.s.\ nonnegative terminal payoff with strictly positive payoff on
a set of positive probability.\\
Under the structural conditions collected in~\eqref{12} (e.g., common 
information structure and a shared comprehensive cone of allowable exchanges), the existence of a \emph{common} separating price system that
prices \emph{all} agents’ admissible payoffs and exchanges is possible and $\mathbf{NCA}(\mcY)$ is equivalent to the absence of arbitrage for the representative agent; see
\cite{BDFFM25}, Section 6.2, for a precise statement and proof. Absent these alignment conditions, the two notions
need not coincide,
as the aggregation step is not innocuous.
A particularly transparent case is~\eqref{NAiNCA}, where the ``no–collective–arbitrage'' condition
$\mathbf{NCA}(\mathcal Y)$ is equivalent to $\mathbf{NA}_i$ \emph{for each agent $i$}. Here, the collective
feasible set reduces to the cartesian product of the individual feasible sets, and there is no additional
aggregate flexibility beyond what each agent can already do. In such a setting, identifying a representative
agent that exactly reproduces the joint (multi–agent) feasibility and pricing relations is, in general,
impossible. 
\end{remark}

From \cite{DFM25}, Definition 3, we recall the definition of the set  $\Me$ of \textit{equivalent collective martingale measures} and introduce the set $ \mathcal M_{e,\mathrm {loc}}$ of \textit{equivalent collective local martingale measures}. 

\begin{definition}

\begin{equation*}\label{MartingaleMeasures}
\begin{split}
    \Me=\bigg\{ Q=(Q^1,\dots,Q^N) \in {\sf X}_{i=1} ^{N} M_e(S^i)   \mid  \mcY \subseteq L^{1 }(\Omega, \mathbf{F},\mathbf Q)  \text { and  } \sum_{i=1}^N E_{Q^i}[Y^i]\leq 0 \text {  }\forall Y \in \mcY  \bigg\}.
\end{split}
\end{equation*}

\begin{equation}\label{MartingaleLocalMeasures}
\begin{split}
    \mathcal M_{e,\mathrm {loc}}=\bigg\{ Q=(Q^1,\dots,Q^N) \in {\sf X}_{i=1} ^{N} M_{e,\mathrm{loc}}(S^i)   \mid  \mcY \subseteq L^{1 }(\Omega, \mathbf{F},\mathbf Q)  \text { and  } \sum_{i=1}^N E_{Q^i}[Y^i]\leq 0 \text {  }\forall Y \in \mcY  \bigg\}.
\end{split}
\end{equation}

\end{definition}

\begin{theorem}[ 
\cite{DFM25} Theorems  8 and  12,]\label{FTAP:R}
    Let $\mcY $ be a finite dimensional  vector space such that  $\mathbb R^N_0 \subseteq \mcY\subseteq L^{\infty}(\Omega, \mathbf{F},\mathbf P)$. Then 
    \begin{equation*}
      \mathbf{NCA(\mathcal{Y})} \iff \Me \not = \emptyset 
    \end{equation*}
    and $\mathbf{NCA(\mathcal{Y})}$ implies that $\mathbb K^{\mcY} \text{ is closed in } L^{0}(\Omega, \mathbf{F},\Pp)$.
\end{theorem}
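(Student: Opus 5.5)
The equivalence splits into an easy direction ($\Me\neq\emptyset\Rightarrow\NCA$) and a hard one ($\NCA\Rightarrow\Me\neq\emptyset$). The hard direction goes through the closedness of $\mathbb K^{\mcY}$ in $L^0$, followed by a Kreps--Yan separation. Closedness is also the last assertion of the theorem, so I would establish it first.

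\emph{Easy direction.} Let $\mathbf Q\in\Me$ and let $k+Y\in({\sf X}_i\mathbb K_i+\mcY)\cap L^0_+(\Omega,\mathbf F,\mathbf P)$. Since $Y^i\in L^\infty$, each $k^i=(H^i\cdot S^i)_T$ is bounded from below. In discrete time, a stochastic integral with respect to a $Q^i$-martingale whose terminal value is bounded below is a $Q^i$-martingale (Ansel--Stricker / Jacod--Shiryaev), so $E_{Q^i}[k^i]=0$. Hence
\[
\sum_i E_{Q^i}[k^i+Y^i]=\sum_iE_{Q^i}[Y^i]\le 0 .
\]
Each summand $k^i+Y^i$ is nonnegative and $Q^i\sim P^i$, so $k^i+Y^i=0$ for every $i$. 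This proves $\NCA$.

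\emph{Closedness under $\NCA$.} By \eqref{Implications}, $\NCA$ gives $\mathbf{NA}_i$ for every $i$, so each $\mathbb K_i-L^0_+$ is closed in $L^0$ by the Dalang--Morton--Willinger closedness result. Write $\mathcal C:={\sf X}_i(\mathbb K_i-L^0_+)$. Let $\mathcal Y_1:=\mcY\cap{\sf X}_i\mathbb K_i$, which is a linear subspace because each $\mathbb K_i$ is a vector space in discrete time, and let $\mcY'$ be an algebraic complement of $\mathcal Y_1$ in $\mcY$. Then $\mathbb K^{\mcY}=\mathcal C+\mcY'$.

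Now take $c_n+Y_n\to f$ a.s. with $c_n\in\mathcal C$ and $Y_n\in\mcY'$. There are two cases.
\begin{itemize}
\item If $(Y_n)$ is bounded in the finite-dimensional space $\mcY'$, pass to a subsequence with $Y_n\to Y$. Then $c_n\to f-Y\in\mathcal C$ by closedness of $\mathcal C$, and so $f\in\mathbb K^{\mcY}$.
\item If $\|Y_n\|\to\infty$, divide by $\|Y_n\|$ and extract a subsequence to get $c_n/\|Y_n\|\to -\widehat Y$ with $\widehat Y\in\mcY'$ and $\|\widehat Y\|=1$. Thus $-\widehat Y=k-\ell$ with $k\in{\sf X}_i\mathbb K_i$ and $\ell\ge0$. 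Then $k+\widehat Y=\ell\ge 0$, and $\NCA$ forces $\ell=0$. Hence $\widehat Y=-k\in\mathcal Y_1\cap\mcY'=\{0\}$, which contradicts $\|\widehat Y\|=1$.
\end{itemize}
This is the step I expect to be the main obstacle. The delicate points are the measurable subsequence extraction in $L^0$ and the role of linearity of $\mathbb K_i$ in the complement construction; the rest is routine.

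\emph{Separation.} First replace each $P^i$ by an equivalent $\widetilde P^i$ with bounded density under which $S^i$ is integrable. Then $C^{\mcY}:=\mathbb K^{\mcY}\cap L^1(\Omega,\mathbf F,\widetilde{\mathbf P})$ is a closed convex cone in $L^1$ that contains $-L^1_+$, and $C^{\mcY}\cap L^1_+=\{0\}$ by \eqref{3equi}. The Kreps--Yan theorem, applied in the product space $L^1$ with dual $L^\infty$, gives $Z=(Z^1,\dots,Z^N)$ with $Z^i>0$ a.s., $Z^i\in L^\infty$, and $\sum_iE_{\widetilde P^i}[Z^if^i]\le0$ for all $f\in C^{\mcY}$.

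It remains to normalize and identify the measures.
\begin{itemize}
\item Testing against $x\in\R^N_0\subseteq\mcY$ and $-x$ gives $E[Z^i]=E[Z^j]$ for all $i,j$. So we may normalize $dQ^i:=Z^i\,d\widetilde P^i$ to be probabilities, and $Q^i\sim P^i$.
\item Testing with vectors that are zero except in the $i$-th coordinate, equal there to $\pm\mathbf 1_A(S^i_t-S^i_{t-1})$ with $A\in\mathcal F^i_{t-1}$ (these lie in $\mathbb K_i\cap L^1$), gives $Q^i\in M_e(S^i)$.
\item Testing with $Y\in\mcY\subseteq L^\infty$ gives $\sum_iE_{Q^i}[Y^i]\le0$.
\end{itemize}
Therefore $\mathbf Q\in\Me$.
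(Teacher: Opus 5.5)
Your proof is correct. The paper gives no proof of this statement; it is imported from \cite{DFM25} (Theorems 8 and 12). The only in-paper point of comparison is therefore the proof of Theorem~\ref{FTAP:RBis}, and its argument for $\mathcal M_{e,\mathrm{loc}}(\mcY)\neq\emptyset\Rightarrow\NCA$ is exactly your easy direction. There, $k^i\ge-\|Y^i\|_\infty$ makes the discrete-time stochastic integral a true $Q^i$-martingale (Proposition~\ref{propMsigma}). Nonnegative summands with nonpositive total expectation then force $k^i+Y^i=0$.

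Your hard direction is the standard route and is sound as written:
\begin{enumerate}
\item closedness of ${\sf X}_i(\mathbb K_i-L^0_+)$ from $\mathbf{NA}_i$;
\item absorbing $\mcY\cap{\sf X}_i\mathbb K_i$ into that cone, using that each $\mathbb K_i$ is a linear space in discrete time;
\item the bounded/unbounded dichotomy on the finite-dimensional complement, with $\NCA$ killing the normalized limit;
\item Kreps--Yan on $\mathbb K^{\mcY}\cap L^1(\widetilde{\mathbf P})$, followed by normalization through $\R^N_0$ and testing against $\pm\mathbf 1_A(S^i_t-S^i_{t-1})$ and against $\mcY$.
\end{enumerate}

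Two small corrections to your own assessment. First, the step you flag as delicate involves no measurable (random) subsequence extraction. Since $\mcY'$ is finite-dimensional, ordinary subsequences suffice, and convergence in any norm on $\mcY'$ gives a.s.\ convergence. The randomized-subsequence argument lives entirely inside the cited closedness of $\mathbb K_i-L^0_+$ under $\mathbf{NA}_i$. Second, to apply Kreps--Yan on the product space, identify ${\sf X}_iL^1(\Omega,\mathcal F^i,\widetilde P^i)$ with a single $L^1$ space on $\Omega\times\{1,\dots,N\}$. Equip it with the $\sigma$-algebra generated by the sets $A\times\{i\}$, $A\in\mathcal F^i$, and the probability $\frac1N\sum_i\widetilde P^i\otimes\delta_i$. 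The usual exhaustion argument then yields a strictly positive $Z=(Z^1,\dots,Z^N)$ componentwise.
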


In the next theorem, we establish a strengthened version of Theorem~\ref{FTAP:R}, which will be used to prove Proposition \ref{NAben}.

\begin{theorem}\label{FTAP:RBis}
    Let $\mcY $ be a finite dimensional  vector space such that  $\mathbb R^N_0 \subseteq \mcY\subseteq L^{\infty}(\Omega, \mathbf{F},\mathbf P)$. Then 
    \begin{equation*}
      \mathbf{NCA(\mathcal{Y})} \iff \Me \not = \emptyset \iff \mathcal M_{e,\mathrm {loc}}(\mcY) \not = \emptyset .
    \end{equation*}
   
\end{theorem}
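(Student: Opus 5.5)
The first equivalence, $\mathbf{NCA}(\mcY)\iff \Me\neq\emptyset$, is Theorem \ref{FTAP:R}, so only $\Me\neq\emptyset \iff \mathcal M_{e,\mathrm{loc}}(\mcY)\neq\emptyset$ remains. Since every martingale is a local martingale, $M_e(S^i)\subseteq M_{e,\mathrm{loc}}(S^i)$, hence $\Me\subseteq\mathcal M_{e,\mathrm{loc}}(\mcY)$. This gives the direction $\Me\neq\emptyset\Rightarrow\mathcal M_{e,\mathrm{loc}}(\mcY)\neq\emptyset$ at once.

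For the converse, I would prove $\mathcal M_{e,\mathrm{loc}}(\mcY)\neq\emptyset\Rightarrow \mathbf{NCA}(\mcY)$ and then invoke Theorem \ref{FTAP:R} again. Fix $\mathbf Q=(Q^1,\dots,Q^N)\in\mathcal M_{e,\mathrm{loc}}(\mcY)$ and suppose, for contradiction, that there are $k^i=(H^i\cdot S^i)_T\in\mathbb K_i$ and $Y\in\mcY$ with $k^i+Y^i\ge 0$ a.s.\ for all $i$ and $P^n(k^n+Y^n>0)>0$ for some $n$. Because $\mcY\subseteq L^\infty(\Omega,\mathbf F,\mathbf P)$, each $Y^i$ is bounded, so $k^i\ge -\|Y^i\|_\infty$ is bounded from below.

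The key step is the classical discrete-time fact: if $S^i$ is a $Q^i$-local martingale and $H^i$ is predictable, then $(H^i\cdot S^i)$ is a $Q^i$-local martingale (equivalently a $\sigma$-martingale, cf.\ Proposition \ref{propMsigma}). A discrete-time local martingale whose terminal value is bounded below is a true martingale; this is the Jacod–Shiryaev / Ansel–Stricker lemma, cf.\ \cite{DS2006}. Hence $k^i\in L^1(Q^i)$ with $E_{Q^i}[k^i]=0$. This is the step I expect to be the main obstacle. It relies on the lower bound holding only at the terminal time, and it must be applied filtration-by-filtration, with $\mathbb F^i$ and the measure $Q^i$ for agent $i$. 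I would quote it rather than reprove it, or prove it by backward induction on $t$, using that a bounded-below terminal value forces conditional expectations to be finite.

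Granting this, summing over $i$ and using the polarity condition gives
\[
\sum_{i=1}^N E_{Q^i}[k^i+Y^i]=\sum_{i=1}^N E_{Q^i}[Y^i]\le 0 .
\]
Each summand is nonnegative, so $k^i+Y^i=0$ $Q^i$-a.s.\ for every $i$. Since $Q^i\sim P^i$, this holds $P^i$-a.s., contradicting $P^n(k^n+Y^n>0)>0$. Therefore $\mathbf{NCA}(\mcY)$ holds, and Theorem \ref{FTAP:R} yields $\Me\neq\emptyset$, which closes the chain of equivalences.
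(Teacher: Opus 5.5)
Your proposal is correct and matches the paper's proof essentially step for step. You use the inclusion $\Me\subseteq\mathcal M_{e,\mathrm{loc}}(\mcY)$ for one direction; for the converse you derive $\mathbf{NCA}(\mcY)$ from a collective local martingale measure via the discrete-time fact (Proposition \ref{propMsigma}, Item 6) that $(H\cdot S^i)$ with terminal value bounded below is a $Q^i$-martingale, then sum with the polarity condition and close with Theorem \ref{FTAP:R}.
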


\begin{proof}
Given  Theorem~\ref{FTAP:R}, and since  $\Me \subseteq \mathcal M_{e,\mathrm {loc}}(\mcY)$, we only need to prove $\mathcal M_{e,\mathrm {loc}}(\mcY) \not = \emptyset  \Rightarrow \mathbf{NCA(\mathcal{Y})} $.  
Take $(Q^1, \dots , Q^ N) \in  M_{e,\mathrm {loc}}(\mcY)$ and let $(k+Y) \in  ({\sf X}_{i=1} ^{N}  \mathbb K_i   + \mathcal Y ) \cap  L^{0 }_+(\Omega,  \mathbf{F},\mathbf P)$. Then $k^i+Y^i \geq 0$ and thus $k^i \geq -Y^i  \in L^{\infty }(\Omega, \mathcal{F}^i , P^i) $ for all $i$. As $k^i=(H \cdot S^i)_T \geq -C$, for some constant $C>0 $ and $H \in \mathbb H^i$, and $S^i$ is a $\mathbb F^i$-local martingale under $Q^i$, Proposition item 6 assures that $(H \cdot S^i)$ is a $\mathbb F^i$-martingale under $Q^i$ and so  $E_{Q^i}[k^i]=0$. Therefore $E_{Q^i}[k^i+Y^i] = E_{Q^i}[Y^i]$, $\sum_{i=1}^N E_{Q^i}[k^i+Y^i] = \sum_{i=1}^N E_{Q^i}[Y^i]  \leq 0$, as $(Q^1, \dots , Q^ N) \in M_{e,\mathrm {loc}}(\mcY)$.
From $(k^i+Y^i) \geq 0 $ for all $i$, we also get that $\sum_{i=1}^N E_{Q^i}[k^i+Y^i]  \geq 0$, so that $\sum_{i=1}^N E_{Q^i}[k^i+Y^i] = 0$, which then implies $E_{Q^i}[k^i+Y^i] = 0$ for all $i$ and $k^i+Y^i=0$ for all $i$. Thus $\mathbf{NCA(\mathcal{Y})}$ holds true.


\end{proof}

Up to this point, this subsection has introduced the classical notion of No Arbitrage, together with the concept of No Collective Arbitrage as developed in \cite{BDFFM25}. In that framework, we considered the class $\mathbb{H}^i$ of admissible trading strategies defined in \eqref{HH}, along with the associated set $\mathbb{K}_i$ of attainable terminal payoffs.
However, this class is not well suited for utility maximization problems, and in particular it does not allow for the direct application of convex duality techniques.
For the analysis of utility maximization carried out below, we instead adopt the framework introduced in Section~3. In particular, we consider the class
\begin{equation}
  \mathcal{H}^i = \mathcal{H}_i^{W^i}= \left\{ H\in \mathbb H^i \mid \exists \alpha>0\text{\mbox{ such that }}%
(H \cdot S^i)_t \geq - \alpha W^i, \, \forall t\in \mathcal T \right\} , 
\end{equation} 
of trading strategies defined as in \eqref{HW}, together with the corresponding sets 
\begin{equation}
\mathcal{K}_i = \mathcal{K}_i^{W^i}=\left\{ (H \cdot S^i)_T \mid H\in \mathcal{H}_{i}^{W^i}\right\}, \quad  \mathcal{C}_i=(\mathcal K_i -L^0_+(\Omega, \mathcal F^i,P^i) \cap L_i \,.
\label{k_defiTris}
\end{equation} 
of terminal payoffs  defined as in \eqref{k_def} and \eqref{CiOK}. In this way we are back in the setting of Section \ref{sction:theutilmax} and we may apply the results therein.

This choice entails no inconsistency because the condition of No Collective Arbitrage can be equivalently formulated in terms of strategies belonging to $\mathbb H^i$ or to $\mathcal H^i$, as shown in the following Proposition \ref{propNCAKK}. Moreover, the formulation of No Collective Arbitrage in the form
\[
\mathcal C^{\mathcal Y} \cap \mathcal L_+ = \{0\}
\]
(see \eqref{NCAKK})  will serve as the basis for strengthening such condition to the notion of No Collective Free Lunch in Section \ref{seccontinuous}.

Before presenting the result, observe that if there exists an arbitrage opportunity for some agent $j$, that is, $k^j \in \mathbb K_j$ with $k^j \geq 0 $ and $\mathbb{P}(k^j > 0) > 0$, then a collective arbitrage arises trivially: it suffices to take $Y = 0$ and $k^i = 0$ for all $i \neq j$.
The more interesting situation is that of a \emph{genuine} collective arbitrage, namely when $\mathbf{CA}(\mathcal{Y})$ holds even though each individual market is arbitrage-free see \cite[Section~3.1]{BDFFM25}. By Theorem~\ref{FTAP:RBis}, this corresponds to the condition
$\mathcal{M}_{e,\mathrm{loc}}(S^i) \neq \emptyset \quad \text{for all } i$ and this is precisely the case addressed in the Propositions \ref{propNCAKK}, \ref{NAben} and \ref{propcomplete}.

\begin{proposition}\label{propNCAKK}
 Suppose that  $\mathbb R^N_0 \subseteq \mcY\subseteq L^{\infty}(\Omega, \mathbf{F},\mathbf P)$ and   $M_{e,\mathrm{loc}}(S^i) \neq \emptyset $ for each $i$. Set $\mathcal L:={\sf X}_{i=1} ^{N}  (L_i)$ and $\mathcal L_+:={\sf X}_{i=1} ^{N}  (L_i)_+$. Then for any choice of suitable $1\leq W^i\in M^{\widehat{u}^i}(\Omega,\mathcal F^i,P^i) $, $i=1,\dots,N$
\begin{align}\label{NCAK}
&\mathbf{NCA} 
(\mathcal{Y})  \iff ({\sf X}_{i=1} ^{N} \mathbb K_i+\mathcal Y )\cap  L^{0 }_+(\Omega, \mathbf{F},\mathbf P)=\{0\}, \\ 
& \iff \mathbb K^{\mathcal Y} \cap  L^{0 }_+(\Omega, \mathbf{F},\mathbf P)=\{0\} \iff  \mathbb C^\mathcal Y \cap  L^{1 }_+(\Omega, \mathbf{F} , \mathbf P)=\{0\} \label{NCAh} \\
&\iff \mathcal C^\mathcal Y \cap \mathcal L_+ =\{0\} \iff ({\sf X}_{i=1} ^{N} \mathcal K_i+\mathcal Y )\cap  L^{0 }_+(\Omega, \mathbf{F},\mathbf P)=\{0\}, \label{NCAKK}
\end{align}
where $\mathcal K_i$ is defined in \eqref{k_defiTris} and
 \begin{equation}\label{KKbis}
  \mathcal K^\mathcal Y:=  {\sf X}_{i=1} ^{N} (\mathcal  K_i - L^{0 }_+(\Omega, \mathcal{F}^i,P^i) ) + \mathcal Y, \quad 
        \mathcal C^\mcY:=\mathcal K^\mcY\cap \mathcal L , 
 \end{equation} 
namely the conditions $\mathbf{NCA}(\mathcal{Y}) $ with strategies $H^i \in \mathbb H^i$ and No Collective Arbitrage, as in \eqref{NCAKK}, with strategies $H^i \in \mathcal H^i$ are equivalent.
\end{proposition}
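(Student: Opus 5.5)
The proof runs as a chain of equivalences. Three facts will be used repeatedly: $\mathcal Y\subseteq L^\infty$; every $W^i\geq1$; and $\mathcal K_i\subseteq\mathbb K_i$, since $\mathcal H^{W^i}_i\subseteq\mathbb H^i$.

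\textbf{The first three conditions.} Condition \eqref{NCAK} is the definition of $\mathbf{NCA}(\mathcal Y)$. The equivalences in \eqref{NCAh} are exactly \eqref{3equi}, i.e.\ Proposition 3.2 of \cite{BDFFM25}, so nothing new is needed there.

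\textbf{From the old condition to the two new ones.} Suppose $\mathbb K^{\mathcal Y}\cap L^0_+=\{0\}$. Since $\mathcal K_i\subseteq\mathbb K_i$, we have $\mathcal K^{\mathcal Y}\subseteq\mathbb K^{\mathcal Y}$. Hence $\mathcal C^{\mathcal Y}\cap\mathcal L_+\subseteq\mathbb K^{\mathcal Y}\cap L^0_+=\{0\}$. In the same way, $({\sf X}_i\mathcal K_i+\mathcal Y)\cap L^0_+\subseteq({\sf X}_i\mathbb K_i+\mathcal Y)\cap L^0_+=\{0\}$.

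\textbf{From the new conditions back to $\mathbf{NCA}(\mathcal Y)$.} This is the substantive direction. It suffices to show that $\mathcal C^{\mathcal Y}\cap\mathcal L_+=\{0\}$ implies $\mathbf{NCA}(\mathcal Y)$, and that $({\sf X}_i\mathcal K_i+\mathcal Y)\cap L^0_+=\{0\}$ implies $\mathbf{NCA}(\mathcal Y)$.

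\emph{Starting point.} Take $k^i=(H^i\cdot S^i)_T\in\mathbb K_i$ and $Y\in\mathcal Y$ with $k^i+Y^i\geq0$ for all $i$. Then $k^i\geq-\|Y^i\|_\infty$. In discrete time a terminal lower bound does not automatically bound the intermediate values $(H^i\cdot S^i)_t$ from below. Here I use $M_{e,\mathrm{loc}}(S^i)\neq\emptyset$: pick $Q^i\in M_{e,\mathrm{loc}}(S^i)$.

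\emph{Key step.} By the discrete-time result invoked in the proof of Theorem \ref{FTAP:RBis} (Proposition \ref{propMsigma}, item 6), a stochastic integral that is a local martingale with terminal value bounded from below is a true $Q^i$-martingale. Therefore $(H^i\cdot S^i)_t=E_{Q^i}[k^i\mid\mathcal F^i_t]\geq-\|Y^i\|_\infty$ for every $t$, with $Q^i\sim P^i$.

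\emph{Conclusion.} Consequently $H^i\in\mathcal H^{1}_i\subseteq\mathcal H^{W^i}_i$, using $W^i\geq1$, and so $k^i\in\mathcal K_i$. Thus every element of $({\sf X}_i\mathbb K_i+\mathcal Y)\cap L^0_+$ already lies in $({\sf X}_i\mathcal K_i+\mathcal Y)\cap L^0_+$, which gives the last equivalence. Moreover $k^i+Y^i\in\mathcal K_i+\mathcal Y$ is nonnegative and dominated by $k^i+Y^i$, and $k^i+Y^i\in L^\infty+\mathcal K_i$. To place it in $\mathcal C^{\mathcal Y}\cap\mathcal L_+$, I replace $k^i$ by $\min(k^i,n)$, which lies in $\mathcal K_i-L^0_+$ (Lemma \ref{l2}). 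Then $\min(k^i,n)+Y^i\geq0$ is bounded, hence belongs to $\mathcal C^{\mathcal Y}\cap\mathcal L_+$, so it must vanish. This forces $k^i+Y^i=0$ as soon as $n$ exceeds $\|Y^i\|_\infty$ on the relevant sets. Letting $n\to\infty$ gives $k^i+Y^i=0$, hence $\mathbf{NCA}(\mathcal Y)$.

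The main obstacle is the key step: controlling the intermediate wealth $(H^i\cdot S^i)_t$ from the terminal bound alone. This is exactly where the hypothesis $M_{e,\mathrm{loc}}(S^i)\neq\emptyset$ and the local-martingale-bounded-below-implies-martingale result enter. Everything else is set inclusion and truncation.
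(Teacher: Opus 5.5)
Your proof is correct and follows the paper's argument. The inclusion $\mathcal K_i\subseteq\mathbb K_i$ gives the easy direction. For the converse you pick $Q^i\in M_{e,\mathrm{loc}}(S^i)$ and use Proposition~\ref{propMsigma}, item 6, to turn the terminal bound $k^i\geq-\|Y^i\|_\infty$ into a uniform-in-time bound, so that $H^i\in\mathcal H^1_i\subseteq\mathcal H^{W^i}_i$.

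The only minor difference concerns the condition $\mathcal C^{\mathcal Y}\cap\mathcal L_+=\{0\}$. You handle it by an explicit truncation $\min(k^i,n)+Y^i$. The paper instead just asserts that its equivalence with $(\mathcal K_1\times\dots\times\mathcal K_N+\mathcal Y)\cap L^0_+=\{0\}$ follows as in Proposition 3.2 of \cite{BDFFM25}.
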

\begin{proof}
The three equivalences in \eqref{NCAK} and \eqref{NCAh}  were already mentioned in \eqref{3equi} and were proven in \cite{BDFFM25} Proposition 3.2.
Exactly with the same proof it is possible to prove the last equivalence in \eqref{NCAKK}. Moreover, since $\mathcal K_i \subseteq \mathbb K_i$ it is obvious that
\begin{equation}
({\sf X}_{i=1} ^{N} \mathbb K_i+\mathcal Y )\cap  L^{0 }_+(\Omega, \mathbf{F},\mathbf P)=\{0\} \Rightarrow ({\sf X}_{i=1} ^{N} \mathcal K_i+\mathcal Y )\cap  L^{0 }_+(\Omega, \mathbf{F},\mathbf P)=\{0\}.
\end{equation}
Therefore it remains to prove only 

\begin{equation*}
({\sf X}_{i=1} ^{N} \mathbb K_i+\mathcal Y )\cap  L^{0 }_+(\Omega, \mathbf{F},\mathbf P)=\{0\} \Leftarrow ({\sf X}_{i=1} ^{N} \mathcal K_i+\mathcal Y )\cap  L^{0 }_+(\Omega, \mathbf{F},\mathbf P)=\{0\}.
\end{equation*}
or equivalently,
\begin{equation*}
\exists f \neq 0 \text{ s.t. }  f \in ({\sf X}_{i=1} ^{N} \mathbb K_i+\mathcal Y )\cap  L^{0 }_+(\Omega, \mathbf{F},\mathbf P) \Rightarrow   f \in ({\sf X}_{i=1} ^{N} \mathcal K_i+\mathcal Y )\cap  L^{0 }_+(\Omega, \mathbf{F},\mathbf P).
\end{equation*}
Let  $0 \leq f=k+Y \in ({\sf X}_{i=1} ^{N} \mathbb K_i+\mathcal Y )$, with $\Pp (f^j>0)>0 $ for some $j$, and fix $W^1,\dots, W^N$ in the statement.
Since $Y\in\mathcal{Y}\subseteq  L^{\infty}(\Omega,\mathbf F,\mathbf P)$ we have $k \in  L^{bb}(\Omega,\mathbf F,\mathbf P)$. In particular for every component we have $k^i=(H\cdot S^i)_T\geq -\norm{Y^i}_\infty$, where $H\in \mathbb{H}^i$. We show that $H\in \mathcal{H}_i^{W^i}$. To see this pick $Q^i\in M_{e,\mathrm{loc}} (S^i) \neq \emptyset$, so that $S^i$ is a $\mathbb F^i$-local martingale under $Q^i$. By Proposition \ref{propMsigma} Item 6, the process $(H\cdot S^i)$ is a $\mathbb{F}^i$-martingale under $Q^i$, so that 
$$(H \cdot S)_t=E_Q[(H \cdot S)_T|\mathcal{F}_t]\geq E_Q[-Y^i|\mathcal{F}_t]\geq -\norm{Y^i}_\infty\quad \forall t\in\mathcal{T}.$$
This ensures that $H\in \mathcal{H}_i^{1}\subseteq \mathcal{H}_i^{W^i}$.
Thus  $f \in ({\sf X}_{i=1} ^{N} \mathcal K_i+\mathcal Y )\cap  L^{0 }_+(\Omega, \mathbf{F},\mathbf P)$.
\end{proof}

We now move to the main results of this Section. We start stating explicitly all needed technical requirements, in addition to the tacitly assumed Standing Assumptions and Assumption  \ref{assNew}


\begin{assumption}\label{assdiscrtime}

Let $\mathcal T=\{0,1,\dots,T\}$. We suppose that, for each $i$,
    \begin{enumerate}

        \item  $S^i$ is an $\mathbb{R}^{d_i}$-valued stochastic process
defined on
$ (\Omega,\mathcal{F}^i,\mathbb F^i:=(\mathcal{F}^i_{t})_{t\in \{0,1,\dots,T\}},\Pp)$

        \item We adopt setup \textbf{B}, that is \begin{enumerate}
         \item $W^i\in M^{\widehat{u}^i}(\Omega,\mathcal F^i,P^i)$ is suitable for $S^i$ in the sense of Definition \ref{viability}.
            \item $L_i=M^{\widehat{u}^i}(\Omega,\mathcal F^i,P^i)$ and  $L_i^*=L^{\widehat{\Phi}^i}(\Omega,\mathcal F^i,P^i)$.
        \end{enumerate}
        \item $\mcY$ is a finite dimensional vector space with $\mathbb R_0 ^N \subseteq \mcY \subseteq  L^{\infty}(\Omega,\mathbf F,\mathbf P)$.
    \end{enumerate}
\end{assumption}
With the selection \eqref{k_defiTris} and recalling Proposition \ref{propMs}, we conclude that:
\begin{center}
    \textbf{Assumption~\ref{assdiscrtime} $\Rightarrow$ Assumption~\ref{ass12} $\Rightarrow$ Assumption~\ref{assOK}}

\end{center}
and the results of Sections \ref{secmainres} and \ref{secharbenefits} can be applied.

 \begin{corollary}[of Theorem \ref{corben}] \label{corbentris}
Under Assumption \ref{assdiscrtime}, the following are equivalent
 \begin{enumerate}
     \item There exists a strictly beneficial $\widehat{Y} \in \mcY$.
     \item There exists a beneficial $\widehat{Y} \in \mcY$.
     \item $\mathbf Q_X \notin \mathcal M_{\mathrm{loc}}(\mcY)$, where $\mathbf Q_X:=(Q_{X^1}^1,\dots,Q_{X^N}^N)$ is the minimax vector and 
     \begin{align*}
\mathcal M_{\mathrm{loc}}(\mcY) &:=\bigg\{ \mathbf Q=(Q^1,\dots,Q^N)  \mid Q^i \in  {M}_{\mathrm{loc}}(S^i) \, \text{ and }   \, \sum_{i=1}^N E_{Q^i}[Y^i] \leq 0 \text {   }\forall Y \in \mcY \bigg\}.
\end{align*}

\end{enumerate}
 \end{corollary}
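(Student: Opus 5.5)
The plan is to obtain the corollary directly from Theorem \ref{corben}, after checking its hypotheses and translating item 3 into the local-martingale language. Under Assumption \ref{assdiscrtime} we work in setup \textbf{B}, with $L_i=M^{\widehat{u}^i}(\Omega,\mathcal F^i,P^i)$ and $\mathcal K_i=\mathcal K_i^{W^i}$, where $W^i$ is suitable and strongly compatible. By Proposition \ref{propMs}, Assumption \ref{assOK} then holds. The standing assumptions, in particular $U^i(X^i;0)<u^i(+\infty)$, are in force. Finally, $\mathcal Y$ is a finite dimensional vector space, hence a convex cone, and it satisfies $\R^N_0\subseteq\mcY\subseteq L^\infty(\Omega,\mathbf F,\mathbf P)\subseteq L_1\times\dots\times L_N$. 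So Theorem \ref{corben} applies: items 1 and 2 of the corollary are equivalent to each other and to the condition $\sum_{i}E_{Q^i_{X^i}}[Y^i]>0$ for some $Y\in\mcY$.

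It remains to show that this condition is equivalent to $\mathbf Q_X\notin\mathcal M_{\mathrm{loc}}(\mcY)$. By Proposition \ref{propMs}, each minimax measure satisfies $Q^i_{X^i}\in M_\sigma(S^i)$ and $\rn{Q^i_{X^i}}{P^i}\in L_i^*$. In discrete time, Proposition \ref{propMsigma} Item 1 gives $M_\sigma(S^i)=M_{\mathrm{loc}}(S^i)$, as recalled at the start of Section \ref{Benefits}, so $Q^i_{X^i}\in M_{\mathrm{loc}}(S^i)$ for every $i$. Since each $Y^i$ is bounded, the expectations $E_{Q^i_{X^i}}[Y^i]$ are finite. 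Consequently, the only way $\mathbf Q_X$ can fail to lie in $\mathcal M_{\mathrm{loc}}(\mcY)$ is failure of the polarity inequality, which means $\sum_i E_{Q^i_{X^i}}[Y^i]>0$ for some $Y\in\mcY$. Conversely, that strict inequality clearly implies $\mathbf Q_X\notin\mathcal M_{\mathrm{loc}}(\mcY)$. This gives the equivalence $3\Leftrightarrow$ item 3 of Theorem \ref{corben}.

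The only delicate point is the identification of the polar of $\mathcal C_i$ with sigma-martingale measures, and then with local martingale measures. The first identification is inherited from \cite{BF08} via Proposition \ref{propMs}; the second is the discrete-time fact $M_\sigma=M_{\mathrm{loc}}$. I would make both explicit in the write-up. Everything else is bookkeeping in verifying the hypotheses of Theorem \ref{corben}.
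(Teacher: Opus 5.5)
Your proposal is correct and follows the paper's proof. You place each minimax measure $Q^i_{X^i}$ in $M_\sigma(S^i)=M_{\mathrm{loc}}(S^i)$ via Proposition \ref{propMs} and Proposition \ref{propMsigma} Item 1, and then note that $\mathbf Q_X\notin\mathcal M_{\mathrm{loc}}(\mcY)$ is exactly failure of the polarity inequality, i.e.\ item 3 of Theorem \ref{corben}; you are only more explicit than the paper in checking that $\mcY$ is a convex cone with $\R^N_0\subseteq\mcY\subseteq L_1\times\dots\times L_N$ and that the expectations are finite.
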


\begin{proof}
    By Proposition \ref{propMs} and Proposition \ref{propMsigma} Item 1, $Q^i_{X^i} \in {M}_{\sigma}(S^i)={M}_{\mathrm{loc}}(S^i) $ satisfies $\rn{Q^i_{X^i}}{P^i} \in L_i^*=L^{\widehat{\Phi}^i}(\Omega,\mathcal F^i,P^i) \subseteq L^{1}(\Omega,\mathcal F^i,P^i)$. The thesis now follows from Theorem \ref{corben} noticing that $\mathbf Q_X \notin \mathcal M_{\mathrm{loc}}(\mcY)$ if and only if $\sum_{i=1}^N E_{Q_{X^i}^i}[Y^i]> 0 \text {  for some  } Y \in \mcY.$ 
\end{proof}



The following proposition shows that a Collective Arbitrage necessarily gives rise to beneficial exchanges. The result is a direct consequence of Theorem~\ref{corben} and will also be derived by an independent direct argument in Section \ref{sec411}.

\begin{proposition}\label{NAben}
 Suppose that $M_{e,\mathrm{loc}} (S^i)\cap \mathcal P_{\Phi^i}(P^i) \neq \emptyset$ for all $i$. Under Assumption \ref{assdiscrtime}, the  existence of a  collective arbitrage  $\mathbf{CA}(\mathcal{Y})$ implies the existence of strict beneficial exchanges.
\end{proposition}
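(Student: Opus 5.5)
We argue through Corollary \ref{corbentris}: it suffices to show that, under $\mathbf{CA}(\mcY)$, the minimax vector satisfies $\mathbf Q_X\notin\mathcal M_{\mathrm{loc}}(\mcY)$. Then the Corollary (equivalence $3\Rightarrow 1$) yields a strictly beneficial exchange. We argue by contradiction and suppose $\mathbf Q_X\in\mathcal M_{\mathrm{loc}}(\mcY)$.

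\textbf{Step 1: equivalence of the minimax measures.} Since $M_{e,\mathrm{loc}}(S^i)\cap\mathcal P_{\Phi^i}(P^i)\neq\emptyset$, and in discrete time $M_{e,\sigma}(S^i)=M_{e,\mathrm{loc}}(S^i)$ by Proposition \ref{propMsigma} Item 1, the last assertion of Proposition \ref{propMs} applies. It gives $Q^i_{X^i}\sim P^i$ with $Q^i_{X^i}\in M_{\mathrm{loc}}(S^i)$ for each $i$. Moreover $\mcY\subseteq L^\infty(\Omega,\mathbf F,\mathbf P)$, so $\mcY\subseteq L^1(\Omega,\mathbf F,\mathbf Q_X)$ automatically. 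Combining these with the polarity inequality $\sum_i E_{Q^i_{X^i}}[Y^i]\le 0$ for all $Y\in\mcY$ (our contradiction hypothesis), we obtain $\mathbf Q_X\in\mathcal M_{e,\mathrm{loc}}(\mcY)$.

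\textbf{Step 2: contradiction with $\mathbf{CA}(\mcY)$.} By Theorem \ref{FTAP:RBis}, $\mathcal M_{e,\mathrm{loc}}(\mcY)\neq\emptyset$ implies $\mathbf{NCA}(\mcY)$, which contradicts the assumed collective arbitrage. Hence $\mathbf Q_X\notin\mathcal M_{\mathrm{loc}}(\mcY)$. Equivalently, there is some $Y\in\mcY$ with $\sum_i E_{Q^i_{X^i}}[Y^i]>0$, and Corollary \ref{corbentris} concludes.

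The proof of Theorem \ref{FTAP:RBis} shows the contradiction directly as well. A collective arbitrage $k+Y\ge 0$ with $k^i\in\mathbb K_i$ has $k^i\ge -\|Y^i\|_\infty$. Hence $(H\cdot S^i)$ is a $Q^i_{X^i}$-martingale by Proposition \ref{propMsigma} Item 6, so $E_{Q^i_{X^i}}[k^i]=0$. Then
\[
0\le \sum_i E_{Q^i_{X^i}}[k^i+Y^i]=\sum_i E_{Q^i_{X^i}}[Y^i]\le 0 ,
\]
and equivalence of the measures forces $k^i+Y^i=0$ for all $i$, which is not an arbitrage.

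\textbf{Main obstacle.} The delicate point is Step 1. The measures $Q^i_{X^i}$ are a priori only absolutely continuous, and without equivalence the argument collapses. With $Q^i_{X^i}\ll P^i$ only, the final step yields $k^i+Y^i=0$ merely $Q^i_{X^i}$-a.s., which does not contradict arbitrage. Step 1 therefore relies precisely on the hypothesis $M_{e,\mathrm{loc}}(S^i)\cap\mathcal P_{\Phi^i}(P^i)\neq\emptyset$ and on the identification $\sigma$-martingale $=$ local martingale in discrete time. We also check that Assumption \ref{assdiscrtime} (setup \textbf{B}) places us within the scope of Proposition \ref{propMs}.
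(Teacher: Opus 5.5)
Your proposal is correct and follows the paper's proof. The hypothesis $M_{e,\mathrm{loc}}(S^i)\cap\mathcal P_{\Phi^i}(P^i)\neq\emptyset$ makes the minimax measures equivalent local martingale measures via Proposition \ref{propMs}, so polarity would place $\mathbf Q_X$ in $\mathcal M_{e,\mathrm{loc}}(\mcY)$, contradicting $\mathbf{CA}(\mcY)$ through Theorem \ref{FTAP:RBis}; Theorem \ref{corben}, in the form of Corollary \ref{corbentris}, then concludes, while your added direct verification simply unwinds the proof of Theorem \ref{FTAP:RBis} and your remark on why equivalence of $Q^i_{X^i}$ is indispensable is apt.
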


\begin{proof}
The assumption $M_{e,\mathrm{loc}} (S^i)\cap \mathcal P_{\Phi^i}(P^i) \neq \emptyset$ implies, by Proposition \ref{propMs} that the minimizers $Q^i_{X^i} \in {M}_{e,\mathrm{loc}}(S^i) $ and $\frac{dQ^i_{X^i}}{dP^i} \in L^{\widehat{\Phi}^i}(\Omega,\mathcal F^i,P^i) \subseteq L^1(\Omega,\mathcal F^i,P^i)$ for each $i$. If $\sum_{i=1}^N E_{Q^i_{X^i}}[Y^i] \leq 0 \text {   }\forall Y \in  \mcY \subseteq L^{\infty} (\Omega,\mathbf F,\mathbf P)$ then $(Q^1_{X^1}, \dots,Q^N_{X^N}) \in \mathcal M^*_{e,\mathrm{loc}} (\mcY) \subseteq \mathcal M_{e,\mathrm{loc}} (\mcY)$, but this is impossible since $\mathcal M_{e,\mathrm{loc}} (\mcY) =\emptyset $ due to to the existence of a $\mathbf{CA}(\mathcal{Y})$ and Theorem \ref{FTAP:RBis}. Thus $\sum_{i=1}^N E_{Q^i_{X^i}}[Y^i] > 0 \text {  for some  } Y \in \mcY $. Theorem \ref{corben} then implies the thesis.
\end{proof}

We conclude this discrete-time detour with the following result, which explains how completeness of single agents' markets avoids existence of beneficial exchanges.

\begin{proposition}\label{propcomplete}
   Suppose that $M_{e,\mathrm{loc}} (S^i)\cap \mathcal P_{\Phi^i}(P^i) \neq \emptyset$  for all $i$ and that $\mathbf{NCA}(\mathcal{Y})$ holds true. Under Assumption \ref{assdiscrtime} and if each individual market is classically complete, then no beneficial exchanges exist.
\end{proposition}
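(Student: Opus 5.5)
By Corollary~\ref{corbentris}, it suffices to show that the minimax vector $\mathbf Q_X=(Q^1_{X^1},\dots,Q^N_{X^N})$ belongs to $\mathcal M_{\mathrm{loc}}(\mcY)$. Equivalently, we must show that $\sum_{i=1}^N E_{Q^i_{X^i}}[Y^i]\le 0$ for every $Y\in\mcY$. The plan is to prove that, under completeness, the minimax vector is the only candidate compatible with $\mathbf{NCA}(\mathcal{Y})$.

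First, by the assumption $M_{e,\mathrm{loc}}(S^i)\cap \mathcal P_{\Phi^i}(P^i)\neq\emptyset$ and Proposition~\ref{propMs}, each $Q^i_{X^i}$ lies in $M_{e,\mathrm{loc}}(S^i)$, so it is equivalent to $P^i$. Next, classical completeness of market $i$ in discrete time (together with $\mathbf{NA}_i$, which follows from \eqref{Implications}) means that $M_{e,\mathrm{loc}}(S^i)=M_e(S^i)$ is a singleton $\{Q^i\}$. Here I would invoke the second fundamental theorem of asset pricing in discrete time; under local martingale measures, Proposition~\ref{propMsigma} reduces to the martingale case. Consequently $Q^i_{X^i}=Q^i$ for every $i$, and this holds independently of $u^i$ and $X^i$.

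Second, by $\mathbf{NCA}(\mathcal{Y})$ and Theorem~\ref{FTAP:RBis}, there exists $\widetilde{\mathbf Q}\in\mathcal M_{e,\mathrm{loc}}(\mcY)$. Each component $\widetilde Q^i$ belongs to $M_{e,\mathrm{loc}}(S^i)=\{Q^i\}$, so $\widetilde{\mathbf Q}=\mathbf Q_X$. Hence $\sum_i E_{Q^i_{X^i}}[Y^i]\le 0$ for all $Y\in\mcY$, which means $\mathbf Q_X\in\mathcal M_{\mathrm{loc}}(\mcY)$. Integrability is not an issue, because $\mcY\subseteq L^\infty$. Corollary~\ref{corbentris}, applied via the negation of item 2 $\Leftrightarrow$ item 3, then gives that no beneficial exchange exists.

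The main obstacle is the uniqueness step. I must make precise that ``classically complete'' yields uniqueness of the equivalent local martingale measure, rather than merely of the equivalent martingale measure. I also need to ensure that $Q^i_{X^i}$, which a priori is only absolutely continuous with $dQ^i_{X^i}/dP^i\in L^{\widehat\Phi^i}$, is indeed equivalent. The latter is exactly what the hypothesis $M_{e,\mathrm{loc}}(S^i)\cap\mathcal P_{\Phi^i}(P^i)\neq\emptyset$ provides through Proposition~\ref{propMs}. For the former, I would argue as follows. In discrete time every local martingale measure $Q\sim P^i$ admits, through the standard change of measure (e.g.\ \cite{DS2006}), a representation in which completeness forces the density to be determined by replication of $1_A$, $A\in\mathcal F^i$. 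Concretely, if two equivalent local martingale measures differed on some $A$, the replicating strategy of the bounded claim $1_A$ would have a bounded-below gains process. By Proposition~\ref{propMsigma}, item 6, that gains process is a true martingale under both measures, so both measures assign the same value to $A$, which is a contradiction.
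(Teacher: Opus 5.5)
Your proposal is correct and follows the paper's proof. Completeness makes the equivalent (local) martingale measure unique, which pins down each $Q^i_{X^i}$; $\mathbf{NCA}(\mathcal Y)$ then forces that unique vector to be a collective measure, and Corollary~\ref{corbentris} concludes. The differences are minor: you get $M_{e,\mathrm{loc}}(S^i)=\{Q^i\}$ from a direct replication argument via Proposition~\ref{propMsigma}, item 6 (valid, since $(H\cdot S^i)_T=1_A-x$ is bounded below), where the paper uses the finite-atom structure of complete discrete-time markets, and you invoke Theorem~\ref{FTAP:RBis} where the paper invokes Theorem~\ref{FTAP:R}.
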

\begin{proof}
 $\mathbf{NCA}(\mathcal{Y})$ implies $\mathbf{NA}_i$ for all $i$. Thus by the classical First and Second FTAP the set $M_{e}(S^i) $ is not empty and reduced to a singleton, say  $M_{e}(S^i)=\{Q_*^i \} $ for all $i$. In discrete time, completeness of the market implies that the probability space is generated by a finite number of atoms and thus ${M}_{e,\mathrm{loc}}(S^i)={M}_{e}(S^i)=\{Q_*^i \} $. Moreover, $\mathbf{NCA}(\mathcal{Y})$ implies $\Me \neq \emptyset$, by Theorem \ref{FTAP:R}. Since $\Me \subseteq M_{e}(S^1)\times \dots \times M_{e}(S^N)$ we necessarily have $\Me=\{(Q_*^1,\dots,Q_*^N)\}$. By the assumption $M_{e}(S^i) \cap \mathcal P_{\Phi^i}(P^i) \neq \emptyset$, the minimizers $Q^i_{X^i} \in {M}_{e,\mathrm{loc}}(S^i)={M}_{e}(S^i)$ and $\frac{dQ^i_{X^i}}{dP^i} \in L_i^*$, thus $Q^i_{X^i}=\{Q_*^i \}$, which implies  $Q_X \in \mathcal M^*_{\mathrm{loc}}(\mcY)$. The Corollary  \ref{corbentris} then implies the thesis.
\end{proof}

\subsubsection{An informative proof of existence of beneficial exchanges if a CA exists}\label{sec411}
We now explain \emph{how} beneficial exchanges are produced exploiting the existence of arbitrage opportunities in a more explicit fashion. Under Assumption \ref{assdiscrtime} and from Proposition \ref{propMs} and \eqref{dualdual} we see that
 \begin{equation*}
        U^i(X^i;Z)=\min_{\lambda >0} \min_{\substack{  \probq\sim\probp^i,\\ Q\in M_{e,\mathrm{loc}}(S^i)}}
\Bigg (\lambda(x^i+E_Q[Z])+E_{P^i}\bigg[\Phi^i\Big(\lambda \rn{\probq}{\probp^i}\Big)\bigg] \Bigg ) ,\, \forall\,Z \in L_i\text{ s.t. } U^i(x^i;Z)<u^i(+\infty).      
\end{equation*}

\begin{proposition}
\label{propIncreasingTerdiscr}
 Suppose that $ M_{e,\mathrm{loc}} (S^i)\cap \mathcal P_{\Phi^i}(P^i) \neq \emptyset$ and that Assumption \ref{assdiscrtime} holds. Then \textbf{CA}$(\mcY)$ implies existence of beneficial exchanges for any choice of suitable $1\leq W^i\in M^{\widehat{u}^i}(\Omega,\mathcal F^i,P^i) $, $i=1,\dots,N$.
\end{proposition}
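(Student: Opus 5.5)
We would give a direct construction, avoiding Theorem \ref{corben}. The idea is to take the collective arbitrage, use it to build an exchange that pays every agent something nonnegative, and then check through the dual formula displayed just before the statement that this raises each indirect utility.

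\textbf{Step 1 (arbitrage with admissible strategies).} By \textbf{CA}$(\mcY)$ and Proposition \ref{propNCAKK}, which needs $M_{e,\mathrm{loc}}(S^i)\neq\emptyset$ and this is implied by our hypothesis, there exist $k^i=(H^i\cdot S^i)_T\in\mathcal K_i=\mathcal K_i^{W^i}$ and $Y\in\mcY$ with
\[
f^i:=k^i+Y^i\ge 0 \text{ for all } i, \qquad P^j(f^j>0)>0 \text{ for some } j.
\]
The proof of Proposition \ref{propNCAKK} shows that in fact $k^i\ge -\|Y^i\|_\infty$, so $H^i\in\mathcal H_i^1\subseteq\mathcal H_i^{W^i}$.

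\textbf{Step 2 (the exchange does not lower utility).} For any $Z\in L_i$ the set $\mathcal C_i$ is a cone, so $\mathcal C_i+k^i\subseteq(\mathcal K_i-L^0_+)$. Any $f\in\mathcal C_i$ therefore gives an admissible position $f+k^i$. Since $k^i+Y^i\ge0$,
\[
E_{P^i}[u^i(X^i+f+k^i+Y^i)]\ge E_{P^i}[u^i(X^i+f)].
\]
Taking suprema, via Lemma \ref{l2} applied to $\mathcal K_i$, yields $U^i(X^i;Y^i)\ge U^i(X^i;0)$ for every $i$. The point needing care is that $f+k^i$ may fail to lie in $L_i$. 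Lemma \ref{l2} handles this, because it shows the supremum over $\mathcal K_i$ equals the supremum over $\mathcal C_i$. Hence the suprema over $\mathcal K_i$ and over $\mathcal K_i+k^i$ coincide, and the inequality holds at the level of $\sup_{k\in\mathcal K_i}$.

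\textbf{Step 3 (strictness for agent $j$).} For agent $j$ we use the dual formula. For each $Q\in M_{e,\mathrm{loc}}(S^j)$ with finite $\Phi^j$-entropy, we have $E_Q[k^j]\le 0$, since $Q$ lies in the polar of $\mathcal C_j$. Hence
\[
E_Q[Y^j]\ge E_Q[f^j]-E_Q[k^j]\ge E_Q[f^j].
\]
Moreover $E_Q[f^j]>0$ because $Q\sim P^j$ and $f^j\ge0$ is nontrivial.

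Next assume $U^j(X^j;Y^j)<u^j(+\infty)$; otherwise strictness is immediate from Standing Assumption 5. Let $(\lambda_Y,Q_Y)$ be the minimizers for $X^j+Y^j$; they exist with $Q_Y$ equivalent to $P^j$ by Proposition \ref{propMs}. Then
\[
U^j(X^j;Y^j)=\lambda_Y E_{Q_Y}[X^j+Y^j]+E\bigl[\Phi^j(\lambda_Y\,\mathrm dQ_Y/\mathrm dP^j)\bigr] > \lambda_Y E_{Q_Y}[X^j]+E\bigl[\Phi^j(\lambda_Y\,\mathrm dQ_Y/\mathrm dP^j)\bigr]\ge U^j(X^j;0),
\]
where the last inequality is the min-formula for $X^j$. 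The strict inequality uses $\lambda_Y>0$ and $E_{Q_Y}[Y^j]>0$. This gives a beneficial exchange.

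\textbf{Step 4 (upgrade to strict).} If a strictly beneficial exchange is wanted, we would redistribute a small deterministic amount from agent $j$ to the others using $\R^N_0\subseteq\mcY$, relying on norm continuity of $U^j$ (Proposition \ref{propU*}). For the others we need strict gains after adding $\epsilon/(N-1)$, which holds since $u^i$ is strictly increasing below its saturation level.

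The main obstacle is Step 3: it requires the minimizer for the \emph{shifted} endowment $X^j+Y^j$ to be equivalent to $P^j$, so that $E_{Q_Y}[f^j]>0$. We obtain this from the hypothesis $M_{e,\mathrm{loc}}(S^j)\cap\mathcal P_{\Phi^j}(P^j)\neq\emptyset$ via Proposition \ref{propMs}, applied with endowment $X^j+Y^j\in L_j$.
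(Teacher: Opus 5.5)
Your proof is correct, but it gets strictness by a different route than the paper.

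\textbf{The paper's route.} The paper truncates the trading part, $f^i=\min(k^i,n)+Y^i$, so that $f$ is a bounded nonnegative element of $\mathcal C_1\times\dots\times\mathcal C_N+\mcY$. It gets agent $j$'s strict gain from Lemma \ref{LemmaincreaseBis} via $cf^j\ge \frac{c}{N}1_{A_N}$, which works only for small $c$. It then redistributes with some $a\in\R^N_0$ by norm continuity. Finally it discards the trading component $c\min(k^i,n)\in\mathcal C_i$ using Lemma \ref{l2}, ending with the strictly beneficial exchange $Y_*=c(Y+a)$.

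\textbf{Your route.} You keep the arbitrage exchange $Y$ itself. The weak inequalities come from absorbing $k^i$ into the primal supremum. Strictness for $j$ comes from the dual problem at the shifted endowment $X^j+Y^j$, where the equivalent minimax measure gives $E_{Q_Y}[Y^j]=E_{Q_Y}[f^j]-E_{Q_Y}[k^j]\ge E_{Q_Y}[f^j]>0$. This is the mechanism of the proof of Lemma \ref{LemmaincreaseBis}, with $Y^j$ in place of $c1_A$. Your case split on whether $U^j(X^j;Y^j)=u^j(+\infty)$ removes any need to scale.

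\textbf{What each buys.} Your argument is shorter. It stays within Assumption \ref{assdiscrtime} rather than transferring Lemma \ref{LemmaincreaseBis} from the continuous-time setting. It also shows that the unscaled exchange of any collective arbitrage is already beneficial. The paper's argument manipulates only the nonnegative payoff $f$ and bounded truncations. That is what makes it a template for the constructive proof of Proposition \ref{CFLben}, where a free lunch is only a weak$^*$ limit and no decomposition $f=k+Y$ is available.

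\textbf{Two minor corrections.} In Step 2 the suprema over $\mathcal K_i$ and $\mathcal K_i+k^i$ need not coincide. However, $\mathcal K_i+k^i\subseteq\mathcal K_i$ gives the inequality you actually use. In Step 4, the inequality $U^i(X^i;Y^i+\epsilon)>U^i(X^i;Y^i)$ does not follow from pointwise strict monotonicity of $u^i$, because the supremum need not be attained. It does follow from the dual formula with $\lambda>0$, exactly as in your Step 3. Step 4 is not needed anyway, since the statement only asks for a beneficial exchange.
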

\begin{proof}
Let  $(k,Y)$ be a \textbf{CA}$(\mcY)$ and fix $W^1,\dots, W^N$ in the statement. We can assume that $(k,Y)\in {\sf X}_{i=1} ^{N} \mathcal K_i+\mathcal Y$ by Proposition \ref{propNCAKK}.
Recall the definition $\mathcal{C}_i$ in \eqref{Ci}.
Consider now $n>2\max_i\norm{Y^i}_\infty$ and $f^i=\min(k^i,n)+Y^i=k^i-(k^i-n)1_{\{k^i>n\}}+Y^i$. Clearly 
\begin{align*}
  \min(k^i,n)=k^i-(k^i-&n)1_{\{k^i>n\}}\in 
\big(\mathcal{K}_i -L^{0 }_+(\Omega, {F}^i,P^i)\big)\cap L^{\infty}(\Omega, {F}^i,P^i)\\
&\subseteq  \big(\mathcal{K}_i -L^{0 }_+(\Omega, {F}^i,P^i)\big)\cap L_i=\mathcal{C}_i.  
\end{align*}
We show that $f^i\geq 0$ a.s. for every $i$ and $P^j(f^j>0)>0$ for some $j$. Since $f^i=(k^i+Y^i)1_{k^i\leq n}+ (n+Y^i)1_{k^i> n}$ and $n>2\norm{Y^i}_\infty$ the first inequality is proved. As for the second, recall that since  $(k,Y)$ is a \textbf{CA}$(\mcY)$ we have $P^j(k^j+Y^j>0)>0$ for some $j$. Now, we observe that $f^j\geq (n+Y^i)1_{k^i> n}\geq \norm{Y^i}_\infty1_{k^i> n}$ so that if $P^j(k^j>n)>0$, trivially $P^j(f^j>0)>0$. If instead $k^j\leq n$ a.s., we see that $f^j=k^j+Y^j$ a.s., and in particular again $P^j(f^j>0)>0$.

Take now $A_n:=\{f^j\geq \frac1n\}$. For some $N$ big enough we have $P(A_N)>0$. For the given set $A_N$ pick $C$ from Lemma  \ref{LemmaincreaseBis}, which holds true verbatim in the discrete-time setup. 
The properties of $f$ ensure that
$$U^i(x^i;cf^i)\geq U^i(x^i;0)\,\forall i\text{ and } U^j(x^j;cf^j)\geq U^j\bigg(x^j;\frac{c}{N}1_{A_N}\bigg)>U^j(x^j;0)$$
using Lemma \ref{LemmaincreaseBis} for the last inequality.
Under Assumption \ref{assdiscrtime} we have (from Proposition \ref{propU*}) finiteness and norm continuity of $U^i(x^i;\cdot)$ on $L_i$.
In particular, we can pick $a\in \R^N_0$ with $a^j<0$ and $a^i>0$
 for every $i\neq j$ in such a way that 
$$U^i\big(x^i;c(f^i+a^i)\big)> U^i(x^i;0)\,\forall i.$$
Note  that $c(f+a)\in \mathcal{C}_1\times\dots\mathcal{C}_N+\mathcal{Y}$. Write $c(f+a)=k_*-\ell_*+Y_*$ with obvious notation. Then for every $i=1,\dots, N$
$$U^i(x^i;0)<U^i\big(x^i;c(f^i+a^i)\big)=U^i\big(x^i;k^i_*-\ell^i_*+Y^i_*\big)=U^i\big(x^i;Y^i_*\big)$$
exploiting in the last equality the fact that $k^i_* -\ell^i_* \in \mathcal{C}_i$, and Lemma \ref{l2}.

\end{proof}

\subsection{The continuous time framework}\label{seccontinuous}
In this section, in addition to the tacitly assumed Standing Assumptions and Assumption  \ref{assNew}, we make the following

\begin{assumption}
\label{asscontime}
Let $\mathcal T=[0,T]$ and for each $i$,
    \begin{enumerate}

        \item  
The process $S^i= (S^i_t)_{t \in [0,T] }$ is locally bounded.

        \item We adopt setup \textbf{A}, that is \begin{enumerate}
         \item $W^i=1$.
            \item $L_i=L^{\infty}(\Omega,\mathcal F^i,P^i)=L^{\infty}(\Omega,\mathcal F^i,\Pp)$ and  $L_i^*=L^{1}(\Omega,\mathcal F^i,P^i)$.
        \end{enumerate}
        \item $\mathbb R_0 ^N \subseteq \mcY \subseteq  L^{\infty}(\Omega,\mathbf F,\mathbf P)$.
    \end{enumerate}
\end{assumption}

\begin{remark}[On Assumption \ref{asscontime} ]
As already mentioned, since each $S^i$ is locally bounded, the constant variables $W^i=1$ are both strongly compatible and suitable, which explain the selection in item 2 (a). Similarly, this choice allows us to adopt as ambient space $L_i=L^{\infty}(\Omega,\mathcal F^i,P^i)$ and dual space $L_i^*=L^{1}(\Omega,\mathcal F^i,P^i))$.    
\end{remark}

In this framework
and as $W=1$, the class of admissible trading strategy for the agent $i$ is
\begin{equation}
\mathcal{H}_{i}^1:=\left\{ H\in L(S^i)\mid \exists \alpha>0\text{\mbox{ such that }}%
\int_{0}^{t}H_{s}dS^i_{s}\geq - \alpha , \, \forall t\in \lbrack 0,T]\right\}.
\label{HWi}
\end{equation}%
This is exactly the class of integrable predictable integrands which produce gains processes uniformly bounded from below, as given in \eqref{HW} for $W^i=1$ (i.e. Setup A). Consequently, we choose

\begin{equation}
\mathcal K_{i}=\left\{ \int_0^T H_t dS^i_t\mid H\in \mathcal{H}_{i}^1\right\}.
\label{k_defi}
\end{equation}%
With the selection \eqref{k_defi} and recalling Proposition \ref{propMs}, we conclude that\\
\begin{center}
    \textbf{Assumption \ref{asscontime} implies Assumption \ref{ass12} implies  Assumption \ref{assOK}.}
\end{center}
Thus under Assumption \ref{asscontime} we may apply the results of Sections \ref{secmainres}  and \ref{secharbenefits}. Define

$$\mathcal C_i:=((\mathcal{K}_{i}-L_{+}^{0}(\Omega,\mathcal{F}^i,\probp^i))\cap L^\infty(\Omega,\mathcal{F}^i,\probp^i) ),$$

$$ M_{e,\mathrm{loc}}(S^i)=\{Q \sim P \mid S^i \text{ is a } (\mathcal{F}^i_{t})_{t\in \mathcal T}\text{ -local }  \text{martingale under Q} \}.$$
The classical notion of No Free Lunch for agent $i$ ($\mathbf{NFL}_i$) is

\begin{equation*}
\label{eq:NFLi}
 \mathbf{NFL}_{i}: \text{  } \overline{\mathcal C_i}^\sigma\cap L_{+}^{0}(\Omega, \mathcal{F}^i , P^i)=\{0\}.
 \end{equation*}
 where $\overline{\mathcal{C}_i}^\sigma$ is the $\sigma(L^{\infty}(\Omega, \mathcal{F}^i, P^i),L^{\infty}(\Omega, \mathcal{F}^i , P^i))$- closure of $\mathcal{C}_i$.

 \medskip
In the setting described above, from Theorem 8.2.2 and page 133 by \cite{DS2006} we know that No Free Lunch with Vanishing Risk for agent $i$, denoted by $\mathbf{NFLVR}_i$ implies that $\mathcal C_i$ is weak$^*$ closed and thus $\mathbf{NFLVR}_i$ and  $\mathbf{NFL}_i$ are equivalent conditions. From Theorem 14.1.1 \cite{DS98} it then holds
\begin{theorem}\label{thNFL}
\begin{equation*}
 \mathbf{NFL}_{i} \;\Longleftrightarrow\;  M_{e,\mathrm{loc}}(S^i) \not = \emptyset .
\end{equation*}
\end{theorem}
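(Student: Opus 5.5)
This is the Kreps--Yan characterisation applied in Setup \textbf{A}. I would prove the two implications separately, working in the dual pair $(L^\infty(\Omega,\mathcal F^i,P^i),L^1(\Omega,\mathcal F^i,P^i))$, and I would not rely on $\mathbf{NFLVR}_i$ at any point.

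\emph{Sufficiency ($M_{e,\mathrm{loc}}(S^i)\neq\emptyset\Rightarrow\mathbf{NFL}_i$).} Fix $Q\in M_{e,\mathrm{loc}}(S^i)$ and set $Z=\rn{Q}{P^i}\in L^1$, $Z>0$ a.s.
\begin{enumerate}
\item Take $f\in\mathcal C_i$, so $f\le k=\int_0^T H\,dS^i$ with $H\in\mathcal H^1_i$. Under $Q$ the process $\int_0^\cdot H\,dS^i$ is a local martingale, since $S^i$ is a $Q$-local martingale and $H$ is predictable and $S^i$-integrable. It is also bounded below by $-\alpha$. By Fatou's lemma it is therefore a $Q$-supermartingale, so $E_Q[f]\le E_Q[k]\le 0$.
\item Because $Z\in L^1$, the functional $f\mapsto E_{P^i}[Zf]$ is $\sigma(L^\infty,L^1)$-continuous. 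Hence $E_Q[f]\le0$ also holds on $\overline{\mathcal C_i}^\sigma$.
\item If $f\in\overline{\mathcal C_i}^\sigma$ and $f\ge0$, then $E_Q[f]=0$. Since $Q\sim P^i$, this forces $f=0$, which is $\mathbf{NFL}_i$.
\end{enumerate}

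\emph{Necessity ($\mathbf{NFL}_i\Rightarrow M_{e,\mathrm{loc}}(S^i)\neq\emptyset$).} I would proceed in three steps.
\begin{enumerate}
\item \textbf{Separation.} Set $D:=\overline{\mathcal C_i}^\sigma$, a weak$^*$-closed convex cone containing $-L^\infty_+$. For each $A$ with $P^i(A)>0$, $\mathbf{NFL}_i$ gives $1_A\notin D$. Hahn--Banach separation in $(L^\infty,L^1)$ yields $Z_A\in L^1$ with $E[Z_Af]\le0$ for all $f\in D$ and $E[Z_A1_A]>0$. Since $-L^\infty_+\subseteq D$, we get $Z_A\ge0$, and we normalise $\|Z_A\|_1=1$.
\item \textbf{Exhaustion (Halmos--Savage/Yan).} The family $\mathcal Z$ of such densities is closed under countable convex combinations, because the polar inequality passes to the limit by dominated convergence. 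Let
\[
s:=\sup\{P^i(Z>0):Z\in\mathcal Z\},
\]
attained by some $Z^*$ via a countable convex combination. If $B:=\{Z^*=0\}$ had positive probability, then $\tfrac12(Z^*+Z_B)$ would have strictly larger support, a contradiction. So $Z^*>0$ a.s., and $dQ:=Z^*dP^i$ defines $Q\sim P^i$ with $E_Q[f]\le0$ on $\mathcal C_i$.
\item \textbf{Local martingale property.} Local boundedness of $S^i$ gives stopping times $\tau_n\uparrow T$ with $S^{i,\tau_n}$ bounded. For $s<t$, $B\in\mathcal F^i_s$ and a coordinate $j$, take
\[
H=\pm 1_B1_{(s,t]}1_{[0,\tau_n]}e_j .
\]
This $H$ is in $\mathcal H^1_i$ and produces a bounded gain, so $\pm 1_B\big(S^{i,j}_{t\wedge\tau_n}-S^{i,j}_{s\wedge\tau_n}\big)\in\mathcal C_i$. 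Hence
\[
E_Q\big[1_B\big(S^{i,j}_{t\wedge\tau_n}-S^{i,j}_{s\wedge\tau_n}\big)\big]=0 .
\]
So each stopped process is a $Q$-martingale, $S^i$ is a $Q$-local martingale, and $Q\in M_{e,\mathrm{loc}}(S^i)$.
\end{enumerate}

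The main obstacle is the exhaustion step, which upgrades the separating functionals from merely nonnegative to a single strictly positive density. The key check is that the sup of supports is attained through countable convex combinations while the polar inequality on $D$ is preserved. The remaining steps are routine once weak$^*$-closedness of $D$ and the stopping argument for local boundedness are in place.
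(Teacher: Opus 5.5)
Your proposal is correct in outline, but it takes a different route from the paper. The paper gives no direct proof. It first invokes the Delbaen--Schachermayer closedness theorem: under $\mathbf{NFLVR}_i$ the cone $\mathcal C_i$ is already $\sigma(L^\infty,L^1)$-closed, so $\mathbf{NFL}_i\iff\mathbf{NFLVR}_i$. It then applies the Delbaen--Schachermayer FTAP for locally bounded semimartingales.

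You instead run the Kreps--Yan argument:
\begin{enumerate}
\item Hahn--Banach separation in the pair $(L^\infty,L^1)$;
\item a Halmos--Savage exhaustion giving a strictly positive separating density;
\item a stopping argument, using local boundedness, that turns an equivalent separating measure into a local martingale measure.
\end{enumerate}
Your route is more elementary and self-contained. It is also the natural one here, because $\mathbf{NFL}_i$ is defined through the weak$^*$ closure, so the deep closedness theorem is not needed for this equivalence. The paper's route gives more: the weak$^*$ closure may be replaced by the norm closure, i.e.\ $\mathbf{NFL}_i\iff\mathbf{NFLVR}_i$. Your argument does not yield that.

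One step in the sufficiency direction needs a different justification. You claim that $\int_0^\cdot H\,dS^i$ is a $Q$-local martingale \emph{because} $S^i$ is a $Q$-local martingale and $H\in L(S^i)$. That implication is false in general, even for bounded integrators. In \'Emery's example, $S_t=\varepsilon 1_{\{t\ge\tau\}}$ is a bounded martingale and $H_t=1/t$ is in $L(S)$, yet $H\cdot S$ is only a sigma-martingale, not a local martingale.

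What rescues the step is the lower bound $\int_0^t H\,dS^i\ge-\alpha$, which you mention only afterwards. By the Ansel--Stricker theorem, a stochastic integral with respect to a local martingale that is bounded below by a constant is a local martingale. Fatou's lemma then gives the supermartingale property, hence $E_Q[k]\le 0$. So the lower bound is needed to obtain the local martingale property itself, not just to pass from local martingale to supermartingale.

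With that fix the proof goes through. The remaining steps are sound:
\begin{itemize}
\item weak$^*$-continuity of $f\mapsto E_{P^i}[Zf]$ for $Z\in L^1$;
\item closedness of the normalized polar under countable convex combinations, by dominated convergence;
\item the test integrands $H=\pm 1_B1_{(s,t]}1_{[0,\tau_n]}e_j$, which are bounded and predictable and give bounded gains in $\mathcal C_i$.
\end{itemize}
For the last item, take $(\tau_n)$ to be a genuine localizing sequence, with $P^i(\tau_n<T)\to 0$, rather than merely $\tau_n\uparrow T$.
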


In the multi--agent segmented market, agents may combine individual trading payoffs $k^i\in \mathcal{K}_i$
with admissible (for example  zero--sum) exchange $Y=(Y^1,\dots,Y^N)\in \mcY$. A {Collective Free Lunch}, as introduced in \cite{F25}, is a nonnegative bounded random vector $f=(f^1,\dots,f^N)\in L^\infty_+(\Omega,\mathbf F,\mathbf P)$,
with $P^j(f^j>0)>0$ for some $j$, that can be attained only \emph{approximately} by feasible collective
positions in the sense of weak$^\ast$ convergence. Recalling the formulation \eqref{NCAKK} of No Collective Arbitrage, namely $\mathcal C^{\mathcal Y} \cap \mathcal L_+ = \{0\}$, for $\mathcal L={\sf{X}}_{i=1}^N L_i$, and considering that in the present setting  $L_i=L^\infty(\Omega,\mathcal  F^i, P^i)$ one formally define

\begin{definition}
 Let $\mathbf P:=(P^1,\dots,P^N)$ and    
\[
\mathcal K^\mcY:={\sf{X}}_{i=1}^N (\mathcal K_i-L^0_+(\Omega,\mathcal F^i,P^i))+\mcY,\qquad \mathcal C^\mcY:=K^\mcY\cap L^\infty(\Omega,\mathbf{F},\mathbf P),
\]
and denote by $\overline{\mathcal C^\mcY}^{\sigma}$ the $\sigma(L^\infty(\Omega,\mathbf F,\mathbf P),L^1(\Omega,\mathbf F,\mathbf P))$--closure of $\mathcal C^\mcY$.
Then \textbf{No Collective Free Lunch} condition ($\mathbf{NCFL(\mathcal{Y})}$)  holds if
\[
\overline{\mathcal C^\mcY}^{\sigma}\cap L^\infty_+(\Omega,\mathbf F,\mathbf P)=\{0\}.
\]
\end{definition}

Intuitively, $\mathbf{NCFL(\mathcal{Y})}$ rules out "sequences" (nets) of bounded collective positions whose
terminal payoffs converge weak$^\ast$ to a sure nonnegative gain for all agents, with strict gain for at
least one agent.

The set of equivalent collective local martingale measures have been already introduced in \eqref{MartingaleLocalMeasures}. Now we recall the following version of the first Collective FTAP, \cite{F25} Theorem 3.7. 
\begin{theorem}
\footnote{ The $\mathbf{NCFL}(\mathcal{Y})$ condition used in \cite{F25} is formulated under a single probability measure, namely $P^1=\dots=P^N$. However, it can be easily shown that allowing for several (possibly different) probability measures, provided that they are all equivalent to one another, yields an equivalent notion of $\mathbf{NCFL}(\mathcal{Y})$. \cite{F25} Theorem 3.7 is more general than the version here stated.} \label{thNFL} Under the Assumption \ref{asscontime}, 
\begin{equation*}
\mathbf{NCFL(\mathcal{Y})} \iff \mathcal M_{e,\mathrm{loc}}(\mcY)  \neq \emptyset,
\end{equation*}
\end{theorem}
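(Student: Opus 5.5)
The plan is to prove the two implications separately: a direct integration argument for ($\Leftarrow$), and a Kreps--Yan separation argument on the product space for ($\Rightarrow$). The assumption $\mathbb R^N_0\subseteq\mcY$ will be used to equalize the normalizing constants of the separating densities.

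For ($\Leftarrow$), fix $\mathbf Q=(Q^1,\dots,Q^N)\in\mathcal M_{e,\mathrm{loc}}(\mcY)$ and consider the linear functional
\[
\phi_{\mathbf Q}(g):=\sum_{i=1}^N E_{Q^i}[g^i],\qquad g\in L^\infty(\Omega,\mathbf F,\mathbf P).
\]
Since $\rn{Q^i}{P^i}\in L^1(\Omega,\mathcal F^i,P^i)$, this functional is $\sigma(L^\infty,L^1)$-continuous. Take $f\in\mathcal C^\mcY$ and write $f^i=k^i-\ell^i+Y^i$ with $k^i\in\mathcal K_i$, $\ell^i\geq 0$ and $Y\in\mcY$.
\begin{itemize}
\item The gains process $\int H\,dS^i$ is bounded below by a constant and is a $Q^i$-local martingale, so it is a $Q^i$-supermartingale. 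Hence $E_{Q^i}[k^i]\le 0$.
\item Since $f^i$ and $Y^i$ are bounded, $k^i-\ell^i$ is bounded and $E_{Q^i}[k^i-\ell^i]\le 0$.
\end{itemize}
Therefore $\phi_{\mathbf Q}(f)\le\sum_i E_{Q^i}[Y^i]\le 0$, and by weak$^*$ continuity $\phi_{\mathbf Q}\le 0$ on $\overline{\mathcal C^\mcY}^\sigma$. If $f$ belongs to this closure and $f\geq 0$, then $\sum_i E_{Q^i}[f^i]\le 0$ forces $f^i=0$ $Q^i$-a.s. for every $i$, hence $P^i$-a.s. because $Q^i\sim P^i$. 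This gives $\mathbf{NCFL}(\mcY)$.

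For ($\Rightarrow$), I would identify $L^\infty(\Omega,\mathbf F,\mathbf P)$ with $L^\infty$ on the disjoint union $\widetilde\Omega=\Omega\times\{1,\dots,N\}$, equipped with $\widetilde P=\frac1N\sum_i P^i\otimes\delta_i$. On this space $\overline{\mathcal C^\mcY}^\sigma$ is a weak$^*$ closed convex cone. It contains $-L^\infty_+$, because $-L^\infty_+(\mathcal F^i)\subseteq\mathcal C_i$ and $0\in\mcY$. Under $\mathbf{NCFL}(\mcY)$ it meets $L^\infty_+$ only at $0$.

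These are exactly the hypotheses of the Kreps--Yan theorem. Concretely, for each $0\ne f\in L^\infty_+$ Hahn--Banach separation gives some $Z\in L^1(\widetilde\Omega)$ with $Z\geq 0$ (from $-L^\infty_+\subseteq$ closure) and $\langle Z,g\rangle\le 0$ on the closure, while $\langle Z,f\rangle>0$. The Halmos--Savage exhaustion argument then yields a single $Z$ with $Z>0$ $\widetilde P$-a.s.: one maximizes $\widetilde P(Z>0)$ over normalized such $Z$ and takes countable convex combinations, which stay in the polar. This exhaustion step is where I expect the main technical care. It is needed so that strict positivity holds on every component $\Omega\times\{i\}$ simultaneously, and working on $\widetilde\Omega$ is what makes the standard one-space Yan argument apply verbatim.

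Writing $Z=(Z^1,\dots,Z^N)$, set $c_i:=E_{P^i}[Z^i]>0$ and define $Q^i$ by $\rn{Q^i}{P^i}:=Z^i/c_i$, so that $Q^i\sim P^i$. The separation inequality reads $\sum_i c_iE_{Q^i}[g^i]\le 0$ for every $g\in\mathcal C^\mcY$. Taking $g=x\in\mathbb R^N_0\subseteq\mcY\subseteq\mathcal C^\mcY$ and also $-x$ gives $\sum_i c_ix^i=0$ for all $x\in\mathbb R^N_0$. This forces $c_1=\dots=c_N=:c$, so after dividing by $c$ we get $\sum_iE_{Q^i}[Y^i]\le 0$ for all $Y\in\mcY$. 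Moreover $\mcY\subseteq L^1(\Omega,\mathbf F,\mathbf Q)$ because $\mcY\subseteq L^\infty$.

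Finally, taking $g=(0,\dots,h,\dots,0)$ with $h\in\mathcal C_i$ in the $i$-th slot (admissible since $0\in\mcY$) gives $\rn{Q^i}{P^i}\in\mathcal C_i^0$. By the characterization of the normalized polar recalled before Proposition~\ref{propMs} (setup \textbf{A}, $S^i$ locally bounded), this means $Q^i\in M_\sigma(S^i)=M_{\mathrm{loc}}(S^i)$, and hence $Q^i\in M_{e,\mathrm{loc}}(S^i)$. Therefore $\mathbf Q\in\mathcal M_{e,\mathrm{loc}}(\mcY)$.
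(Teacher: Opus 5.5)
Your proof is correct. The paper gives no argument for this statement at all: it imports the result from \cite{F25}, Theorem 3.7, and adds a footnote asserting, without proof, that passing from a single reference measure to several equivalent $P^i$ gives an equivalent notion of $\mathbf{NCFL}(\mcY)$. Your proposal is therefore a genuinely self-contained alternative to the citation.

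For ($\Leftarrow$) you combine the supermartingale property of admissible gains under each $Q^i$ with weak$^*$ continuity of $g\mapsto\sum_i E_{Q^i}[g^i]$. For ($\Rightarrow$) you use three ingredients:
\begin{enumerate}
\item a Kreps--Yan separation on the disjoint union $\Omega\times\{1,\dots,N\}$;
\item the inclusion $\R^N_0\subseteq\mcY$, which forces the normalizing constants $c_i$ to coincide;
\item the polar characterization recalled from \cite{BF08}, which identifies $Q^i\in M_{e,\mathrm{loc}}(S^i)$.
\end{enumerate}

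What your route buys is that heterogeneous equivalent $P^i$ are handled directly, so the footnote's reduction becomes unnecessary. It also makes transparent exactly where $\R^N_0\subseteq\mcY$ and the local boundedness of $S^i$ are used. The citation, by contrast, covers the more general version of the result in \cite{F25}.

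One small justification should be made explicit. For $H\in\mathcal H^1_i$, the process $\int H\,dS^i$ is a priori only a $Q^i$-sigma-martingale: a stochastic integral against a bounded martingale need not be a local martingale. It is the uniform lower bound, via the Ansel--Stricker result, that makes it a $Q^i$-local martingale and hence a supermartingale.
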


 \begin{corollary}[of Theorem \ref{corben}]
 \label{corbenbis}
Under Assumption \ref{asscontime}, the following are equivalent
 \begin{enumerate}
     \item There exists a strictly beneficial $\widehat{Y} \in \mcY$.
     \item There exists a beneficial $\widehat{Y} \in \mcY$.
     \item $\mathbf Q_X \notin \mathcal M_{\mathrm{loc}}(\mcY)$, where $\mathbf Q_X:=(Q_{X^1}^1,\dots,Q_{X^N}^N)$ is the minimax vector and 
     \begin{align*}
\mathcal M_{\mathrm{loc}}(\mcY) &:=\bigg\{ \mathbf Q=(Q^1,\dots,Q^N)  \mid Q^i \in  {M}_{\mathrm{loc}}(S^i) \, \, \text{ and  }   \, \sum_{i=1}^N E_{Q^i}[Y^i] \leq 0 \text {   }\forall Y \in \mcY \bigg\}.
\end{align*}
\end{enumerate}
 \end{corollary}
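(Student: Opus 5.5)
The plan is to reduce the statement to Theorem \ref{corben}, in exactly the way Corollary \ref{corbentris} was obtained in discrete time.

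\textbf{Step 1: verify the hypotheses of Theorem \ref{corben}.} Under Assumption \ref{asscontime} each $S^i$ is locally bounded. By Remark \ref{remW1}, $W^i=1$ is then suitable and strongly compatible, so Assumption \ref{ass12} holds in Setup \textbf{A}. Here $L_i=L^\infty(\Omega,\mathcal F^i,P^i)$, $L_i^*=L^1(\Omega,\mathcal F^i,P^i)$, and $\mathcal K_i$ is given by \eqref{k_defi}. Proposition \ref{propMs} then yields Assumption \ref{assOK}. Item 3 of Assumption \ref{asscontime} gives $\R^N_0\subseteq\mcY\subseteq L_1\times\dots\times L_N$. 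The Standing Assumptions, including $U^i(X^i;0)<u^i(+\infty)$, are in force throughout. Hence Theorem \ref{corben} applies, and items 1, 2 and 3 of Theorem \ref{corben} are equivalent.

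\textbf{Step 2: identify the minimax measures.} By Proposition \ref{propMs}, the unique minimizer $Q^i_{X^i}$ of \eqref{dual14} belongs to $M_\sigma(S^i)$ and satisfies $\rn{Q^i_{X^i}}{P^i}\in L^1(\Omega,\mathcal F^i,P^i)$. Since $S^i$ is locally bounded, $M_\sigma(S^i)=M_{\mathrm{loc}}(S^i)$, as recalled at the start of Section \ref{Benefits}. Therefore each component of $\mathbf Q_X$ lies in $M_{\mathrm{loc}}(S^i)$. Moreover, because $\mcY\subseteq L^\infty$ and the densities are in $L^1$, every expectation $E_{Q^i_{X^i}}[Y^i]$ is finite.

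\textbf{Step 3: match the conditions.} Since the first membership requirement of $\mathcal M_{\mathrm{loc}}(\mcY)$ is automatically met by $\mathbf Q_X$ (Step 2), we have
\[
\mathbf Q_X\notin\mathcal M_{\mathrm{loc}}(\mcY)
\iff \sum_{i=1}^N E_{Q^i_{X^i}}[Y^i]>0 \text{ for some } Y\in\mcY.
\]
The right-hand side is precisely item 3 of Theorem \ref{corben}, which gives the equivalence of the three items.

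The only point requiring care is Step 2, namely the identification of sigma-martingale measures with local martingale measures for locally bounded $S^i$, and the fact that the polar of $\mathcal C_i$ is described by $M_\sigma(S^i)$ (\cite{BF08}). Both are taken from earlier results, so no genuine obstacle is expected.
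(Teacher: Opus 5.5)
Your proposal is correct and follows the paper's proof. Like the paper, you check that Assumption \ref{assOK} holds in Setup A and use Proposition \ref{propMs} together with $M_\sigma(S^i)=M_{\mathrm{loc}}(S^i)$ (for locally bounded $S^i$) to place each $Q^i_{X^i}$ in $M_{\mathrm{loc}}(S^i)$ with density in $L^1(\Omega,\mathcal F^i,P^i)$; you then observe that $\mathbf Q_X\notin\mathcal M_{\mathrm{loc}}(\mcY)$ is exactly item 3 of Theorem \ref{corben}.
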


\begin{proof}
    We already remarked that under Assumption \ref{asscontime} also Assumption \ref{assOK} holds true and, by Proposition \ref{propMs}, $Q^i_{X^i} \in {M}_{\sigma}(S^i)={M}_{\mathrm{loc}}(S^i) $ satisfies $\rn{Q^i_{X^i}}{P} \in L_i^*=L^{1}(\Omega,\mathcal F^i,P^i)$. The thesis now follows from Theorem \ref{corben} noticing that $\mathbf Q_X \notin \mathcal M_{\mathrm{loc}}(\mcY)$ if and only if $\sum_{i=1}^N E_{Q_{X^i}^i}[Y^i]> 0 \text {  for some  } Y \in \mcY.$ 
\end{proof}

Since $\mcY \subseteq L^{\infty}(\Omega,\mathbf F,\mathbf P)$, it follows that  $ \mathcal C^{\mcY}={\sf{X}}_{i=1}^N \mathcal C^i + \mcY$.
As a consequence, the condition $\mathbf{NCFL}(\mcY)$ implies $\mathbf{NFL}_i$ for each agent $i$.
However, it may still occur that a Collective Free Lunch $\mathbf{CFL}(\mcY)$ exists even when each individual market is free of Free Lunch. By Theorem~\ref{thNFL}, this is equivalent to the condition
$\mathcal{M}_{e,\mathrm{loc}}(S^i) \neq \emptyset \quad \text{for all } i.$
This latter condition—implied by the assumption in the next proposition—is therefore the natural one to impose in this setting.

The following proposition shows that the existence of a Collective Free Lunch entails the existence of beneficial exchanges. This result is an immediate consequence of Theorem~\ref{corben}, and will also be established by a direct argument in Section~\ref{secinfoproof}.

\begin{proposition}\label{CFLben}
 Suppose Assumption \ref{asscontime} holds and that, for each $i=1,\dots, N$, $M_{e,\mathrm{loc}} (S^i)\cap \mathcal P_{\Phi^i}(P^i) \neq \emptyset$.  Then the existence of a  Collective Free lunch  $\mathbf{CFL}(\mathcal{Y})$ implies the existence of strict beneficial exchanges.
\end{proposition}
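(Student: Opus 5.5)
The plan mirrors the proof of Proposition \ref{NAben}, with Theorem \ref{thNFL} (the collective FTAP of \cite{F25}) taking the place of Theorem \ref{FTAP:RBis}.

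The first step is to establish equivalence of the minimax measures. Under Assumption \ref{asscontime}, Setup \textbf{A} is in force and Assumption \ref{assOK} holds, so Proposition \ref{propMs} applies to each agent. It gives the unique minimax measures $Q^i_{X^i}\in M_{\sigma}(S^i)=M_{\mathrm{loc}}(S^i)$ with $\rn{Q^i_{X^i}}{P^i}\in L^1(\Omega,\mathcal F^i,P^i)$. Because we additionally assume $M_{e,\mathrm{loc}}(S^i)\cap\mathcal P_{\Phi^i}(P^i)\neq\emptyset$, the last assertion of Proposition \ref{propMs} yields $Q^i_{X^i}\sim P^i$, i.e.\ $Q^i_{X^i}\in M_{e,\mathrm{loc}}(S^i)$ for every $i$.

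The second step argues by contradiction. Suppose $\sum_{i=1}^N E_{Q^i_{X^i}}[Y^i]\le 0$ for all $Y\in\mcY$. Since $\mcY\subseteq L^\infty(\Omega,\mathbf F,\mathbf P)$ and each density is in $L^1$, we have the integrability $\mcY\subseteq L^1(\Omega,\mathbf F,\mathbf Q_X)$. Hence $\mathbf Q_X=(Q^1_{X^1},\dots,Q^N_{X^N})\in\mathcal M_{e,\mathrm{loc}}(\mcY)$ as defined in \eqref{MartingaleLocalMeasures}, so that set is nonempty. Theorem \ref{thNFL} then gives $\mathbf{NCFL}(\mcY)$, contradicting the existence of a $\mathbf{CFL}(\mcY)$. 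Therefore there is some $Y\in\mcY$ with $\sum_i E_{Q^i_{X^i}}[Y^i]>0$, which is item 3 of Corollary \ref{corbenbis} (equivalently of Theorem \ref{corben}). That result provides a strictly beneficial $\widehat Y\in\mcY$; the requirement $\R^N_0\subseteq\mcY$ is part of Assumption \ref{asscontime}.

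The only step needing care is the applicability of Theorem \ref{thNFL} in the direction ``$\mathcal M_{e,\mathrm{loc}}(\mcY)\neq\emptyset\Rightarrow\mathbf{NCFL}(\mcY)$''. This direction is the easy one and can be checked directly if desired. For $\mathbf Q\in\mathcal M_{e,\mathrm{loc}}(\mcY)$, every $k^i\in\mathcal K_i$ is a stochastic integral bounded below, hence a $Q^i$-supermartingale with $E_{Q^i}[k^i]\le0$. The functional $f\mapsto\sum_i E_{Q^i}[f^i]$ is therefore $\le0$ on $\mathcal C^{\mcY}$, and since it is $\sigma(L^\infty,L^1)$-continuous it stays $\le 0$ on the weak$^*$ closure. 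For $f$ in the closure with $f\ge0$, this forces $\sum_i E_{Q^i}[f^i]=0$, and equivalence $Q^i\sim P^i$ gives $f=0$. So even if one is wary of the footnote about multiple measures, the needed implication holds, and the proof reduces to the two steps above.
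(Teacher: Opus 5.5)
Your proof is correct and follows the paper's own (first) proof of Proposition \ref{CFLben}: Proposition \ref{propMs} gives $Q^i_{X^i}\in M_{e,\mathrm{loc}}(S^i)$, so $\sum_i E_{Q^i_{X^i}}[Y^i]\le 0$ on $\mcY$ would put $\mathbf Q_X$ in $\mathcal M_{e,\mathrm{loc}}(\mcY)$, contradicting the collective FTAP of \cite{F25} under $\mathbf{CFL}(\mcY)$, and the implication 3 $\Rightarrow$ 1 of Theorem \ref{corben} then concludes. Your additional direct check of the easy implication $\mathcal M_{e,\mathrm{loc}}(\mcY)\neq\emptyset\Rightarrow\mathbf{NCFL}(\mcY)$ is sound; it uses the supermartingale property of bounded-below integrals and the weak$^*$ continuity of $f\mapsto\sum_i E_{Q^i}[f^i]$, and it removes any reliance on the multi-measure footnote.
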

\begin{proof}
The assumption $M_{e,\mathrm{loc}} (S^i)\cap \mathcal P_{\Phi^i}(P^i) \neq \emptyset$ implies, by Proposition \ref{propMs} that the minimizers $Q^i_{X^i} \in {M}_{e,\mathrm{loc}}(S^i) $
   for each $i$. If $\sum_{i=1}^N E_{Q^i_{X^i}}[Y^i] \leq 0 \text {   }\forall Y \in  \mcY $ then $(Q^1_{X^1}, \dots,Q^N_{X^N}) \in \mathcal M_{e,\mathrm{loc}}(\mcY) $, defined in \eqref{MartingaleLocalMeasures}, but this is impossible as $\mathcal M_{e,\mathrm{loc}}(\mcY) 
 =\emptyset $, due to to the existence of a $\mathbf{CFL}(\mathcal{Y})$ and Theorem \ref{thNFL}. Thus $\sum_{i=1}^N E_{Q^i_{X^i}}[Y^i] > 0 \text {  for some  } Y \in \mcY $. Theorem \ref{corben} then imples the thesis.
\end{proof}

\subsubsection{An informative proof of existence of beneficial exchanges if a $\mathbf{CFL}(\mathcal{Y})$ exists}
\label{secinfoproof}
Proposition \ref{CFLben} informs us that the existence of a $\mathbf{CFL}(\mathcal{Y})$ yields existence of beneficial exchanges. The following is a more constructive and informative proof of Proposition \ref{CFLben}. Indeed, it shows how beneficial exchanges can be constructed from a net $w^*$-converging to a $\mathbf{CFL}(\mathcal{Y})$, via  suitable deterministic rebalancing (exchanges in $\R^N_0$).

\begin{lemma}\label{LemmaincreaseBis}
Suppose that for each $i=1,\dots, N$, $M_{e,\mathrm{loc}} (S^i)\cap \mathcal P_{\Phi^i}(P^i) \neq \emptyset$. Under the Assumption \ref{asscontime}, for any  $A\in \mathcal F^i$ with $\Pp (A)>0$, there exists a $C>0$ such that for every $0<c<C$
$$U^i(X^i;c 1_A)>U^i(X^i;0),\quad\text{ for all } i=1,\dots , N.$$
\end{lemma}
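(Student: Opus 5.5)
The plan is to use the differentiability of the indirect utility supplied by Proposition \ref{corgatbis}, together with the equivalence of the minimax measure with $P^i$ from Proposition \ref{propMs}.

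Fix $i$ and $A\in\mathcal F^i$ with $\Pp(A)>0$. Under Assumption \ref{asscontime}, Assumption \ref{assOK} holds with $L_i=L^\infty(\Omega,\mathcal F^i,P^i)$, so $1_A\in L_i$. By item 5 of the Standing Assumptions, $U^i(X^i;0)<u^i(+\infty)$. Proposition \ref{corgatbis} then applies with $Y:=1_A$. It shows that $F^i_{1_A}(\alpha)=U^i(X^i+\alpha 1_A)=U^i(X^i;\alpha 1_A)$ is well defined near $0$ and differentiable at $0$, with
$$\rn{}{\alpha}F^i_{1_A}\Big|_{\alpha=0}=\lambda_{X^i}\,Q^i_{X^i}(A),$$
where $\lambda_{X^i}>0$. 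The key point is that this derivative is strictly positive. Since $M_{e,\mathrm{loc}}(S^i)\cap\mathcal P_{\Phi^i}(P^i)\neq\emptyset$, Proposition \ref{propMs} gives $Q^i_{X^i}\sim P^i\sim\Pp$, and hence $Q^i_{X^i}(A)>0$. So $F^i_{1_A}$ has a strictly positive derivative at $0$.

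Next we turn the positive derivative into a strict increase on a whole interval. We do not need differentiability on $(0,C)$, because concavity does the work. Indeed, $U^i$ is concave by Proposition \ref{propU*}, so $F^i_{1_A}$ is concave in $\alpha$. Positive derivative at $0$ gives some $\alpha^i>0$ with $F^i_{1_A}(\alpha^i)>F^i_{1_A}(0)$. For $0<c<\alpha^i$, write $c=t\alpha^i$ with $t\in(0,1)$. Concavity then yields
$$F^i_{1_A}(c)\ge (1-t)F^i_{1_A}(0)+tF^i_{1_A}(\alpha^i)>F^i_{1_A}(0).$$
Monotonicity of $U^i$ gives the same conclusion, but concavity already suffices.

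Finally we make the bound uniform over agents. For each $i$ we obtain such an $\alpha^i$, and we set $C:=\min_i\alpha^i>0$. This works because there are finitely many agents and $A\in\mathcal F^i$ (if $A$ must belong to each $\mathcal F^i$, we interpret the statement for those agents for which $1_A\in L_i$). The only delicate step is the strict positivity of $Q^i_{X^i}(A)$. That is precisely where the equivalence hypothesis $M_{e,\mathrm{loc}}(S^i)\cap\mathcal P_{\Phi^i}(P^i)\neq\emptyset$ is needed; without it the minimax measure could charge $A$ with zero mass and the derivative would vanish.
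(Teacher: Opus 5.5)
Your proof is correct, but it takes a different route from the paper's.

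\textbf{Your route.} You work at the unperturbed endowment. Proposition \ref{corgatbis} gives the derivative $\lambda_{X^i}Q^i_{X^i}(A)$ of $\alpha\mapsto U^i(X^i+\alpha 1_A)$ at $0$. Proposition \ref{propMs} makes $Q^i_{X^i}\sim P^i$, so this derivative is strictly positive. Concavity then turns the local increase into $U^i(X^i;c1_A)>U^i(X^i;0)$ on $(0,\alpha^i]$. This is the same mechanism as the step 3.~$\Rightarrow$~1.\ in the proof of Theorem \ref{corben}.

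\textbf{The paper's route.} The paper works at the perturbed endowment. By norm continuity it chooses $C$ so that $U^i(X^i;c1_A)<u^i(+\infty)$ for $0<c<C$. It then takes the dual optimizer $(\widehat\lambda,\widehat Q)$ for $X^i+c1_A$, which is equivalent to $P^i$ by the same hypothesis. It concludes in one line by dropping the term $\widehat\lambda\, c\,\widehat Q(A)>0$ and bounding what remains from below by the dual value at $X^i$. In effect this is the supergradient inequality $U^i(X^i+c1_A)\ge U^i(X^i)+c\,\xi_c(1_A)$ with $\xi_c=\widehat\lambda\,\rn{\widehat Q}{P^i}\in\partial U^i(X^i+c1_A)$.

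\textbf{Comparison.} The paper's argument is more elementary: it needs neither Gateaux differentiability, nor uniqueness of the minimax pair, nor the envelope theorem. The price is that it uses equivalence of the dual optimizer at every perturbed endowment $X^i+c1_A$. Yours needs equivalence of the single measure $Q^i_{X^i}$ and reuses results already established.

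Taking $C:=\min_i\alpha^i$ is a sensible reading of the loosely indexed ``for all $i$''. In the only application (proof of Proposition \ref{CFLben}) the lemma is invoked for a single agent $j$ with $A\in\mathcal F^j$.
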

\begin{proof}
    The proof is postponed to Appendix \ref{appmix}.
\end{proof}

\begin{proof}[Proof of Proposition \ref{CFLben}]
  We suppose that Assumption \ref{asscontime} holds and that there exists a Collective Free Lunch, namely:
 $$\exists f\in \overline{\mathcal C^\mathcal{Y}}^\sigma\text{ s.t. } f^i\geq 0\,\, \forall i\text{ and }P^j(f^j>0)>0\text{ for some }j$$
 where $\overline{\mathcal C^{\mathcal{Y}}}^\sigma$ stands for the $\sigma(L^\infty(\Omega,\mathbf{F}, \mathbf {P}),L^1(\Omega,\mathbf{F}, \mathbf {P}))$-closure of $\mathcal{C}^{\mcY}$.
 We show that there exists
$Y_* \in \mathcal Y $ such that
 $U^i(X^i;{Y_*^i}) > U^i(X^i;0) \, \text{ for all } i=1,\dots,N.$
Observe that since $\mathcal{C}^\mathcal{Y}$ is a convex cone its weak$^*$ closure coincides with its closure in the Mackey topology: $\overline{\mathcal C^\mathcal{Y}}^\sigma=\overline{\mathcal C^\mathcal{Y}}^\tau$ for $\tau=\tau(L^\infty(\Omega,\mathbf{F}, \mathbf {P}),L^1(\Omega,\mathbf{F}, \mathbf {P}))$ said Mackey topology.
Take $A_n:=\{f^j\geq \frac1n\}$. For some $N$ big enough we have $P^j(A_N)>0$. For the given set $A_N$ pick $C$ from Lemma \ref{LemmaincreaseBis}. 
The properties of $f$ ensure that
$$U^i(X^i;cf^i)\geq U^i(X^i;0)\,\forall i\text{ and } U^j(X^j;cf^j)\geq U^j\bigg(X^j;\frac{c}{N}1_{A_N}\bigg)>U^j(X^j;0)$$
using Lemma \ref{LemmaincreaseBis} for the last inequality.
Under Assumption \ref{asscontime} we have (from Proposition \ref{propU*}) finiteness and norm continuity of $U^i(X^i;\cdot)$ on $L^\infty(\Omega, \mathcal{F}^i,P^i)$. In particular, we can pick $a\in \R^N_0$ with $a^j<0$ and $a^i>0$
 for every $i\neq j$ in such a way that 
$$U^i\big(X^i;c(f^i+a^i)\big)> U^i(X^i;0)\,\forall i.$$
Note that $c(f+a)\in \overline{\mathcal{C}^\mathcal{Y}}^\sigma=\overline{\mathcal{C}^\mathcal{Y}}^\tau$ since $a\in \R^N_0\subseteq \mathcal{Y}$ and the closure of a convex cone is a convex cone.
Let then $k_\alpha+Y_\alpha-\ell_\alpha\in \mathcal{C}^\mathcal{Y}$ define a net converging in $\tau$ to $c(f+a)$.
We observe that for every $\alpha$ and $i=1,\dots,N$
$$ U^i(X^i; k^i_\alpha+Y^i_\alpha-\ell^i_\alpha)=U^i(X^i; Y^i_\alpha).$$ Indeed,  $Y_\alpha\in \mcY \subseteq L^\infty(\Omega,\mathbf{F}, \mathbf {P})$ and hence if $k_\alpha+Y_\alpha-\ell_\alpha\in\mathcal{C}^\mathcal{Y}$ then $k^i_\alpha-\ell^i_\alpha\in\mathcal{C}_i$. Now we see 
\begin{align*}
 \lim_\alpha U^i(X^i;Y^i_\alpha)&=\lim_\alpha U^i(X^i;k^i_\alpha+Y^i_\alpha-\ell^i_\alpha) =U^i(X^i;c(f^i+a^i))>U^i(X^i;0).
\end{align*}
where the second equality follows from the Mackey continuity of indirect utility (Proposition \ref{mackeycont}). 
Then, there exists $\beta$ such that $\alpha \succeq \beta$ implies 
$U^i\big(X^i;Y^i_\alpha\big) >U^i(X^i;0)$ for $i=1,\dots,N$. In particular, $Y_*=Y_\beta$ behaves as desired.  
\end{proof}

\begin{table}[ht]
\centering
\caption{Summary of core implications regarding Collective Arbitrage (CA) and beneficial claims.}
\label{tab:implications}
\renewcommand{\arraystretch}{1.5}
\begin{tabular}{|l|p{0.62\textwidth}|}
\hline
\textbf{Market Condition} & \textbf{Implication} \\ \hline
Existence of a CA & Existence of a beneficial $Y \in \mathcal{Y}$ \\ \hline
Existence of a CFL & Existence of a beneficial $Y \in \mathcal{Y}$  \\ \hline
No Collective Arbitrage (NCA) & A beneficial $Y \in \mathcal{Y}$ may  or may not exist \\ \hline
No Collective Free Lunch (NCFL) & A beneficial $Y \in \mathcal{Y}$ may  or may not exist \\ \hline
$\mathbf Q_X \notin \mathcal{M}_{e,\sigma}(\mcY)$  and NCFL & Beneficial $Y \in \mathcal{Y}$ exist \\ \hline
Completeness of single agents' markets & If NCA holds, then no beneficial $Y \in \mathcal{Y}$ exists \\ \hline
\end{tabular}
\end{table}

\section{Example: discrete time}
\label{secexamplesdiscr}

We provide an explicit example where: dim$(\mcY) =3$, \textbf{NCA}$(\mcY)$ holds true and the market is $\mcY$-collectively incomplete ($ \abs{\Me} =\infty) $, there exists a global arbitrage ($M_e(S^1) \cap M_e(S^2) = \emptyset$),  dim$(\widehat{\mcY})=2$ and $|\mathcal{M}_e(\widehat{\mcY})|=\infty$. 
We consider a two periods market model, $\mathcal T=\{0,1,2\}$,  with two agents and two stocks and where each agent $i$, $i=1,2$, may invest only in the stock $X^i$ and in the riskless asset $X^0_t=1$ for all $t\in \mathcal T$. 
The evolution of the price processes is described in Figure \ref{figtree2}. We take $|\Omega|=8$, a common filtration: $\mcF_0=\{\emptyset, \Omega\},\mcF_2=\mathcal{P}(\Omega) $ and $$ \mcF_1=\sigma(A_1=\{\omega_1,\omega_2\},A_2=\{\omega_3,\omega_4\},A_3=\{\omega_5,\omega_6\}, A_4=\{\omega_7,\omega_8\}
).$$
We also select 
$$\mathcal{Y} = \left\{ Y \in L^0(\Omega, \mathbf{F}_1, P) \mid \sum_{i=1}^N Y^i = 0  \right\}.$$
With this choice, the set of collective equivalent martingale measures becomes
 \begin{align*}
\mathcal M_{e,\mathrm{loc}}(\mcY)=\mathcal M_{e}(\mcY) &:=\bigg\{ \mathbf Q=(Q^1,Q^2)  \mid Q^i \in  {M}_{e}(S^i) \, \text{ and }   \,  E_{Q^1}[Y^1]+E_{Q^2}[Y^2] \leq 0 \text {   }\forall Y \in \mcY \bigg\}\\
&=\bigg\{ \mathbf Q=(Q^1,Q^2)  \mid Q^i \in  {M}_{e}(S^i) \, \text{ s.t. } Q^1=Q^2 \text{ on } \mathcal F_1 \bigg\}.
\end{align*}
In the following, we
express measures $Q$ on $\mcF_1$ as $4$-tuples, in the form $(Q(A_1),Q(A_2),Q(A_3),Q(A_4))$.

In Figure \ref{figtree2} the parameters $q,q',p,p'$ parametrizing the sets of equivalent martingale measures for the two stocks satisfy:
\begin{equation}
\label{eqconstraint}
    0<q<\frac12,\,0<q'<\frac12-q\quad\text{and}\quad 0<p<\frac12,\,p<p'<\frac45-\frac35p
\end{equation}


    Denoting by $M^i_e|_{\mcF_1}$ the collection of restrictions to $\mcF_1$ of elements in $M_e(S^i)={M}_{e, \mathrm{loc}}(S^i)$, one readily verifies  that
\begin{equation}\label{mesonf1}
    \begin{split}
     M^1_e|_{\mcF_1}&=\bigg\{\bigg(\frac12, q,q', \frac12-(q+q')\bigg),0<q<\frac12,\,0<q'<\frac12-q \bigg\},\\
    M^2_e|_{\mcF_2}&=\bigg\{\bigg(p', p,\frac14(p'-p), 1-\Big(\frac54 p'+\frac34 p\Big)\bigg), 0<p<\frac12,\,p<p'<\frac45-\frac35p\bigg\},   
    \end{split}
\end{equation}
and that the set of restrictions to $\mcF_1$ of elements in $\Me $  is
\begin{equation}\label{MeMe}
\Me |_{\mcF_1}=\bigg\{\bigg(\Big(\frac12, q,\frac18-\frac14 q, \frac38-\frac34q\Big),\Big(\frac12, q,\frac18-\frac14 q, \frac38-\frac34q\Big)\bigg), 0<q<\frac12\bigg\}.
\end{equation}

\begin{figure}
\begin{center}
\tikzstyle{level 1}=[level distance=2cm, sibling distance=2.5cm,->]
\tikzstyle{level 2}=[level distance=1.5cm, sibling distance=1cm,->]

\tikzstyle{bag} = [text width=1.5em, text centered]
\tikzstyle{end} = []

\begin{tikzpicture}[grow=right, sloped]
\node[bag](c1){$8$}
    child {
        node[bag]{$4$}        
            child {
                node[end, label=right:
                    {$2$}](y18) {}
                edge from parent
                node[above] {}
                node[below]  {$\blue{1/2}$}
            }
            child {
                node[end, label=right:
                    {$6$}](y17) {}
                edge from parent
                node[above] {$\blue{1/2}$}
                node[below]  {}
            }
            edge from parent 
            node[above] {}
            node[below]  {$\blue{(1/2)-(q+q')}$}
    }
    child {
        node[bag]{$4$}        
            child {
                node[end, label=right:
                    {$3$}](y16) {}
                edge from parent
                node[above] {}
                node[below]  {$\blue{1/2}$}
            }
            child {
                node[end, label=right:
                    {$5$}](y15) {}
                edge from parent
                node[above] {$\blue{1/2}$}
                node[below]  {}
            }
            edge from parent 
            node[above] {$\blue{q'}$}
            node[below]  {}
    }
    child {
        node[bag]{$4$}        
            child {
                node[end, label=right:
                    {$2$}](y14) {}
                edge from parent
                node[above] {}
                node[below]  {$\blue{1/2}$}
            }
            child {
                node[end, label=right:
                    {$6$}](y13) {}
                edge from parent
                node[above] {$\blue{1/2}$}
                node[below]  {}
            }
            edge from parent 
            node[above] {$\blue{q}$}
            node[below]  {}
    }
    child {
        node[bag] {$12$}        
        child {
                node[end, label=right:
                    {$8$}] (y12){}
                edge from parent
                node[above] {}
                node[below]  {$\blue{3/4}$}
            }
            child {
                node[end, label=right:
                    {$24$}](y11) {}
                edge from parent
                node[above] {$\blue{1/4}$}
                node[below]  {}
            }
        edge from parent         
            node[above] {$\blue{1/2}$}
            node[below]  {}
    };

\node[bag](y21) at ([xshift=1.1cm]y11) {$\red{\omega_1}$};
\node[bag](y22) at ([xshift=1.1cm]y12) {$\red{\omega_2}$};
\node[bag](y23) at ([xshift=1.1cm]y13) {$\red{\omega_3}$};
\node[bag](y24) at ([xshift=1.1cm]y14) {$\red{\omega_4}$};
\node[bag](y25) at ([xshift=1.1cm]y15) {$\red{\omega_5}$};
\node[bag](y26) at ([xshift=1.1cm]y16) {$\red{\omega_6}$};
\node[bag](y27) at ([xshift=1.1cm]y17) {$\red{\omega_7}$};
\node[bag](y28) at ([xshift=1.1cm]y18) {$\red{\omega_8}$};
\node[bag](c2) at ([xshift=6cm]c1){$20$}
    child {
        node[bag]{$20$}        
            child {
                node[end, label=right:
                    {$16$}](z18) {}
                edge from parent
                node[above] {}
                node[below]  {$\blue{1/2}$}
            }
            child {
                node[end, label=right:
                    {$24$}](z17) {}
                edge from parent
                node[above] {$\blue{1/2}$}
                node[below]  {}
            }
            edge from parent 
            node[above] {}
            node[below]  {$\blue{1-((5/4)p'+(3/4)p)}$}
    }
    child {
        node[bag]{$4$}        
            child {
                node[end, label=right:
                    {$2$}](z16) {}
                edge from parent
                node[above] {}
                node[below]  {$\blue{1/2}$}
            }
            child {
                node[end, label=right:
                    {$6$}](z15) {}
                edge from parent
                node[above] {$\blue{1/2}$}
                node[below]  {}
            }
            edge from parent 
            node[above] {$\blue{(1/4)(p'-p)}$}
            node[below]  {}
    }
    child {
        node[bag]{$16$}        
            child {
                node[end, label=right:
                    {$12$}](z14) {}
                edge from parent
                node[above] {}
                node[below]  {$\blue{1/2}$}
            }
            child {
                node[end, label=right:
                    {$20$}](z13) {}
                edge from parent
                node[above] {$\blue{1/2}$}
                node[below]  {}
            }
            edge from parent 
            node[above] {$\blue{p}$}
            node[below]  {}
    }
    child {
        node[bag] {$24$}        
        child {
                node[end, label=right:
                    {$48$}] (z12){}
                edge from parent
                node[above] {}
                node[below]  {$\blue{1/4}$}
            }
            child {
                node[end, label=right:
                    {$16$}](z11) {}
                edge from parent
                node[above] {$\blue{3/4}$}
                node[below]  {}
            }
        edge from parent         
            node[above] {$\blue{p'}$}
            node[below]  {}
    };

\node[bag](z21) at ([xshift=1.1cm]z11) {$\red{\omega_1}$};
\node[bag](z22) at ([xshift=1.1cm]z12) {$\red{\omega_2}$};
\node[bag](z23) at ([xshift=1.1cm]z13) {$\red{\omega_3}$};
\node[bag](z24) at ([xshift=1.1cm]z14) {$\red{\omega_4}$};
\node[bag](z25) at ([xshift=1.1cm]z15) {$\red{\omega_5}$};
\node[bag](z26) at ([xshift=1.1cm]z16) {$\red{\omega_6}$};
\node[bag](z27) at ([xshift=1.1cm]z17) {$\red{\omega_7}$};
\node[bag](z28) at ([xshift=1.1cm]z18) {$\red{\omega_8}$};

\end{tikzpicture}
\end{center}
\caption{Tree for the stocks $(X^1,X^2)$ at times $t=0,1,2$.}
\label{figtree2}
\end{figure}
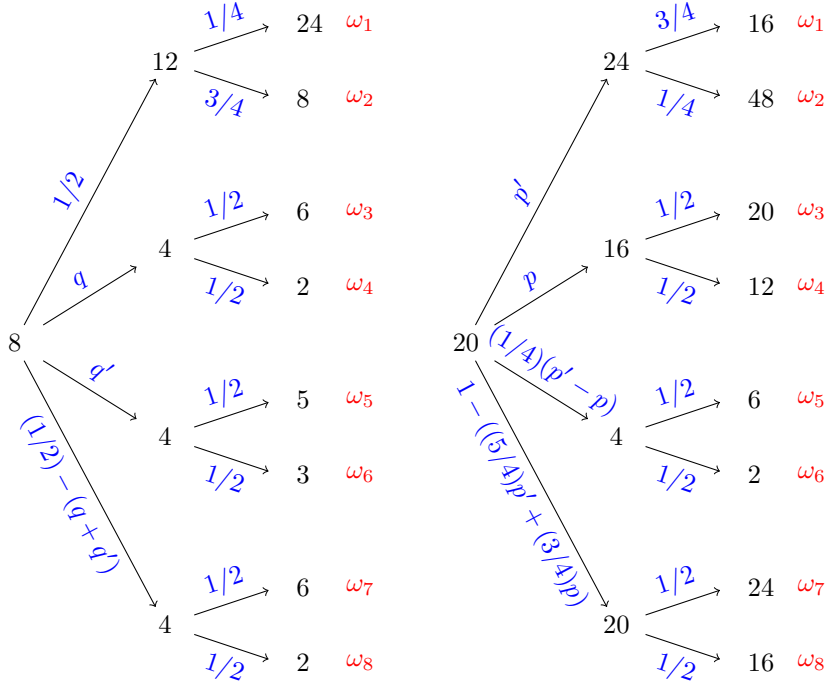

For  each agent \(i=1,2\), we consider  the quadratic utility functions
\[
u^i(x)= x^2 \mathbf 1_{\{x\le 0\}}
\]
and deterministic initial endowments $x^i<0$. The convex conjugate of \(u_i\) is $\Phi_i(y)=
\sup_{x\in\mathbb R}\{u_i(x)-xy\}
=\dfrac{y^2}{4\gamma_i}$, for $ y \geq 0$,
and the dual formulation of the optima wealth becomes
\[
U^i(x^i;0)=\min_{\lambda>0}\min_{Q\in M_e(S^i)}
\left\{
\lambda x^i+\frac{\lambda^2}{4\gamma_i}
E_{P_i}\!\left[\left(\frac{dQ}{dP^i}\right)^2\right]
\right\}=\min_{Q\in M_e(S^i)} -\frac{\gamma_i (x^i)^2}{E_{P^i}\!\left[\left(\frac{dQ}{dP^i}\right)^2\right]}.
\]
Therefore the optimizer $Q^i_{x^i} $ of the dual problem is the well known minimal variance martingale measure. 
To simplyfy, we assume that the subjective probabilities are uniform:
\[
P^i(\{\omega_k\})=\frac18,\qquad k=1,\dots,8.
\]
Direct computation of the minimax measures is possible using Lagrange multipliers. The unique minimax measures $Q^i_{x^i}$ are given by
\begin{align*}
    Q^1_{x^1}&=
\left(
\frac18,\frac38,\frac1{12},\frac1{12},\frac1{12},\frac1{12},\frac1{12},\frac1{12}
\right),\\
Q^2_{x^2}
&=
\left(
\frac{33}{125},\frac{11}{125},
\frac{17}{125},\frac{17}{125},
\frac{1}{100},\frac{1}{100},
\frac{89}{500},\frac{89}{500}
\right)
\end{align*}
so that \[
Q^1_{x^1}\big|_{\mathcal F_1}
=
\left(
\frac{1}{2},\frac{1}{6},\frac{1}{6},\frac{1}{6}
\right)\in M^1_e|_{\mcF_1}, \quad 
Q^2_{x^2}\big|_{\mathcal F_1}
=
\left(
\frac{44}{125},\frac{34}{125},\frac{1}{50},\frac{89}{250}
\right)\in M^2_e|_{\mcF_1}.
\]
We immediately note that $\mathbf Q_X \notin \mathcal M_{\mathrm{loc}}(\mcY)$, so that Corollary \ref{corbentris} ensures existence of striclty beneficial exchanges.
To exhibit these explicitly, we now follow the proof of Theorem \ref{corben}, implication 3.  $\Rightarrow $ 1. We note that $Y=(1_{A_1},-1_{A_1})\in\mcY$ satisfies
\[
E_{Q^1_{x^1}}[Y]+E_{Q^2_{x^2}}[-Y]
=
\frac12-\frac{44}{125}
=
\frac{37}{250}>0.
\]
We define \(\widehat Y=(\widehat Y^1,\widehat Y^2)\) as
\[
\widehat Y^1
=
Y+\frac12\left(E_{Q^1_{x^1}}[Y^1]+E_{Q^2_{x^2}}[-Y^2]\right)-E_{Q^1_{x^1}}[Y^1],
\]
\[
\widehat Y^2
=
-Y+\frac12\left(E_{Q_{x^1}}[Y^1]+E_{Q_{x^2}}[-Y^2]\right)-E_{Q_{x^2}}[-Y^2].
\]
Then 
\[
E_{Q_{x^1}}[\widehat Y^1]=
\frac{37}{500}>0, \quad 
E_{Q_{x^2}}[\widehat Y^2]
=
\frac{37}{500}>0.
\]
By the same argument in the proof of Theorem \ref{corben}, implication \(3.\Rightarrow 1.\), we have that $\alpha \widehat{Y}\in\mcY$ is strictly beneficial for $\alpha>0$ sufficiently small.
\section{Example: continuous time}
\label{secexamplescont}

In this example we work in the general semimartingale setup depicted in \cite{BF05}. We consider $N$ agents endowed with CARA utilities $u^i(x)=-\frac{1}{\gamma^i}e^{-\gamma^ix}$ and subjective probability $P^i$. Since the exponential utilities satisfies the Reasonable Asymptotic Elasticity, then Assumption 2 in \cite{BF05} is automatically met (see also \cite{BF05}, Section 2.2).  Moreover we assume for any $i=1,\ldots,N$, the existence of suitable and strongly compatible $W^i$,  (as in Definitions \ref{viability} and \ref{compatibility}), such that $U^i_{W^i}(x^i_i;0)<u^i(+\infty)$, where
\begin{equation*}
U^i_{W^i}(x^i;0):=\sup_{k^i\in K_i^{W^i}}E_{P^i}[u^i(x^i+k^i)] \text{ for } K_i^{W^i} \text{ defined in  \eqref{k_defi}.} 
\end{equation*}
Then \cite{BF05}, Theorem 1 guarantees the existence of  $k^i_{\star}\in K_i^{W^i}$ such that $U^i_{W^i}(x^i;0)= E_{P^i}[u^i(x^i+k^i_{\star})]$ for all $ i=1,\dots, N$, with $U^i_{W^i}(x^i)$ which does not depend on the choice of $W^i$,  among the suitable ones. Moreover the optimal profile can be explicitly represented as
\[k^i_{\star}= -x^i-\Phi^{\prime}_i\left(\lambda_{x^i}\rn{Q_{x^i}}{P^i}\right)\]
where $\lambda_{x^i}>0,Q_{x^i}$ are the dual solutions computed in \eqref{dual14} and $\Phi^{\prime}_i(y)=\frac{1}{\gamma^i}\ln{y}$.

\medskip

If we now take into consideration an admissible exchange $Y\in\mcY$ we have

\[U^i_{W^i}(x^i;Y^i)=\sup_{k^i\in K_i^{W^i}}E_{P^i}[u^i(x^i+k^i+Y^i)]\geq E_{P^i}[u^i(x^i+k^i_*+Y^i)].\]

\noindent For any $i=1,\ldots,N$, we can observe 
\[\lambda_{x^i}\cdot \rn{Q_{x^i}}{P^i}=\exp\left({\gamma^i\Phi^{\prime}_i\left(\lambda_{x^i}\rn{Q_{x^i}}{P^i}\right)}\right)\]
and consider the equivalent change of measure 

\[\frac{dQ_{x^i}}{dP^i}=\frac{1}{\lambda_{x^i}}e^{-\gamma^i(x^i+k^i_{\star})}\]
obtaining in this way 
\[U^i_{W^i}(x^i;Y^i)\geq E_{Q_{x^i}}\left[\lambda_{x^i}\cdot u^i(Y^i)\right] \text{ and } U^i_{W^i}(x^i;0)=E_{Q_{x^i}}\left[\lambda_{x^i}\cdot u^i(0)\right].\]
The procedure has reduced the problem to the search of beneficial exchanges without a financial market. Therefore we can invoke Corollary \ref{prop3conditions}, adopting the new measures $Q_{x^i}$ (in the place of $P^i$). Indeed if 
$\sum_{i=1}^N E_{Q_{x^i}}[\tilde{Y}^i]> 0$ for some $\tilde{Y} \in \mcY$ then we can find an other $Y\in\mcY$ such that
\[E_{Q_{x^i}}[u^i(Y^i)]> E_{Q_{x^i}}[u^i(0)]\]
for all $i \in \{1,\dots,N\}$, which implies \(U^i_{W^i}(x^i;Y^i)>U^i_{W^i}(x^i;0)\) for all $i \in \{1,\dots,N\}$.
\\A simple explicit example can be built up considering $N=2$: if there exists $A\in \mathcal{F}$ such that $Q_{x^1}(A)>Q_{x^2}(A)$, then the zero sum exchange $\tilde{Y}^1= 1_A$, $\tilde{Y}^2=- 1_A$ satisfies $E_{Q_{x^i}}[\tilde{Y}^1]+E_{Q_{x^i}}[\tilde{Y}^2]> 0$.
Moreover explicit computations can explain how the discrepancy between $Q_{x^1}(A)$ and $Q_{x^2}(A)$ can practically create a benefit for agents $i,j$. In particular for any  $(a,b)$ chosen in the set
\( \{(\alpha,\beta):0<\alpha<\alpha^\ast,\;L(\alpha)<\beta<U(\alpha)\}
\), 
where
\[
U(\alpha)=-\frac{1}{\gamma_1}\ln\!\big(q_1 e^{-\gamma_1\alpha}+1-q_1\big),\qquad
L(\alpha)=\frac{1}{\gamma_2}\ln\!\big(q_2 e^{\gamma_2\alpha}+1-q_2\big).
\]
and $\alpha^\ast>0$ is the unique root of $U(\alpha)=L(\alpha)$, we indeed have 

\[U^i_{W^i}(x^i;a 1_A-b)> U^i_{W^i}(x^i;0) \text{ and } U^i_{W^j}(x^j;-a 1_A+b)> U^j_{W^j}(x^j;0)
\]

as a consequence of the following Lemma.
\begin{lemma}
Let  and assume that for two agents (w.l.o.g. for $i=1,2$)   $q_1:=Q_{x^1}(A)>Q_{x^2}=:q_2(A)$, with $A\in\mathcal{F}$. If $Y=(1_A,-1_A)\in\mathcal{Y}$, then the set of solutions  $\alpha,\beta\in (0,\infty)$ such that 
\[E_{Q_{x^1}}[\exp{(-\gamma^1(\alpha 1_A-\beta))}]< 1 \text{ and } E_{Q_{x^2}}[\exp{(-\gamma^2(-\alpha 1_A+\beta))}]< 1\]
is
\( \{(\alpha,\beta):0<\alpha<\alpha^\ast,\;L(\alpha)<\beta<U(\alpha)\},
\)
where
\[
U(\alpha)=-\frac{1}{\gamma_1}\ln\!\big(q_1 e^{-\gamma_1\alpha}+1-q_1\big),\qquad
L(\alpha)=\frac{1}{\gamma_2}\ln\!\big(q_2 e^{\gamma_2\alpha}+1-q_2\big).
\]
and $\alpha^\ast>0$ is the unique solution of $U(\alpha)=L(\alpha)$.

\end{lemma}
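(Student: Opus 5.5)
The plan is to compute both expectations in closed form, since each exponent takes only two values (on $A$ and on $A^c$), and then to reduce the two inequalities to the single condition $L(\alpha)<\beta<U(\alpha)$. I will assume $0<q_2<q_1<1$. This should hold because $Q_{x^i}\sim P^i\sim\Pp$, so it suffices to have $0<\Pp(A)<1$; I will state this explicitly, since both endpoints $q_2=0$ and $q_1=1$ break the existence of $\alpha^\ast$.

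\emph{Step 1 (reduction).} The first expectation equals
\[
e^{\gamma_1\beta}\big(q_1e^{-\gamma_1\alpha}+1-q_1\big).
\]
Taking logarithms, it is $<1$ iff $\beta<U(\alpha)$. The second expectation equals
\[
e^{-\gamma_2\beta}\big(q_2e^{\gamma_2\alpha}+1-q_2\big),
\]
and it is $<1$ iff $\beta>L(\alpha)$. For $\alpha>0$ we have $L(\alpha)>0$, because $q_2e^{\gamma_2\alpha}+1-q_2>1$. Hence the constraint $\beta>0$ is automatic, and the solution set is exactly $\{(\alpha,\beta):\alpha>0,\ L(\alpha)<\beta<U(\alpha)\}$. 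This set has nonempty $\beta$-section iff $D(\alpha):=U(\alpha)-L(\alpha)>0$.

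\emph{Step 2 (shape of $D$).} Write $U(\alpha)=-\frac1{\gamma_1}\Lambda_1(-\gamma_1\alpha)$ and $L(\alpha)=\frac1{\gamma_2}\Lambda_2(\gamma_2\alpha)$. Here $\Lambda_k(t)=\ln(q_ke^{t}+1-q_k)$ is the cumulant generating function of $1_A$ under $Q_{x^k}$, and it is convex. Hence $U$ is concave, $L$ is convex, and $D$ is concave, with $D(0)=0$. Differentiating gives $U'(0)=q_1$ and $L'(0)=q_2$, so $D'(0)=q_1-q_2>0$. As $\alpha\to\infty$, $U(\alpha)\to-\frac1{\gamma_1}\ln(1-q_1)<\infty$ (using $q_1<1$). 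On the other hand $L(\alpha)\ge \alpha+\frac1{\gamma_2}\ln q_2\to\infty$ (using $q_2>0$). So $D(\alpha)\to-\infty$.

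\emph{Step 3 (unique root).} $D$ is concave, vanishes at $0$, has positive slope there, and tends to $-\infty$. Therefore it has exactly one zero $\alpha^\ast>0$, with $D>0$ on $(0,\alpha^\ast)$ and $D<0$ on $(\alpha^\ast,\infty)$. Uniqueness follows because a concave function with $D(0)=0$ has decreasing chord slopes $D(\alpha)/\alpha$. Combining this with Step 1 yields the claimed description of the solution set.

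The only delicate point is Step 2, namely justifying concavity and convexity and checking the boundary behaviour. The convexity of the log-moment generating function can be seen directly via Hölder's inequality, or by computing the second derivative, which is a Bernoulli variance and hence nonnegative. The boundary behaviour is exactly where the nondegeneracy $0<q_2<q_1<1$ enters. Finally, linking back to the text: multiplying through by $\lambda_{x^i}$ and using $u^i(y)=-\frac1{\gamma^i}e^{-\gamma^iy}$, these inequalities are equivalent to $E_{Q_{x^i}}[\lambda_{x^i}u^i(Y^i)]>\lambda_{x^i}u^i(0)$. This gives the stated strict improvements of $U^i_{W^i}$.
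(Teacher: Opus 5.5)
Your proof is correct and has the same structure as the paper's: reduce to $L(\alpha)<\beta<U(\alpha)$, then study $h=U-L$ with $h(0)=0$, $h'(0)=q_1-q_2>0$ and $h\to-\infty$. The only variation is how you show $h$ changes sign once: you get it from concavity of $U-L$ via convexity of the log-moment generating function, whereas the paper computes $h'$ explicitly and shows it changes sign only at $\frac{1}{\gamma_1+\gamma_2}\ln\frac{q_1(1-q_2)}{q_2(1-q_1)}$. Also, the nondegeneracy $0<q_2<q_1<1$, which you state and the paper uses tacitly, needs no extra hypothesis on $A$: it already follows from $q_1>q_2$ together with $Q_{x^i}\sim\Pp$.
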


\begin{proof}

The two strict inequalities are equivalent to
\[
e^{\gamma_1\beta}\big(q_1 e^{-\gamma_1\alpha}+(1-q_1)\big)<1
\quad\Longleftrightarrow\quad
\beta<U(\alpha):=-\frac{1}{\gamma_1}\ln\!\big(q_1 e^{-\gamma_1\alpha}+1-q_1\big),
\]
and
\[
e^{-\gamma_2\beta}\big(q_2 e^{\gamma_2\alpha}+(1-q_2)\big)<1
\quad\Longleftrightarrow\quad
\beta>L(\alpha):=\frac{1}{\gamma_2}\ln\!\big(q_2 e^{\gamma_2\alpha}+1-q_2\big).
\]
Thus the feasible pairs must satisfy
\[
L(\alpha)<\beta< U(\alpha).
\]
For \(\alpha>0\) one has
\[
U'(\alpha)=\frac{q_1 e^{-\gamma_1\alpha}}{q_1 e^{-\gamma_1\alpha}+1-q_1}>0,
\qquad
L'(\alpha)=\frac{q_2 e^{\gamma_2\alpha}}{q_2 e^{\gamma_2\alpha}+1-q_2}>0,
\]
so both \(U\) and \(L\) are strictly increasing on \((0,\infty)\).
Their endpoint behaviour is
\[
U(0^+)=0,\qquad U(+\infty)=-\frac{1}{\gamma_1}\ln(1-q_1)\in(0,\infty),
\]
\[
L(0^+)=0,\qquad L(+\infty)=+\infty.
\]
Define
\[
h(\alpha):=U(\alpha)-L(\alpha).
\]
We study $h'(\alpha)$ for $\alpha>0$. A simple computation yields
\[
h'(\alpha)=U'(\alpha)-L'(\alpha)
=
\frac{
q_1(1-q_2)e^{-\gamma_1\alpha}
-
q_2(1-q_1)e^{\gamma_2\alpha}
}{
\big(q_1 e^{-\gamma_1\alpha}+1-q_1\big)
\big(q_2 e^{\gamma_2\alpha}+1-q_2\big)
}
\]
so that for $\alpha>0$ we have $h'(\alpha)>0$ if and only if $\alpha<
\frac{1}{\gamma_1+\gamma_2}
\ln\left(\frac{q_1(1-q_2)}{q_2(1-q_1)}\right)$ (note that since $1-q_1<1-q_2$, the latter quantity is strictly positive). We also have \(h(0)=0\), \(h'(0)=q_1-q_2>0\) and $h(+\infty)=-\infty$. 

Therefore there exists a unique
\(\alpha^\ast>0\) solving
\[
U(\alpha^\ast)=L(\alpha^\ast),
\]
equivalently
\[
-\frac{1}{\gamma_1}\ln\!\big(q_1 e^{-\gamma_1\alpha^\ast}+1-q_1\big)
=\frac{1}{\gamma_2}\ln\!\big(q_2 e^{\gamma_2\alpha^\ast}+1-q_2\big).
\]
For every \(\alpha\in(0,\alpha^\ast)\) we have \(U(\alpha)>L(\alpha)>0\), hence the set of solutions is
\[
\big\{(\alpha,\beta):\ 0<\alpha<\alpha^\ast,\; L(\alpha)<\beta< U(\alpha)\big\}.
\]

\end{proof}

\appendix
\section{Appendix: miscellaneous results}
\label{appmix}
The utility function $u$ in the following lemma is not assumed to be strictly concave on $(a,\infty)$. For this reason, standard results on convex conjugates from the literature do not apply directly. For completeness, we therefore provide an elementary proof.\\

\begin{lemma}\label{lemmauphi} 
Let $u:\R\rightarrow \R$ be concave, nondecreasing, differentiable at every $x\in\R$ and satisfying the Inada conditions. 
Suppose also that $u$ is strictly concave on \((-\infty,a)\), where $a:=\inf \{x \in \R : u(x)=u(+\infty) \} \leq +\infty$.
For $y\in\mathbb{R}$ let
$$\Phi(y):=\sup_{x\in\mathbb{R}}\big(u(x)-xy\big)=-u^*(y),\quad \quad \mathrm{dom}(\Phi):=\{y \in \R \mid \Phi(y)<+\infty \}.
$$
Then $(0,+\infty) \subseteq \mathrm{dom}(\Phi) \subseteq [0,+\infty)$, $\Phi$ is bounded from below  and it is strictly convex on $(0,+\infty) $.  
\end{lemma}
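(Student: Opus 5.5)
We prove the three claims of Lemma \ref{lemmauphi} in order: the domain inclusions first, then the lower bound, and strict convexity last.

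\textbf{Domain.} For $y<0$, letting $x\to+\infty$ gives $u(x)-xy\ge u(0)-xy\to+\infty$, since $u$ is nondecreasing. Hence $\Phi(y)=+\infty$ and $\mathrm{dom}(\Phi)\subseteq[0,+\infty)$.

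For $y>0$, the Inada conditions give $u'(x)\to+\infty$ as $x\to-\infty$ and $u'(x)\to 0$ as $x\to+\infty$. The map $x\mapsto u(x)-xy$ is concave and differentiable with derivative $u'(x)-y$. This derivative is positive for $x$ very negative and tends to $-y<0$ as $x\to+\infty$. Since $u'$ is continuous (a derivative of a concave differentiable function), there is a point $x_y$ with $u'(x_y)=y$, and it is a global maximizer. Therefore $\Phi(y)=u(x_y)-x_yy<+\infty$.

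\textbf{Lower bound.} For every $y\ge0$ we have $\Phi(y)\ge u(0)$, by taking $x=0$. On $(-\infty,0)$ we have $\Phi=+\infty$. Hence $\Phi$ is bounded below by $u(0)$.

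\textbf{Strict convexity.} Convexity is automatic, since $\Phi$ is a supremum of affine functions of $y$. The real task is to rule out an affine piece on $(0,\infty)$.

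Fix $y>0$. Any maximizer $x_y$ satisfies $u'(x_y)=y>0$, so $x_y<a$: on $[a,\infty)$ the function $u$ is constant, so $u'=0$ there.

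On $(-\infty,a)$, strict concavity makes $u'$ strictly decreasing, so the maximizer $x_y$ is unique. Thus $y\mapsto x_y$ is a well-defined, strictly decreasing function on $(0,\infty)$. It is the inverse of $u'|_{(-\infty,a)}$, which is a continuous strictly decreasing bijection onto $(0,+\infty)$ by the Inada conditions.

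By the envelope/Danskin argument, $\Phi$ is differentiable on $(0,\infty)$ with $\Phi'(y)=-x_y$. To see this directly, take $y_1<y_2$ and use the maximizers to bound the difference from both sides:
\[
-x_{y_2}(y_2-y_1)\le \Phi(y_2)-\Phi(y_1)\le -x_{y_1}(y_2-y_1).
\]
Continuity of $y\mapsto x_y$ then gives $\Phi'(y)=-x_y$.

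Since $-x_y$ is strictly increasing in $y$, the derivative $\Phi'$ is strictly increasing, and therefore $\Phi$ is strictly convex on $(0,+\infty)$.

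\textbf{Main obstacle.} The delicate point is that $u$ need not be strictly concave beyond $a$, so $\Phi$ might a priori fail to be strictly convex. Any flat region of $u$ would correspond to $y=0$ only, which lies outside $(0,\infty)$. The argument therefore hinges on checking that every maximizer for $y>0$ lies in $(-\infty,a)$, which is exactly the observation $u'(x_y)=y>0$ made above.
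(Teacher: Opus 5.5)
Your argument follows the paper's proof. You locate the unique maximizer $x_y\in(-\infty,a)$ of $u(x)-xy$ through $u'(x_y)=y$, show $y\mapsto x_y$ is continuous, obtain $\Phi'(y)=-x_y$ from a two-sided estimate, and conclude strict convexity from strict monotonicity of $\Phi'$. Your justifications are slightly slicker than the paper's. You get continuity of $u'$ because a differentiable concave function is $C^1$, where the paper uses Darboux plus strict monotonicity. You get continuity of $y\mapsto x_y$ as the inverse of a continuous strictly monotone map, where the paper uses a compactness argument. You also verify $\mathrm{dom}(\Phi)\subseteq[0,+\infty)$ and $\Phi\ge u(0)$, which the paper's proof leaves implicit.

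One correction is needed: your displayed sandwich has its bounds reversed. Plugging $x_{y_1}$ into the supremum defining $\Phi(y_2)$, and $x_{y_2}$ into the one defining $\Phi(y_1)$, gives
\[
-x_{y_1}(y_2-y_1)\le \Phi(y_2)-\Phi(y_1)\le -x_{y_2}(y_2-y_1),
\]
which matches the paper's inequalities with $y_1=y$, $y_2=y+h$. The version you wrote cannot hold, because $x_{y_1}>x_{y_2}$ makes its lower bound strictly exceed its upper bound. With this fix, the proof is complete.
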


\ifLONG
\begin{proof}
Note that $u$ satisfies the Inada limits
\[
\lim_{x\to-\infty}u'(x)=+\infty, \qquad \lim_{x\to+\infty}u'(x)=0.
\]

Since \(u\) is concave and differentiable, \(u'\) is nonincreasing on \(\mathbb R\).
On \((-\infty,a)\), strict concavity implies that \(u'\) is strictly decreasing.
Moreover, since \(u\) is differentiable, \(u'\) has the Darboux property.
A strictly monotone function with the intermediate value property is continuous;
hence \(u'\) is continuous on \((-\infty,a)\).

Fix \(y>0\) and consider \(g_y(x):=u(x)-xy\).
Then \(g_y'(x)=u'(x)-y\).
By the Inada limits,
\[
\lim_{x\to-\infty}g_y'(x)=+\infty, \qquad
\lim_{x\to+\infty}g_y'(x)=-y<0.
\]
Therefore \(g_y'\) vanishes at least once.
If \(x\ge a\), then \(u'(x)=0\) since \(u\) is constant on \([a,+\infty)\); hence
\(g_y'(x)<0\) for all \(x\ge a\).
Thus any critical point of \(g_y\) must lie in \((-\infty,a)\).
On this interval \(u'\) is strictly decreasing, so the equation \(u'(x)=y\) has a unique solution \(x^*(y)\in(-\infty,a)\).
Consequently the supremum in the definition of \(\Phi(y)\) is attained uniquely at \(x^*(y)\).

Let \(y_n\to y>0\) and set \(x_n:=x^*(y_n)\).
Choose \(\delta>0\) such that \([y-\delta,y+\delta]\subset(0,\infty)\).
By the Inada limits and continuity of \(u'\) on \((-\infty,a)\), the set
\(\{x\in\mathbb R:\ u'(x)\in[y-\delta,y+\delta]\}\) is compact.
For all sufficiently large \(n\), \(y_n\in[y-\delta,y+\delta]\), hence \(x_n\) belongs to this compact set.
Let \(x_{n_k}\to\bar x\) be a convergent subsequence.
Continuity of \(u'\) yields
\[
u'(\bar x)=\lim_{k\to\infty}u'(x_{n_k})=\lim_{k\to\infty}y_{n_k}=y.
\]
By uniqueness of the solution to \(u'(x)=y\) in \((-\infty,a)\), it follows that \(\bar x=x^*(y)\).
Hence \(x_n\to x^*(y)\), and the map \(y\mapsto x^*(y)\) is continuous on \((0,\infty)\).

For \(h\) small, using the maximizing property of \(x^*(y)\) and \(x^*(y+h)\),
\[
\Phi(y+h)-\Phi(y)\ge -h\,x^*(y), \qquad
\Phi(y+h)-\Phi(y)\le -h\,x^*(y+h).
\]
Dividing by \(h\) and letting \(h\to0\), the continuity of \(x^*(\cdot)\) implies
\[
\Phi'(y)=-x^*(y).
\]
Thus \(\Phi\) is differentiable on \((0,\infty)\).
Fix \(0<y_1<y_2\). Then \(u'(x^*(y_i))=y_i\) for \(i=1,2\). Since \(u'\) is strictly decreasing on \((-\infty,a)\), from
\(y_1<y_2\) we infer \(x^*(y_1)>x^*(y_2)\). Hence
\[
\Phi'(y_1)=-x^*(y_1)<-x^*(y_2)=\Phi'(y_2).
\]
Thus \(\Phi'\) is strictly increasing on \((0,\infty)\). A convex function whose derivative is strictly increasing is strictly convex; therefore \(\Phi\) is strictly convex on \((0,\infty)\).
\end{proof}
\fi

\begin{lemma}
\label{kotheprop}
    Under Assumption \ref{assOK} we have for every $i=1,\dots,N$:
 \begin{enumerate}
     \item $L_i, L_i^*\subseteq L^1(\Omega,\mathcal{F}^i,\probp^i)$ and both are decomposable (under $P^i$) in the sense of \cite{Rockafellar} page 532;
     \item With an obvious abuse of notation,  $L_i^*$ coincides with the set of $\sigma-$order continuous linear functionals\footnote{i.e. linear functionals $\phi:L_i\rightarrow \R$ such that $\phi(X_n)\rightarrow 0$ for every sequence $(X_n)_n$ in $L_i$ such that $X_n\rightarrow 0$ a.s. and $\sup_n\abs{X_n}\in L_i$} on $L_i$. To be more precise, a linear functional $\phi$ on $L_i$
 is $\sigma$-order continuous if and only if there exists a (unique) $Z\in L_i^*$ s.t. $\phi(X)=\Epo{XZ}$ for every $X\in L_i$, and $\phi$ is positive if and only if $Z_i\geq 0$ a.s.;
 \item A linear functional $\phi$ on $L_i$ is $\sigma$-order continuous if and only if it satisfies $\phi(X_n)\rightarrow 0$ for every sequence $(X_n)_n$ in $L_i$ such that $X_n\geq 0$ for every $n$, $X_n\rightarrow 0$ a.s. and $\sup_n\abs{X_n}\in L_i$;
 \item  $L_i^*\subseteq L'_i$, the latter being the topological dual of $L_i$.
 \end{enumerate}   
\end{lemma}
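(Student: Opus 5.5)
The plan is to prove the four items in the order 1, 3, 2, 4. Throughout we use two facts: $L_i$ is solid and contains $L^\infty(\Omega,\mathcal F^i,P^i)$, and the inclusion $L_i\subseteq M^{\widehat u^i}(\Omega,\mathcal F^i,P^i)\subseteq L^1(\Omega,\mathcal F^i,P^i)$ already observed after Assumption \ref{assOK}.

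\textbf{Item 1.} The inclusion $L_i\subseteq L^1$ is the one just recalled. Since $1\in L^\infty\subseteq L_i$, every $Z\in L_i^*$ satisfies $E_{P^i}[\abs{Z}]<\infty$, so $L_i^*\subseteq L^1$. Conversely $L^\infty\subseteq L_i^*$, because $L_i\subseteq L^1$.

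For decomposability in Rockafellar's sense, take $X\in L_i$, $A\in\mathcal F^i$ and $f$ bounded and measurable. Then $\abs{f1_A+X1_{A^c}}\le \abs{f}+\abs{X}\in L_i$, so solidity gives $f1_A+X1_{A^c}\in L_i$. The space $L_i^*$ is solid directly from \eqref{K*}, since $\abs{XZ'}\le\abs{XZ}$ whenever $\abs{Z'}\le\abs{Z}$. It also contains $L^\infty$, so the same argument applies to it.

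\textbf{Item 3.} One direction is trivial. For the other, given $X_n\to0$ a.s. with $\sup_n\abs{X_n}\in L_i$, split $X_n=X_n^+-X_n^-$. Both sequences are nonnegative, tend to $0$ a.s., and are dominated by $\sup_n\abs{X_n}$. By solidity they belong to $L_i$, so linearity reduces the general case to the nonnegative one.

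\textbf{Item 2.} If $Z\in L_i^*$, then $\phi(X):=E_{P^i}[XZ]$ is $\sigma$-order continuous. Indeed $\abs{X_nZ}\le \sup_n\abs{X_n}\,\abs{Z}\in L^1$, so dominated convergence applies.

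For the converse, let $\phi$ be $\sigma$-order continuous. Define $\nu(A):=\phi(1_A)$ for $A\in\mathcal F^i$. It is finitely additive, and it is countably additive because $1_{B_n}\downarrow 0$ with domination by $1\in L_i$ forces $\phi(1_{B_n})\to0$. It also vanishes on $P^i$-null sets. Radon--Nikodym then yields $Z\in L^1$ with $\phi(1_A)=E_{P^i}[1_AZ]$.

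\begin{itemize}
\item \emph{Bounded $X$.} The representation extends by linearity to simple functions. For bounded $X$, pick simple $s_n\to X$ uniformly with $\abs{s_n}\le\norm{X}_\infty$. Then $\sigma$-order continuity and dominated convergence give $\phi(X)=E_{P^i}[XZ]$.
\item \emph{General $X\in L_i$.} Put $\sigma:=\mathrm{sgn}(Z)$ and $X_n:=\abs{X}\sigma 1_{\{\abs{X}\le n\}}$. Each $X_n$ is bounded, $\abs{X_n}\le\abs X\in L_i$, and $X_n\to\abs X\sigma$ a.s. Hence
\[
E_{P^i}\big[\abs{X}\abs{Z}1_{\{\abs X\le n\}}\big]=\phi(X_n)\to\phi(\abs X\sigma)<\infty .
\]
Monotone convergence then gives $E_{P^i}[\abs{XZ}]<\infty$, that is, $Z\in L_i^*$. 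The same truncation argument applied to $X$ itself gives $\phi(X)=E_{P^i}[XZ]$.
\end{itemize}

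Uniqueness follows by testing $\phi$ on indicators. Testing on indicators also shows that $\phi$ is positive if and only if $Z\ge0$.

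\textbf{Item 4.} For $Z\in L_i^*$, the functional $X\mapsto E_{P^i}[X\abs Z]$ is positive and finite on the Banach lattice $L_i$, hence norm continuous, since positive functionals on Banach lattices are automatically continuous. Norm continuity of $X\mapsto E_{P^i}[XZ]$ then follows from
\[
\abs{E_{P^i}[XZ]}\le E_{P^i}[\abs X\abs Z]
\]
together with $\norm{\,\abs X\,}=\norm X$.

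The main obstacle is the converse in Item 2, in particular extending the representation from bounded variables to all of $L_i$ and proving $Z\in L_i^*$. The truncation with the sign of $Z$ described above is the route I would take for this step.
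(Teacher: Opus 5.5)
Your proof is correct. Items 1 and 3 follow essentially the paper's argument. The paper checks decomposability through $L^\infty\subseteq L_i\cap L_i^*$ together with stability under multiplication by indicators, which amounts to your solidity argument.

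For Items 2 and 4 you take a genuinely different route. The paper only checks that $\mathrm{supp}(L_i)=\Omega$, so that $L_i$ is a Banach foundation space and hence a Banach ideal space. It then invokes Theorems IV.3.1 and IV.3.3 of Kantorovich--Akilov. You instead reprove the representation directly:
\begin{enumerate}
\item You apply Radon--Nikodym to $A\mapsto\phi(1_A)$, which is countably additive by $\sigma$-order continuity with domination by $1$.
\item You extend to $L^\infty$ by uniform approximation with simple functions.
\item You use the truncations $\abs{X}\,\mathrm{sgn}(Z)\,1_{\{\abs{X}\le n\}}$ together with monotone convergence to get $E_{P^i}[\abs{XZ}]<\infty$, i.e.\ $Z\in L_i^*$.
\end{enumerate}
You then obtain Item 4 from the automatic norm continuity of positive functionals on a Banach lattice, applied to $X\mapsto E_{P^i}[X\abs{Z}]$, together with $\norm{\abs{X}}=\norm{X}$.

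The paper's citation is shorter and places the lemma inside the general K\"{o}the-dual theory of ideal spaces. Your argument is self-contained, using only standard measure theory and one standard Banach-lattice fact. It also shows which hypotheses are actually used. Item 2 holds for any solid vector subspace of $L^0$ containing $L^\infty$, with no norm, completeness, or $L_i\subseteq L^1$ needed. The Banach-lattice structure of $L_i$ enters only in Item 4.
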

\begin{proof}
We start with Item 1. We note that since $L^\infty(\Omega,\mathcal{F}^i,P^i)=L^{\infty}(\Omega,\mathcal{F}^i,\red{\Pp})\subseteq L_i$, we have $L_i^*\subseteq L^1(\Omega,\mathcal{F}^i,P^i)\cap L^1(\Omega,\mathcal{F}^i,\red{\Pp})$. As already noted after Assumption \ref{assOK} was stated, 
$L_i\subseteq L^1(\Omega,\mathcal{F}^i,\probp^i)$.
To verify decomposability we first verify that $L^\infty(\Omega,\mathcal{F}^i,\probp^i)\subseteq L_i\cap L_i^*$. $L^\infty(\Omega,\mathcal{F}^i,\probp^i)\subseteq L_i$ is assumed already, while $L_i\subseteq L^1(\Omega,\mathcal{F}^i,\probp^i)$ implies $L^\infty(\Omega,\mathcal{F}^i,\probp^i)\subseteq L_i^*$ by definition of the latter. We conclude by observing that whenever $E\in \mathcal{F}^i $ is given, clearly $\abs{1_E X}\leq \abs{X}$ for every $X\in L_i$, so that by Assumpton \ref{assOK} we have $1_EX\in L_i$. Furthermore for every $Z\in L^*_i$ we have $\Epo{\abs{X (1_E) Z}}\leq \Epo{\abs{XZ}}<+\infty$ so that $1_E Z\in L^*_i$. These observations show decomposability.
Regarding Item 3, we note that the definition of $\sigma$-order continuity in Item 2 ensures that $\phi(X_n)\rightarrow 0$ for every dominated (in $L_i$) nonnegative sequence converging a.s. to zero. To prove the implication ($\Leftarrow$), for a general dominated sequence $(X_n)$ converging to zero, the sequences $(X^\pm_n)_n$ of positive and negative parts are both nonnegative, converge to zero and are still dominated in $L_i$. Hence $\phi(X_n^\pm)\rightarrow 0$. One then concludes $\phi(X_n)\rightarrow 0$ by linearity.
We come to Items 2 and 4.
    Since $1_E\in L^\infty(\Omega,\mathcal{F}^i,\probp^i)\subseteq L_i$ for every $E\in \mathcal{F}^i$ we have that $\mathrm{supp}(L_i)=\Omega$ (see \cite{Kanto} Chapter IV Section 3), and this together with the other requirements on $L_i$ imply that $L_i$ is a Banach foundation space, hence a Banach ideal space (BIS) in the notation of  \cite{Kanto} Chapter IV Section 3. \cite{Kanto} Theorems IV.3.1  and IV.3.3 yield the desired results. 
\end{proof}

\begin{proof}[Proof of Lemma \ref{LemmaincreaseBis}]
 The conditions in Assumption \ref{asscontime} ensure that the ones in Assumption \ref{assOK} are all satisfied. From \eqref{dualrho1tris}, we see that  for any $Z \in L_i$ with $U^i(X^i;Z)<u^i(+\infty)$
 \begin{align}
 U^i(X^i;Z)&:=\sup_{k^i\in \mathcal{C}_i}E_{P^i}[u^i(X^i+k^i+Z)] \notag \\
& =\min_{\lambda >0} \min_{\substack{  \probq\ll\probp^i,\\ Q\in M_{\sigma}(S^i)}}
\Bigg (\lambda (E_Q[X^i]+E_Q[Z])+E\bigg[\Phi^i\Big(\lambda \rn{\probq}{\probp^i}\Big)\bigg] \Bigg )  \label{minV}
\end{align}

The Assumption $M_{e,\mathrm{loc}} (S^i)\cap \mathcal P_{\Phi^i}(P^i) \neq \emptyset$ also imply (see Theorem 2 \cite{F00} and Remark 2 in \cite{BF05}) that the optimizer in the above minimization problem is an \textit{equivalent} probability measure, namely 
 \begin{equation*}
        U^i(X^i;Z)=\min_{\lambda >0} \min_{ Q\in M_{e,\sigma}(S^i) }
\Bigg (\lambda(E_Q[X^i]+E_Q[Z])+E\bigg[\Phi^i\Big(\lambda \rn{\probq}{\probp^i}\Big)\bigg] \Bigg ) ,\quad  \forall\,Z \in L_i\text{ s.t. } U^i(X^i;Z)<u^i(+\infty).      
\end{equation*}

Since $U^i(X^i;0)<u^i(+\infty)$ we see that $U^i(X^i;\cdot)$ is norm continuous on $L_i$ by Proposition \ref{propU*}. In particular, $U^i(X^i;c 1_A)<U^i(+\infty)$ for $c$ small enough, i.e. this holds whenever $0<c<C$ for a suitable $C>0$. For all such values of $c$,
from \eqref{minV}
we obtain
\begin{align*}
U^i(X^i;c 1_A)&=\min_{\lambda >0} \min_{\substack{  \probq\ll\probp^i,\\ Q\in M_{e,\sigma}(S^i)}}
\Bigg (\lambda (E_Q[X^i] +c Q(A) )+E_{P^i}\bigg[\Phi^i\Big(\lambda \rn{\probq}{\probp^i}\Big)\bigg] \Bigg )    \\
&=\widehat\lambda (E_{\widehat{Q}}[X^i] +c  \widehat{Q}(A)) +E_{P^i}\bigg[\Phi^i\Big(\widehat\lambda \rn{\widehat\probq}{\probp^i}\Big)\bigg] \Bigg )
 >\widehat\lambda E_{\widehat{Q}}[X^i]+E_{P^i}\bigg[\Phi^i\Big(\widehat\lambda \rn{\widehat\probq}{\probp^i}\Big)\bigg] \Bigg ) \\
&\geq \min_{\lambda >0} \min_{\substack{  \probq\ll\probp^i,\\ M_{\sigma}(S^i)}}
\Bigg (\lambda E_Q[X^i]  +E_{P^i}\bigg[\Phi^i\Big(\lambda \rn{\probq}{\probp^i}\Big)\bigg] \Bigg ) =U^i(X^i;0),
\end{align*}
as the optimizer $\widehat{Q}$ satisfies $\widehat{Q}\sim P^i\sim \Pp$ and thus $\widehat{Q}(A)>0$ 
\end{proof}

\subsection{Mackey topology and utility maximization}

\begin{proposition}
\label{mackeycont}
    Suppose that Assumption \ref{assOK} holds,  let $X \in L_i=L^\infty(\Omega,\mathcal{F}^i,P^i)$ and set $ U^i(X):=\sup_{k\in\mathcal{C}_i}E_{P^i}[u^i(X+k)]$ ( as in \eqref{eqUW}). Suppose that $U^i(\overline{X})<+\infty$ for some $\overline{X}\in L_i$.
    Then \(U^i\) is Lebesgue continuous on $L^\infty(\Omega,\mathcal{F}^i,P^i)$, namely for every bounded sequence \((Z_n)_{n}\subset L^\infty(\Omega,\mathcal{F}^i,P^i)\) with \(Z_n\to Z\in L^\infty(\Omega,\mathcal{F}^i,P^i)\) pointwise a.s. we have $
U^i(Z_n)\to_n U^i(Z).$
  Furthermore, $U^i$ is continuous in the Mackey topology $\tau(L^\infty(\Omega,\mathcal{F}^i,P^i),L^1(\Omega,\mathcal{F}^i,P^i))$.
\end{proposition}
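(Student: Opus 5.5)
The argument proves Lebesgue continuity first, then upgrades it to Mackey continuity using the dual representation from Proposition \ref{propU*}.

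\textbf{Lebesgue continuity.} By Proposition \ref{propU*}, $U^i$ is finite, concave, nondecreasing and norm continuous on $L^\infty$. Moreover
\[
U^i(X)=\min_{\xi\in\mathcal C_i^0,\,\xi\ge 0}\bigl(E_{P^i}[\xi X]-E_{P^i}[(u^i)^*(\xi)]\bigr),
\]
with every relevant $\xi$ lying in $L^1$ (claim b) of that proof). Since $U^i$ is an infimum of $\sigma(L^\infty,L^1)$-continuous affine maps, it is weak$^*$ upper semicontinuous. For a bounded sequence $Z_n\to Z$ a.s., dominated convergence gives $Z_n\to Z$ weak$^*$. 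Hence $\limsup_n U^i(Z_n)\le U^i(Z)$.

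For the lower bound I would work on the primal side. Fix $\varepsilon>0$ and pick $f\in\mathcal C_i$ with $E_{P^i}[u^i(Z+f)]>U^i(Z)-\varepsilon$. Set $\delta_n:=\sup_{m\ge n}|Z_m-Z|$, so that $\delta_n\downarrow 0$ a.s. and $\delta_n\le 2\sup_m\|Z_m\|_\infty$. Monotonicity gives $U^i(Z_n)\ge E_{P^i}[u^i(Z-\delta_n+f)]$. The bound $Z-\delta_n+f\ge Z-\delta_1-f^-$ holds, and $u^i(Z-\delta_1-f^-)$ is integrable: because $f\in L^\infty$, the argument is a bounded random variable, and Assumption \ref{assOK} item 1 applies. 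By monotone convergence, $E_{P^i}[u^i(Z-\delta_n+f)]\uparrow E_{P^i}[u^i(Z+f)]$. Therefore $\liminf_n U^i(Z_n)\ge U^i(Z)-\varepsilon$, and letting $\varepsilon\to0$ gives the claim.

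\textbf{Mackey continuity.} For this step I would use the standard route: the Krein–Šmulian/Jouini–Schachermayer–Touzi type argument, as in Delbaen's results for convex risk measures. Put $\rho:=-U^i(\cdot)$, a finite convex monotone functional on $L^\infty$, and run the following steps.
\begin{enumerate}
\item Show that Lebesgue continuity (in fact only the Fatou property) makes each level set $\{\rho\le c\}$ weak$^*$ closed. Its intersection with each ball is $L^1$-closed, by passing to a.s.-convergent subsequences and using the Lebesgue property, hence weak$^*$ closed by Krein–Šmulian.
\item Use the continuity from below contained in the Lebesgue property. This implies that the sublevel sets of the conjugate, $\{\xi\in L^1_+: E_{P^i}[\Phi^i(\xi)]-\dots\le c\}$, are uniformly integrable, hence $\sigma(L^1,L^\infty)$-compact. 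Alternatively, this can be read off directly from the superlinear growth of $\Phi^i$ guaranteed by the Inada conditions (de la Vallée-Poussin).
\item Restrict the dual minimum near a point $X_0$ to such a weakly compact set $K\subset L^1$; this is legitimate because $U^i$ is bounded on norm balls. A net converging in $\tau(L^\infty,L^1)$ converges uniformly on $K$, which gives continuity of the minimum, i.e.\ of $U^i$, at $X_0$.
\end{enumerate}

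The main obstacle is showing that the minimizing $\xi$'s for $X$ near $X_0$ stay in a single uniformly integrable set. My plan for this is to bound the penalty along minimizers: for $X$ in a norm ball of radius $r$,
\[
E_{P^i}[\Phi^i(\xi)]\le U^i(X)+\|\xi\|_1 r.
\]
Combined with $\Phi^i(y)/y\to\infty$ (from the Inada condition at $-\infty$), this forces $\|\xi\|_1$ to be bounded, and hence gives a uniform bound on $E_{P^i}[\Phi^i(\xi)]$. Uniform integrability then follows.
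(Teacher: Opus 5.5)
\textbf{Lebesgue continuity: correct, by a different route.} The paper extends $U^i$ to the Orlicz heart $M^{\widehat u^i}$ and gets norm continuity there from the extended Namioka--Klee theorem. It then uses order continuity of the Luxemburg norm to turn bounded a.s.\ convergence into norm convergence. You instead get the $\limsup$ bound from weak$^*$ upper semicontinuity of $U^i$. This rests on claim b) in the proof of Proposition \ref{propU*}, namely that dual optimizers have no singular part. You get the $\liminf$ bound from a primal monotone-convergence argument. Your version is more elementary and stays inside $L^\infty$.

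\textbf{Mackey continuity: genuine gap.} Step 3 localizes to a norm ball $\{\|X\|_\infty\le r\}$ and restricts the dual minimum there to a weakly compact set $K_r$. This only shows that $U^i$, restricted to each norm ball, is $\tau(L^\infty,L^1)$-continuous. Equivalently, you get sequential Mackey continuity, since Mackey-convergent sequences converge weak$^*$ and are therefore norm bounded. Mackey continuity, however, is a statement about neighbourhoods and nets, and no Mackey neighbourhood of $0$ is norm bounded in infinite dimension. For instance, on a nonatomic space, uniform integrability of a weakly compact $K\subset L^1$ gives, for each $n$, a set $A$ with $P^i(A)>0$ and $\sup_{\xi\in K}E_{P^i}[|\xi|1_A]\le n^{-2}$. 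Then $n1_A$ lies in the $1/n$-neighbourhood defined by $K$, while $\|n1_A\|_\infty=n$. Passing from bounded sets to the whole space is exactly what the paper imports from \cite{DelbaenOwari19}, Proposition 1.2, and your proposal does not supply it.

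\textbf{How to close the gap.} The ingredients you already have suffice.
\begin{itemize}
\item Let $D:=\mathcal C_i^0\cap L^1_+$ and $\alpha(\xi):=E_{P^i}[\Phi^i(\xi)]$. Fix $\xi_0\in D$ with $c_0:=\alpha(\xi_0)<\infty$.
\item Set $K:=\{\xi\in D:\alpha(\xi)\le c_0+4\}$. It is uniformly integrable by de la Vall\'ee-Poussin. It is weakly closed by Fatou and the $\sigma(L^1,L^\infty)$-closedness of the polar; you should state this, as your Step 2 omits it. Hence $K$ is weakly compact.
\item Consider the Mackey neighbourhood $V:=\{X:\sup_{\xi\in K}|E_{P^i}[\xi X]|\le 1\}$ and take $X\in V$.
\item If $\xi\in D$ has $M:=\alpha(\xi)>c_0+4$, put $t:=4/(M-c_0)\in(0,1)$ and $\eta:=t\xi+(1-t)\xi_0$. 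By convexity of $\Phi^i$, $\eta\in K$. Hence $tE_{P^i}[\xi X]\ge -1-(1-t)E_{P^i}[\xi_0X]\ge -2$, and so $E_{P^i}[\xi X]+\alpha(\xi)\ge (M+c_0)/2> c_0+2$.
\item If $\xi\in K$, the same quantity is at least $\inf\Phi^i-1$.
\end{itemize}
So $U^i$ is bounded below on $V$. A finite concave function bounded below on a neighbourhood of one point of a topological vector space is continuous everywhere. Alternatively, invoke Moreau's theorem on weakly compact sublevel sets of the conjugate. With this argument, the norm-ball estimate in your last paragraph is no longer needed.
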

\begin{proof}
$U^i$ can be easily defined on $ M^{\hat u^i}(\Omega,\mathcal{F}^i,P^i)$.
    For every $X\in M^{\hat u^i}(\Omega,\mathcal{F}^i,P^i)$ we have $U^i(X)>-\infty$, and since $L^\infty(\Omega,\mathcal{F}^i,P^i)\subseteq M^{\hat u^i}(\Omega,\mathcal{F}^i,P^i)$ we have that $U^i$ is finite on a point of $M^{\hat u^i}(\Omega,\mathcal{F}^i,P^i)$. It is easy to verify that $U^i$ is concave on $M^{\hat u^i}(\Omega,\mathcal{F}^i,P^i)$ and monotone w.r.t. the a.s. ordering. Hence, it is finite on the whole Orlic Heart and norm continuous on it by Extended Namioka-Klee Theorem \cite{bfnam} Theorem 1. Take a sequence \((Z_n)_{n}\subset L^\infty(\Omega,\mathcal{F}^i,P^i)\) with \(Z_n\to Z\in L^\infty(\Omega,\mathcal{F}^i,P^i)\) pointwise a.s. and $\sup_n\|Z_n\|_{\infty}<K$. By order continuity of the Luxemburg norm on the Orlicz Heart (\cite{Edgar} Theorem 2.1.14) we have $Z_n\rightarrow_n Z$ in norm.
Setting $A_n=\sup_{m\geq n}\abs{Z_n-Z}$ we have that $A_n$ is still bounded in  $L^\infty(\Omega,\mathcal{F}^i,P^i)$, since so is the original sequence $(Z_n)_n$,  $A_n\downarrow_n 0$ a.s. clearly, and $A_n\geq \abs{Z_n-Z}$. By order continuity of the norm we have
    $$0=\lim_n\norm{A_n}_{\widehat{u}^i}\geq \limsup_n\norm{Z_n-Z}_{\widehat{u}^i}.$$
    By norm continuity of $U^i$ then \(
U^i(Z_n)\to_n U^i(Z).\)
We conclude that the Lebesgue property holds.
This in turns implies that $f=-U^i:L^\infty(\Omega,\mathcal{F}^i,P^i) \rightarrow \R$ is continuous for the topology of convergence in probability on bounded sets of $L^\infty(\Omega,\mathcal{F}^i,P^i)$. Thus, \cite{DelbaenOwari19} Proposition 1.2 provides Mackey continuity of $f$, and obviously of $U^i$ as a consequence.
\end{proof}
\subsection{Local martingales in discrete time}

Let  $M_e(S)$ (resp. $M_{e,\mathrm{loc }}(S)$, $M_{e,\sigma}(S)$) be the set of equivalent martingale (resp. local martingale, sigma martingale) measures for $S$.

\begin{proposition}\label{propMsigma}
Consider a finite discrete time setting $\mathcal T:=\{0,1,\dots,T\} $,  and a filtered probability space $(\Omega, \mathcal F, \mathbb F=(\mathcal F_t)_{t \in \mathcal T},P)$ with the initial
$\sigma$-algebra $\mathcal F_0$ trivial and $\mathcal F_T=\mathcal F$.
Let $(S_t)_{t\in \mathcal T}$ be a  $d$-dimensional $\mathbb F$-adapted process. Then
\begin{enumerate}
    \item $S$ is a local martingale if and only if it is a sigma martingale.
    \item \label{changetomtg} If $S$ is a local martingale then there exists a probability $Q$ equivalent to $P$ such that $S$ is a $(\mathbb F,Q)$ martingale.
    \item \label{lociffmart} $M_e(S) \neq \emptyset$ if and only if $M_{e, \mathrm{loc}}(S) = M_{e, \sigma}(S) \neq \emptyset$.
    \item If $S$ is an integrable local martingale then it is a martingale.
    \item \label{bddblowismart} If $S$ is a local martingale with $S_T \geq 0$ then it is a martingale. 
     \item \label{integrisloc}  Suppose $S$ is a local martingale. Let $H=(H_t)_{t \in \mathcal T}$ be a predictable process and let $(H \cdot S)=((H \cdot S)_t)_{t \in \mathcal T}$ be the stochastic integral:
\[
(H \cdot S)_t := \sum_{s= 1}^t H_s(S_s-S_{s-1}), \, \text{ for }t \geq 1; \quad 
(H \cdot S)_0=0.
\]
Then $(H \cdot S)$ is a local martingale.\\
If in addition we assume that $(H \cdot S)_T\ge -C$
for some constant $C>0$, then  $(H \cdot S)$ is a martingale, and in particular  $E_P[(H \cdot S)_t]=0$ for all $t\in\mathcal{T}$.
    
\end{enumerate}
\end{proposition}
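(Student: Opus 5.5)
The plan is to derive all six items from the classical discrete-time characterization of local martingales (Jacod--Shiryaev, \emph{Local martingales and the fundamental asset pricing theorems in the discrete-time case}). With $\mathcal F_0$ trivial, an adapted $S$ is a local martingale if and only if it is a \emph{generalized martingale}: for every $t\ge 1$,
\[
E[\abs{\Delta S_t}\mid\mathcal F_{t-1}]<\infty \ \text{ a.s.} \quad\text{and}\quad E[\Delta S_t\mid\mathcal F_{t-1}]=0,
\]
where $\Delta S_t:=S_t-S_{t-1}$ and the conditional expectation is understood in the generalized sense. I will take this characterization as the basic tool. It can also be proved directly by localizing with the stopping times $\tau_n:=\inf\{t: E[\abs{\Delta S_{t+1}}\mid\mathcal F_t]>n\}$, which are stopping times because the conditional expectations are $\mathcal F_t$-measurable.

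Item 6 (first part) comes first. Since $H_t$ is $\mathcal F_{t-1}$-measurable, it factors out of the generalized conditional expectation:
\[
E[\abs{H_t\Delta S_t}\mid\mathcal F_{t-1}]=\abs{H_t}\,E[\abs{\Delta S_t}\mid\mathcal F_{t-1}]<\infty, \qquad E[H_t\Delta S_t\mid\mathcal F_{t-1}]=H_tE[\Delta S_t\mid\mathcal F_{t-1}]=0 .
\]
Hence $H\cdot S$ is a generalized martingale, that is, a local martingale. Item 1 follows from this. If $\phi>0$ is predictable and $\phi\cdot S$ is a martingale, then $S-S_0=(1/\phi)\cdot(\phi\cdot S)$ is an integral of a predictable process against a martingale, hence a local martingale. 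Conversely, if $S$ is a local martingale, I take $\phi_t:=\bigl(1+E[\abs{\Delta S_t}\mid\mathcal F_{t-1}]\bigr)^{-1}$. Then $\phi\cdot S$ is a generalized martingale with $E[\abs{\phi_t\Delta S_t}]\le 1$, so its increments are integrable, and by item 4 it is a true martingale; thus $S$ is a sigma martingale.

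Item 4 is immediate: integrability of $S$ makes the generalized conditional expectations ordinary ones. Item 5 goes by backward induction. Suppose $S_t\ge 0$ and $S_t$ is integrable. Applying the generalized martingale property to nonnegative variables, using conditional monotone convergence on truncations, gives $S_{t-1}=E[S_t\mid\mathcal F_{t-1}]\ge 0$ and $E[S_{t-1}]=E[S_t]<\infty$. At $t=T$ one only has $S_T\ge 0$, but $E[S_T]=E[S_T\mid\mathcal F_0]$ is finite through the same chain, since $\mathcal F_0$ is trivial. So every $S_t$ is integrable and item 4 applies. The second part of item 6 then follows by applying item 5 to the local martingale $C+H\cdot S$, whose terminal value is nonnegative; in particular $E_P[(H\cdot S)_t]=0$.

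The main obstacle is item 2. Here I must build $Z>0$ with $E[Z]=1$ such that $S$ becomes an integrable $Q$-martingale. A density that is a product of $\mathcal F_{t-1}$-measurable damping factors does make each increment integrable. However, a factor that is measurable with respect to a later $\sigma$-algebra $\mathcal F_{s-1}$, $s>t$, may depend on $\Delta S_t$ and so destroy the conditional zero mean at step $t$. I would first try backward induction on $t=T,T-1,\dots,1$. At each step the density is multiplied by a positive $\mathcal F_{t-1}$-measurable variable $a_{t-1}$, chosen small enough that $E_{Q'}[\abs{\Delta S_s}]<\infty$ for all $s\ge t$. This is possible because $E_Q[\abs{\Delta S_s}\mid\mathcal F_{t-1}]<\infty$. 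Factors that are $\mathcal F_{t-1}$-measurable do not alter the conditional means of the increments $\Delta S_s$, $s\ge t$. The delicate point is to ensure the conditional mean of $\Delta S_{t-1}$ is still zero after the later steps. If a direct bookkeeping of this fails, I will fall back on invoking the Jacod--Shiryaev theorem, which gives exactly item 2. Item 3 then follows: $M_e(S)\subseteq M_{e,\mathrm{loc}}(S)=M_{e,\sigma}(S)$ by items 1 and 4, and item 2, applied under any $Q\in M_{e,\mathrm{loc}}(S)$, produces an element of $M_e(S)$.
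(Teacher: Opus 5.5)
Apart from item 2, your argument is correct but organized differently from the paper's. The paper proves the proposition almost entirely by citation:
\begin{itemize}
\item item 1 and the first half of item 6 come from Theorem 1 of Jacod--Shiryaev;
\item items 2 and 5 come from Kabanov (2008);
\item item 4 comes from a textbook reference;
\item item 3 follows from items 1--2;
\item the second half of item 6 is obtained exactly as you do, by applying item 5 to $C+H\cdot S$.
\end{itemize}
You instead take only the generalized-martingale characterization, whose elementary proof via your $\tau_n$ is fine, and derive items 1, 4, 5 and 6 from it. This is more self-contained. The explicit factor $\phi_t=(1+E[\abs{\Delta S_t}\mid\mathcal F_{t-1}])^{-1}$ for the sigma-martingale direction is clean.

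In item 5, state the order explicitly. Nonnegativity propagates downward with no integrability needed, because $E[S_t\mid\mathcal F_{t-1}]=S_{t-1}$ holds for the extended conditional expectation of the nonnegative $S_t$. Integrability then propagates upward from the constant $S_0$ by the tower property. As written, your induction hypothesis assumes integrability of $S_t$ at the step where it is not yet known.

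Item 2 is a genuine gap, and the backward induction you sketch fails for a structural reason, not a bookkeeping one. Once $a_{T-1},\dots,a_{t-1}$ are chosen, $\Delta S_{t-1}$ is reweighted, conditionally on $\mathcal F_{t-2}$, by $b_{t-1}:=a_{t-1}E[a_{T-1}\cdots a_t\mid\mathcal F_{t-1}]$. This is an essentially arbitrary positive $\mathcal F_{t-1}$-measurable variable, so in general $E[b_{t-1}\Delta S_{t-1}\mid\mathcal F_{t-2}]\neq 0$. All later factors are $\mathcal F_{t-2}$-measurable, so they leave the conditional law of $\Delta S_{t-1}$ given $\mathcal F_{t-2}$ unchanged and cannot repair this.

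The missing idea is a mean-preserving one-step lemma. Normalize $E[a_{t-1}\mid\mathcal F_{t-2}]=1$. Take $\xi=\Delta S_{t-1}$ with $E[\xi\mid\mathcal F_{t-2}]=0$, and an a.s.\ finite $\eta\ge 0$ collecting the conditional first moments of the later increments. You then need an $\mathcal F_{t-1}$-measurable $a_{t-1}>0$ with $E[a_{t-1}\xi\mid\mathcal F_{t-2}]=0$ and $E[a_{t-1}(\abs{\xi}+\eta)\mid\mathcal F_{t-2}]<\infty$.
\begin{itemize}
\item For $d=1$ this is elementary: weight $\{\xi>0\}$ and $\{\xi<0\}$ by different $\mathcal F_{t-2}$-measurable constants times $(1+\abs{\xi}+\eta)^{-1}$.
\item For $d>1$ it needs that $0$ lies in the relative interior of the conditional support of $\xi$, plus a measurable selection or conditional Esscher argument. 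That is Dalang--Morton--Willinger-type work, and it is precisely what Kabanov's theorem, cited by the paper, supplies.
\end{itemize}

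So item 2, and hence item 3, rests entirely on your fallback. That fallback must be a result that really contains item 2, such as Kabanov's theorem or the finite-horizon FTAP of Jacod--Shiryaev. The characterization theorem you use elsewhere does not give it.

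A short fix uses tools already in the paper:
\begin{enumerate}
\item By items 5 and 6, $(H\cdot S)_T\ge 0$ forces $H\cdot S$ to be a martingale with zero mean. Hence $(H\cdot S)_T=0$, and $S$ satisfies $\mathbf{NA}$.
\item $\mathbf{NA}$ is invariant under equivalent changes of measure. Pass to $P'\sim P$ with $\rn{P'}{P}$ proportional to $e^{-\sum_t\abs{S_t}}$; this meets the integrability hypothesis of the Dalang--Morton--Willinger theorem as stated in the paper.
\item That theorem then yields $Q\sim P$ under which $S$ is a martingale.
\end{enumerate}
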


\begin{proof}

\begin{enumerate}
\item Follows from Theorem 1 in\cite{JaShi98}. Notice that Theorem 1 is stronger, as it shows the equivalence between local martingale and martingale transform, where the predictable process $\gamma$ in equation (8) in the martingale transform need not to be strictly positive. See also the three lines  after the proof of the cited theorem, which states that the same results holds in the finite discrete time setting.
\item Is Theorem 1 in \cite{Ka08}.

\item Follows from Items 1 and 2.

\item Is Proposition 1.3.14 in \cite{KaSH91}

\item See page 295 in \cite{Ka08}.
\item The first statement follows from Theorem 1 in \cite{JaShi98}. For the last statement, we just showed that the stochastic integral is itself a local martingale.
By Proposition \ref{propMsigma} Item \ref{bddblowismart} the process $X_t=(H \cdot S)_t+C$ is a local martingale satisfying $X_0 \in \R$ and $X_T \geq 0$, thus it is a martingale, and so is $(H \cdot S)$.

\end{enumerate}
\end{proof}

\section{Appendix: Orlicz spaces}
\label{approlicz}
Orlicz spaces provide a natural extension of the classical $L^p(\Omega,\mathcal{F},\Pp)$ spaces and
offer a flexible alternative to the restrictive $L^\infty(\Omega,\mathcal{F},\Pp)$ setting.
They allow for unbounded payoffs while retaining good compactness and duality
properties. In particular, when one works on the Orlicz Heart, the dual space
does not contain singular elements, which greatly simplifies dual
representations in utility maximization and risk-measure theory.
We refer to \cite{Edgar} or \cite{RaoRen} for further theoretical background.

Let $u:\mathbb{R}\to\mathbb{R}$ be a concave and nondecreasing utility function,
satisfying the standard left-tail growth condition
\begin{equation}\label{eq:orlicz_asymptotic}
\lim_{x\to -\infty}\frac{u(x)}{x}=+\infty.
\end{equation}
We associate to $u$ the function
\begin{equation}\label{eq:uhat_def}
\widehat{u}(x):=-u(-|x|)+u(0), \qquad x\in\mathbb{R}.
\end{equation}
The function $\widehat{u}$ is a strict Young function: it is finite-valued,
even, convex, satisfies $\widehat{u}(0)=0$, and grows superlinearly at infinity,
that is,
\[
\lim_{x\to+\infty}\frac{\widehat{u}(x)}{x}=+\infty.
\]
The Orlicz space generated by $\widehat{u}$ is defined as
\begin{equation}\label{eq:orlicz_space}
L^{\widehat{u}}(\Omega,\mathcal{F},\Pp)
:=\Bigl\{X\in L^0(\Omega,\mathcal{F},\Pp)\;:\;
E_{\Pp}[\widehat{u}(\alpha \abs{X})]<\infty
\text{ for some }\alpha>0\Bigr\}.
\end{equation}
The corresponding Orlicz Heart (also known as the Morse-- space) is
\begin{equation}\label{eq:orlicz_heart}
M^{\widehat{u}}(\Omega,\mathcal{F},\Pp)
:=\Bigl\{X\in L^0(\Omega,\mathcal{F},\Pp)\;:\;
E_{\Pp}[\widehat{u}(\alpha \abs{X})]<\infty
\text{ for every }\alpha>0\Bigr\}.
\end{equation}
Endowed with the Luxemburg norm
\[
\|X\|_{\widehat{u}}
:=\inf\Bigl\{k>0:\ E_{\Pp}\bigl[\widehat{u}(|X|/k)\bigr]\le 1\Bigr\},
\]
both $L^{\widehat{u}}(\Omega,\mathcal{F},\Pp)$ and $M^{\widehat{u}}(\Omega,\mathcal{F},\Pp)$ are Banach spaces. Moreover, the
standard inclusions hold:
\[
L^\infty(\Omega,\mathcal{F},\Pp)
\subseteq M^{\widehat{u}}(\Omega,\mathcal{F},\Pp)
\subseteq L^{\widehat{u}}(\Omega,\mathcal{F},\Pp)
\subseteq L^1(\Omega,\mathcal{F},\Pp).
\]
Let $\widehat{u}^\ast$ denote the convex conjugate of $\widehat{u}$,
\[
\widehat{u}^\ast(y):=\sup_{x\in\mathbb{R}}\{xy-\widehat{u}(x)\},
\qquad y\in\mathbb{R}.
\]
Fenchel's inequality yields
\[
xy\le \widehat{u}(x)+\widehat{u}^\ast(y), \qquad x,y\in\mathbb{R}.
\]
As a consequence, if $Q\ll\Pp$ and $\rn{Q}{\Pp}\in L^{\widehat{u}^\ast}(\Omega,\mathcal{F},\Pp)$,
then every $X\in L^{\widehat{u}}(\Omega,\mathcal{F},\Pp)$ is $Q$-integrable, that is,
$L^{\widehat{u}}(\Omega,\mathcal{F},\Pp)\subset L^1(\Omega,\mathcal{F},Q)$. When working on the Orlicz Heart
$M^{\widehat{u}}(\Omega,\mathcal{F},\Pp)$, this duality is particularly well behaved and underlies
many classical results in utility maximization.
By construction of $\widehat{u}$, the condition
\[
E_{\Pp}[\widehat{u}(X)]<\infty
\]
implies the usual admissibility requirement
\[
E_{\Pp}[u(X)]>-\infty.
\]

{
\bibliographystyle{abbrv}  
\bibliography{BibAll}
}

\end{document}